\newcommand\hexsize{16pt} 
\numberwithin{equation}{section}
\newcommand{\rr}{\m{R}}
\newcommand{\cc}{\m{C}}
\newcommand{\1}{\mathbbm{1}}
\newcommand{\la}{\lambda}
\newcommand{\ep}{\varepsilon}
\newcommand{\mr}{\mathring}
\newcommand{\set}[1]{\left\{ #1 \right\} }
\newcommand{\ip}[1]{\langle #1 \rangle}
\newcommand{\norm}[1]{\left \Vert #1 \right\Vert }
\newcommand{\m}[1]{\mathbb{#1}}
\newcommand{\mc}[1]{\mathcal{#1}}
\newcommand{\mf}[1]{\mathfrak{#1}}
\newtheorem{theorem}{Theorem}[section]
\newtheorem{lemma}[theorem]{Lemma}
\newtheorem{proposition}[theorem]{Proposition}
\newtheorem{corollary}[theorem]{Corollary}
\newtheorem{definition}[theorem]{Definition}
\newtheorem{assumption}[theorem]{Assumption}
\newcommand{\dom}{\mathrm{dom}}
\newcommand{\bR}{{\mathbb R}}
\newcommand{\bC}{{\mathbb C}}
\newcommand{\bN}{{\mathbb N}}
\newcommand{\bZ}{{\mathbb Z}}
\newcommand{\cA}{{\mathcal A}}
\newcommand{\cB}{{\mathcal B}}
\newcommand{\cG}{{\mathcal G}}
\newcommand{\cH}{{\mathcal H}}
\newcommand{\cP}{{\mathcal P}}
\newcommand{\cS}{{\mathcal S}}
\newcommand{\caA}{{\mathcal A}}
\newcommand{\caB}{{\mathcal B}}
\newcommand{\caC}{{\mathcal C}}
\newcommand{\caG}{{\mathcal G}}
\newcommand{\caH}{{\mathcal H}}
\newcommand{\caI}{{\mathcal I}}
\newcommand{\caP}{{\mathcal P}}
\newcommand{\caS}{{\mathcal S}}
\newcommand{\caT}{{\mathcal T}}
\newcommand{\caV}{{\mathcal V}}
\newcommand{\caW}{{\mathcal W}}
\newcommand{\braket}[2]{\langle {#1}, {#2}\rangle}
\newcommand{\A}{\mathcal{A}}
\newcommand{\supp}[1]{\textnormal{supp}(#1)}
\newcommand{\HH}[1]{\mc{H}^{(#1)}}
\newcommand{\su}{\mf{su}(2)}
\newcommand{\BOmega}{\mathbf{\Omega}}
\newcommand{\Ent}[2]{D_{\infty}\left( #1 \middle\| #2 \right)}
\newcommand{\disjoint}{\centernot\mid}
\title{Stability of the spectral gap and ground state indistinguishability for a decorated AKLT model}
\author{Angelo Lucia, Alvin Moon and Amanda Young}
\begin{document}
\maketitle

\begin{abstract}
  We use cluster expansion methods to establish local the indistiguishability of the finite-volume ground states for the AKLT model on decorated hexagonal lattices with decoration parameter at least 5. Our estimates imply that the model satisfies local topological quantum order (LTQO), and so the spectral gap above the ground state is stable against local perturbations. %
\end{abstract}

\section{Introduction}

Quantum phases of matter are equivalence classes of systems which share similar physical properties.
A central question in the study of quantum many-body systems is to determine the phase to which a given models belongs, and one of the fundamental quantities in this classification is the spectral gap above the ground state energy. For example, Haldane predicted there would be distinct gapped and critical phases for antiferromagnetic spin chains \cite{Haldane1983a,Haldane1983}. Furthermore, the existence of a bulk gap in the presence of gapless edge excitations is the defining characteristic of topological insulators. Under rather general conditions, a non-vanishing gap also implies exponential clustering of the ground state \cite{HK, NS2}. Much of the recent focus has been on studying topological phases of matter, including symmetry-protected phases \cite{GW, PhysRevB.83.035107, Chen2013, PTBO, PTBO2, Sopenko2021, Maekawa2022, Tasaki2021}. A key element for defining topological indices \cite{ogata2019mathbb, Ogata2019, ogata2021h3gmathbb} is the split property, which for one-dimensional systems is known to hold if the system is short-ranged and gapped \cite{M}. Thus, the classification of gapped ground state phases is of particular interest.

Two models are said to belong to the same gapped phase if the interactions can be smoothly deformed into one another without closing the spectral gap. Despite its importance, it is generically undecidable (in an algorithmic sense) to determine rigorously whether or not a one- or two-dimensional translation-invariant, frustration-free, nearest-neighbor quantum spin model has a non-vanishing gap\cite{Cubitt2015a, Cubitt2015, Bausch2018}. Thus, a natural approach is to analyze the properties of a known gapped model. In order for the spectral gap to be physically relevant, it needs to be robust against noise; that is, small perturbations of the model should still belong to the same phase. If this is not the case, then it is unlikely that a phase representing this model will be observed experimentally, and so the stability of the spectral gap is also fundamental to the study of quantum phases.

The \emph{quasi-adiabatic continuation} (also called the spectral flow) introduced by Hastings and Wen in \cite{Hastings2005} has proved to be an invaluable tool for exploring gapped ground state phases \cite{BMNS, BO,BDRF1, Ogata2019,ogata2021h3gmathbb,Moon2020, Moo, MN, nachtergaele2021stability}. In \cite{BH, BHM}, Bravyi, Hastings and Michalakis (BHM) pioneered a general strategy that utilizes this automorphism to prove spectral gap stability for quantum spin models with commuting interactions. This was extended to frustration-free interactions by Michalakis and Zwolak in \cite{MZ}, and further developed in a number of directions, including to systems with discrete symmetric breaking and topological insulators, by Nachtergaele, Sims and Young in \cite{NSY, NSY:2022, NSY:2021}. The BHM strategy shows that the spectral gap is stable against sufficiently local perturbations if the ground states satisfy a property known as \textit{local topological quantum order} (LTQO). Roughly speaking, LTQO holds if the finite-volume ground states cannot be distinguished by any local operator acting in the bulk. For this reason, LTQO is also referred to as \emph{local indistinguishability} of the ground states. The stability of the decorated AKLT model in this work will be proved using the BHM strategy, and in particular, by applying the result from \cite{NSY:2021}.

We comment that an alternative approach to proving spectral gap stability based on Lie-Schwinger diagonalization was developed in \cite{FP,DelVecchio2021,Delvecchio:2022}. This technique applies to unperturbed models with product ground states of non-interacting systems, which trivially satisfy LTQO. A strength of this approach is that it also apply to models with unbounded terms as well as non-self adjoint Hamiltonians. Methods for gap stability of quasi-free lattice fermion models have also been investigated \cite{Koma2020, Hastings2019, DeRoeck2019}.

Beyond spectral gap stability, local ground state indistinguishability has been used to study the stability of other physical properties, including super-selection sectors \cite{Cha2020}, vanishing Hall conductance \cite{Zhang2022}, and the stability of invertible states \cite{bachmann2021stability}. %
Related to this, it was shown in \cite{henheik2021local, bachmann2021stability} that local perturbations of a certain class of weakly-interacting, gapped systems at most perturb ground states locally, even if the perturbation closes the gap. A variation of LTQO for open systems was also used to prove stability of dissipative systems with unique fixed points \cite{Cubitt2015b}.  As suggested by these results, proving ground state indistinguishability may be of independent interest.

The ground states of the $SU(2)$-invariant antiferromagnetic models introduced by Affleck-Kennedy-Lieb-Tasaki (AKLT) \cite{AKLT87,AKLT88} have served as an important case study for many questions in quantum many-body physics. Recently, significant progress was made for the long-standing spectral gap conjecture of the AKLT model on the hexagonal lattice. Decorated versions of this model defined by replacing each edge of the hexagonal lattice with a spin chain of length $d$ (see Figure~\ref{fig:DecoratedLattice}) were considered in \cite{BRLLNY}. It was proved there that these models have a uniform gap for decoration parameters $d\geq 3$ and finite volumes suitable for periodic boundary conditions. Subsequent results based on combining the analytical methods from \cite{BRLLNY} with a Lanczos numerical method established additional gap results for two-dimensional AKLT models \cite{Pomata_2020,Pomata2019}, including the (undecorated) hexagonal model. An independent result based on using DMRG to verify a finite-size criterion simultaneously appeared  \cite{Lemm_2020}. These constitute some of the few examples of 2D models with non-commuting interactions for which rigorous gap estimates have been obtained.

In this work, we take the next step and consider the stability of the spectral gap for the decorated AKLT models on the hexagonal lattice. We show that these models have indistinguishable ground states that satisfy the LTQO condition for decoration parameters $d\geq 5$, and therefore belong to a stable gapped phase. This answers positively one of the open questions raised in \cite{BRLLNY}. To the best of our knowledge, this is the first rigorous proof of a stable gapped phase for a non-commuting two-dimensional interaction. 

We prove stability of the spectral gap in the infinite volume setting by verifying the conditions of \cite[Theorem 2.8]{NSY:2021} under mild modifications to account for the specific geometry of the hexagonal lattice. The LTQO condition is an immediate consequence of our indistinguishability result, which is proved using a uniformly convergent cluster expansion of the ground state expectations that is given in terms of a hard core gas of loops and walks. We closely follow the cluster expansion strategy  used to study the undecorated hexagonal AKLT model in \cite{Kennedy1988}. The novelty here is new estimates on the rate of convergence of the finite volume AKLT ground states to the frustration-free bulk state. These estimates make explicit how the convergence depends on size of the support and operator norm of the local observable considered, which is vital for establishing LTQO.

Given that the model has a spectral gap for all values of the decoration, including the case $d=0$, it remains an open problem to show the LTQO condition when $d<5$. While (significantly) tighter counting arguments and estimates on the cluster expansion could in principle prove the result for lower values of $d$, we suspect that the strategy used here would not extend all the way to $d=0$, and so we anticipate (just as in the case of the spectral gap proof) that a varied approach would be needed in that regime.  A different method using cluster expansions for a more restrictive class of perturbations was used to prove stability for the one-dimensional AKLT model in \cite{Y}. This result takes advantage of the fixed ground state degeneracy of the one-dimensional model, though, which does not hold for the multidimensional models, and so it is unclear if this approach can be adapted to the present setting.

This paper is organized as follows. In Section~\ref{sec:main_results}, we introduce the decorated lattice models and summarize our main result. In Section~\ref{sec:stability}, we present the modified stability conditions of~\cite{NSY:2021} needed to prove stability of the gap, and show that they hold under the assumption that the ground states satisfy the indistinguishability result Theorem~\ref{thm:indistinguishable}. In Section~\ref{sec:gss}, following the method of \cite{Kennedy1988} we represent the ground states on the decorated lattices in terms of a hard-core gas of loops and walks. Finally, in Section~\ref{sec:main} we use a cluster expansion argument to estimate the convergence of an arbitrary finite volume ground state to the unique infinite volume frustration-free state, proving the indistinguishability result. Technical lemmas and counting arguments for the loop models are contained in the Appendix~\ref{sec:appendix}.

\section{Setup and main results}\label{sec:main_results}

\begin{figure}
	\begin{tikzpicture}
		\newcommand{\hexcoord}[2]
		{[shift=(0:#1),shift=(60:#1),shift=(0:#2),shift=(-60:#2)]}
		\foreach \x in {0,...,1}
		\foreach \y in {0,1}{
			\draw\hexcoord{\x}{\y}
			(0:1)--(60:1)--(120:1)--(180:1)--(-120:1)--(-60:1)--cycle;
			\foreach \n in {0, 60, 120, 180, -120, -60}{
				\draw\hexcoord{\x}{\y} (\n:1)--(\n:1.3);
				\draw[fill=black]\hexcoord{\x}{\y} (\n:1) circle (2.5pt);
				\draw[fill=black]\hexcoord{\x}{\y} (\n+20:{sqrt((2*cos(\n)+cos(\n+60))^2/9+(2*sin(\n)+sin(\n+60))^2/9)}) circle (1.2pt);
				\draw[fill=black]\hexcoord{\x}{\y} (\n+40:{sqrt((cos(\n)+2*cos(\n+60))^2/9+(sin(\n)+2*sin(\n+60))^2/9)}) circle (1.2pt);
			}
		}
	\end{tikzpicture}
	\caption{The decorated hexagonal lattice for $d=2$.}	
	\label{fig:DecoratedLattice}
\end{figure}
The AKLT model consider in this work is defined on the $d\in\bN_0$ decorated hexagonal lattice $\Gamma^{(d)} := (\caV^{(d)},\caB^{(d)})$, which is the graph resulting from appending $d$ additional sites to each edge of the hexagonal lattice, see Figure~\ref{fig:DecoratedLattice}. Here, $\caV^{(d)}$ denotes the set of vertices (or sites) of the decorated lattice, and $\caB^{(d)}$ is the set of bonds (or edges). The hexagonal lattice, $\Gamma^{(0)}$, will also be called the undecorated lattice. Throughout this work, we will consider finite subsets of the decorated lattice, and it will be convenient to consider the subvolumes as graphs. As such, we begin by introducing the graph notation that will be used throughout the work, and then review the notation for the model of interest and state the main results.

For any subgraph, $\Lambda = (\caV_\Lambda,\caB_\Lambda)$ of $\Gamma^{(d)}$ with vertex set $\caV_\Lambda$ and edge set $\caB_\Lambda$, we denote by $|\Lambda|$ the number of vertices $|\caV_\Lambda|$, and when this quantity is finite we will call $\Lambda$ a finite volume. With a minor abuse of notation we simply write
$v\in \Lambda$ and $(v,w)\in \Lambda$ to denote a vertex $v$ and edge $(v,w)$ of $\Lambda$, respectively. The degree of a vertex $v$ in $\Lambda$ will be denoted by  $\deg_{\Lambda}(v)$, and we simply write $\deg(v)$ for $\deg_{\Gamma^{(d)}}(v)$. Note that $\deg_{\Lambda}$ necessarily takes values in $\{0,1,2,3\}$ while $\deg$ takes values in $\{2,3\}$.
We define the \textit{graph union} $\Lambda_{1}\cup \Lambda_{2}$ of two subgraphs $\Lambda_{1}$, $\Lambda_{2}$ as the graph with vertices $\caV_{\Lambda_{1}} \cup \caV_{\Lambda_{2}}$ and edges $\caB_{\Lambda_{1}}\cup\caB_{\Lambda_{2}}$.
The set of boundary sites $\partial \Lambda$ is the collections of all sites $v\in \Lambda$ belonging to an edge $(v,w)\in\caB^{(d)}\setminus \caB_\Lambda$ that leaves $\Lambda$, and $\mr{\Lambda}$
denotes the subgraph of $\Lambda$ induced by the interior sites $\caV_{\Lambda}\setminus\partial \Lambda$, namely the graph with vertex set $\caV_{\Lambda}\setminus \partial \Lambda$ and containing only the edges of $\Lambda$ which are not incident to vertices in $\partial \Lambda$.

\begin{figure}
	\begin{center}
		\begin{tikzpicture}
			\def\n{.55};
			\newcommand{\hexcoord}[2]
			{[shift=(60:#1),shift=(120:#1),shift=(0:#2),shift=(-60:#2),shift=(0:#2),shift=(60:#2)]}
			\foreach \x in {3,...,7}
			\foreach \y in {1,...,6}{
				\draw[color=black!25]\hexcoord{\x*\n}{\y*\n}
				(0:\n)--(60:\n)--(120:\n)--(180:\n)--(-120:\n)--(-60:\n)--cycle;
				\draw[color=black!25,shift=(-60:\n), shift=(0:\n)]\hexcoord{\x*\n}{\y*\n}
				(0:\n)--(60:\n)--(120:\n)--(180:\n)--(-120:\n)--(-60:\n)--cycle;
			}
			\foreach \x in {3,...,8}{
				\draw[color=black!25,shift=(-60:\n), shift=(0:\n)]\hexcoord{\x*\n}{0}
				(0:\n)--(60:\n)--(120:\n)--(180:\n)--(-120:\n)--(-60:\n)--cycle;}
			\foreach \y in {1,...,6}{\draw[color=black!25,shift=(-60:\n), shift=(0:\n)]\hexcoord{8*\n}{\y*\n}
				(0:\n)--(60:\n)--(120:\n)--(180:\n)--(-120:\n)--(-60:\n)--cycle;
			}
			
			\foreach \k in {0,3}
			\foreach \x in {0,60,120,180,240,300}
			\foreach \y in {0,60,120,180,240,300}
			{\draw[shift=(\x+120:\n), shift=(\x+60:\n), very thick, color=black]\hexcoord{5*\n}{5*\n-\k*\n}
				(\x+\y:\n)--(\x+\y+60:\n);
				\draw[shift=(\x+120:\n), shift=(\x+60:\n), fill=black]\hexcoord{5*\n}{5*\n-\k*\n} ({\n*(cos(\x+\y)+cos(\x+\y+60))/2},{\n*(sin(\x+\y)+sin(\x+\y+60))/2}) circle (1.1pt);
				\draw[shift=(\x+120:\n), shift=(\x+60:\n), fill=black]\hexcoord{5*\n}{5*\n-\k*\n} ({\n*(cos(\x+\y)},{\n*(sin(\x+\y)}) circle (1.6pt);
			}
			
			\foreach \x in {0,60,120,180,240,300}
			\foreach \y in {60,120}{
				\draw[shift=(\x+120:\n), shift=(\x+60:\n), color=red, fill=red]\hexcoord{5*\n}{2*\n} ({\n*(cos(\x+\y)},{\n*(sin(\x+\y)}) circle (1.6pt);}
			
			\foreach \x in {0,60,120,180,240,300}
			\foreach \y in {60,120}{
				\draw[shift=(\x+120:\n), shift=(\x+60:\n), very thick]\hexcoord{5*\n}{5*\n} (\x+\y:\n)--(\x+\y:1.5*\n);
				\draw[shift=(\x+120:\n), shift=(\x+60:\n), fill=black]\hexcoord{5*\n}{5*\n} (\x+\y:1.5*\n) circle (1.1pt);}
			
			\draw\hexcoord{5*\n}{2*\n} (60:\n)--(120:\n) node[midway, below, yshift=-2pt] {$\Lambda_2^{(1)}$};
			\draw\hexcoord{5*\n}{5*\n} (60:\n)--(120:\n) node[midway, below, yshift=-2pt] {$\Gamma_2^{(1)}$};
		\end{tikzpicture}
	\end{center}
	\caption{ Illustration of $\Lambda_2^{(1)}$ and $\Gamma_2^{(1)}$. The latter is used to verify Assumption~\ref{assump:localgap} for the spectral gap stability argument. The red vertices comprise $\partial \Lambda_2^{(1)}$, and $\mr{\Lambda}_2^{(1)}$ corresponds to the black vertices and edges between them. We suppress the dependence of $\tilde{x}\in\tilde{\Gamma}_0$ for simplicity. }\label{fig:hex_ring}
\end{figure}

To illustrate this notation, let us introduce a family of finite volumes that will be frequently considered in this work. Denote by $\tilde{\Gamma}^{(0)}$ the dual lattice of $\Gamma^{(0)}$ (i.e. the triangular lattice which has a site at the center of every hexagon from $\Gamma^{(0)}$), and let $\tilde{D}$ be the graph distance on $\tilde{\Gamma}^{(0)}$. Let $\Lambda_{1}^{(d)}(\tilde{x})\subset\Gamma^{(d)}$ denote the $d$-decorated hexagon centered at any $\tilde{x}\in \tilde{\Gamma}^{(0)}$, and define
\begin{equation}\label{eq:Lambda_n}
	\Lambda_n^{(d)}(\tilde{x}) : = \bigcup_{\tilde{y}\in b_{n-1}(\tilde{x})} \Lambda_{1}^{(d)}(\tilde{y})\subseteq \Gamma^{(d)}
\end{equation}
for any $n\in\bN$ where $ b_{n}(\tilde{x})=\{\tilde{y}\in\tilde{\Gamma}^{(0)} : \tilde{D}(\tilde{x},\tilde{y})\leq n\}$. Visually, these volumes are formed by the union of $n-1$ concentric hexagon rings around $\Lambda_{1}^{(d)}(\tilde{x})$, see Figure~\ref{fig:hex_ring}. By construction, every vertex $v \in \partial \Lambda_n^{(d)}(\tilde x)$ is a vertex of the undecorated lattice that has degree two in $\Lambda_n^{(d)}(\tilde x)$, meaning that two of the three edges incident to $v$ belong to $\Lambda_n^{(d)}(\tilde x)$. Therefore, for any such $v$ there is a unique edge incident to $v$ not belonging to $\Lambda_n^{(d)}(\tilde x)$. This observation will be used in many arguments throughout this work. A simple counting argument shows that $|\partial \Lambda_n^{(d)}(\tilde x)|=6n$, see Lemma~\ref{lem:boundary-length}.

In their seminal work, Affleck, Kennedy, Lieb and Tasaki introduced their isotropic, antiferrmagnetic spin-$1$ chain and showed it satisfied the three properties of the Haldane phase. They also introduced several generalizations of their model, including the one considered in this work. This is defined by taking a $2s_x+1$-dimensional onsite Hilbert space  $\mathfrak{H}_x$ at every vertex $x\in \Gamma^{(d)}$ where $s_x := \deg(x)/2$. Then, for any finite volume $\Lambda \subseteq \Gamma^{(d)}$, the Hilbert space of states is given by $\mathfrak{H}_{\Lambda} :=\bigotimes_{x\in\Lambda}\mathfrak{H}_x$ and the algebra of observables is $\caA_{\Lambda}:= B(\mathfrak{H}_{\Lambda})$. The spin-$s_x$ irreducible representation of $\mathfrak{su}(2)$ acts on $\mathfrak{H}_x$, and for any finite volume $\Lambda\subseteq \Gamma^{(d)}$ the \emph{$d$-decorated AKLT Hamiltonian} acting on $\mathfrak{H}_{\Lambda}$ is
 \begin{equation}\label{eq:local_ham}
 H_{\Lambda} = \sum_{(x,y)\in \Lambda} P_{(x,y)}
 \end{equation}
 where $P_{(x,y)}\in\caA_{\{x,y\}}$ is the orthogonal projection onto the subspace of maximal spin $s_x+s_y$ from $\mathfrak{H}_x\otimes\mathfrak{H}_y$. Note that for the decorated hexagonal lattice, $s_x+s_y \in \{2,5/2\}$ for all edges $(x,y)\in\Gamma^{(d)}$. In Sections~\ref{sec:gss}-\ref{sec:main} we will use the Weyl representation of $\mathfrak{su}(2)$ to explicitly realize this model. 
 
 The Hamiltonian in \eqref{eq:local_ham} implicitly uses $\cA_{\Lambda_1}\subseteq \caA_{\Lambda_2}$ for any $\Lambda_1\subseteq \Lambda_2$ where one identifies  $ \cA_{\Lambda_1}\ni A \mapsto A\otimes\1_{\Lambda_2\setminus\Lambda_1}\in \cA_{\Lambda_2}$. As such, the support of any $A\in\cA_\Lambda$ is defined to be the smallest set $X$ such that $A$ acts as the identity on $\Lambda\setminus X$. Moreover, the algebra of local observables $\caA_{\Gamma^{(d)}}^{\rm loc} := \bigcup_{|\Lambda|<\infty} \cA_{\Lambda}$ is well-defined via the inductive limit induced by this identification. The $C^{*}$-algebra of quasi-local observables is then defined as the norm closure
 \[\caA_{\Gamma^{(d)}}:= \overline{\caA_{\Gamma^{(d)}}^{\rm loc}}^{\|\cdot\|}.\] We note that while the algebras $\caA_{\Gamma^{(d_1)}} \cong \caA_{\Gamma^{(d_2)}}$ are isomorphic when $d_1\geq d_2>0$ \cite{Glimm1960}, since we will often compare operators and states associated with the decorated and undecorated models, the decoration will be kept in the notation for clarity.
 
 In recent years, a number of general results have appeared for establishing spectral gap stability for ground states of quantum lattice models associated to finite-range, frustration-free gapped models. In addition to being finite-range, the decorated AKLT model on the decorated hexagonal lattice is both frustration-free and uniformly gapped, and so it is a natural candidate for applying these results. Frustration-freeness is the property that the ground states of any local Hamiltonian simultaneously minimize the energy of every interaction term. In the case of the decorated AKLT model, the ground state space is given by the (nontrivial) kernel of the Hamiltonian, and so the pure ground states are given by linear functionals $\varphi:\mathcal{A}_\Lambda\to\bC$ of the form
 \[
 \varphi(A) = \frac{\ip{\psi, A\psi}}{\|\psi\|^2}, \qquad 0\neq \psi \in \ker(H_{\Lambda}),
 \]
where $A\in \caA_{\Lambda}$ is any bounded linear operator. As the interaction terms are non-negative, these states necessarily satisfy $\varphi(P_{(x,y)}) =0$ for all edges $(x,y)\in\Lambda$.

The objects of interest for proving spectral gap stability are the infinite volume, frustration-free ground states. In the case of a model with a frustration-free interaction, a state on the quasi-local algebra is called \emph{frustration-free} if the expectation of any interaction term is zero. As is common for AKLT models on graphs with sufficiently small degree (and will be proved for $d\geq 5$ in Theorem~\ref{thm:indistinguishable}), there is a unique frustration-free state $\omega^{(d)}:\caA_{\Gamma^{(d)}} \to \bC$ for each $d$-decorated hexagonal model, i.e.
\begin{equation} \label{ffstate}
\omega^{(d)}(P_{(x,y)}) = 0 \qquad \forall\;(x,y)\in \Gamma^{(d)}.
\end{equation}

An elementary calculation shows that a state $\omega:\caA_{\Gamma^{(d)}}\to\bC$ is frustration-free if and only if it is the weak-* limit of finite-volume ground states, that is, if and only if there is an increasing and absorbing sequence of finite volumes $\Lambda_n\subseteq \Lambda_{n+1}$ so that $\bigcup_{n\geq 1}\Lambda_n = \Gamma^{(d)}$, and associated ground states $\varphi_n:\caA_{\Lambda_n} \to \bC$ with
\[
\omega(A) = \lim_{n\to\infty}\varphi_n(A), \quad \forall\, A\in\caA_{\Gamma^{(d)}}^{\rm loc}.
\]
( We note that the fact that any such limit is a frustration-free is trivial, while the reverse implication can be proven by showing that any frustration-free ground state is the weak-* limit of the finite volume ground states obtained by restricting itself to the local algebras associated to an increasing and absorbing sequence).

As mentioned in the introduction, we need to show that the AKLT model satisfies LTQO in order to prove spectral gap stability.  This will be a consequence of showing that the convergence of any sequence of finite volume ground states to the frustration-free ground state is sufficiently fast. This is the content of the indistinguishability result, Theorem~\ref{thm:indistinguishable} below. This result is stated with respect to the sequence of finite volumes $\Lambda_n^{(d)}:=\Lambda_n^{(d)}(\tilde{0})$ associated to some fixed point $\tilde{0}\in \tilde{\Gamma}^{(0)}$.

\begin{theorem}[Ground State Indistinguishability]\label{thm:indistinguishable} For the decorated AKLT model with $d\geq 5$, there is a frustration-free state $\omega^{(d)}:\caA_{\Gamma^{(d)}}\to\bC$ so that for any normalized $\psi_n\in \ker(H_{\Lambda_n^{(d)}})$ and observable $A\in\cA_{\mr{\Lambda}_k^{(d)}}$ with $1\leq k < n$,
\begin{equation}\label{eq:decay_bound}
|\braket{\psi_n}{A\psi_n}-\omega^{(d)}(A)| \leq 2F_\alpha(n,k)e^{F_\alpha(n,k)}\norm{A}
\end{equation}
where $F_\alpha(n,k)=102ke^{-2\alpha(n-k)}$ and, with respect to $f(x) = \frac{x+1-\sqrt{x^2+1}}{x}$, 
\begin{equation}
\alpha :=d\ln(3)-\ln\left(2\sqrt[5]{9}/f\left(4e\sqrt[5]{9}\right)\right)-4/e-.03.
\end{equation}
\end{theorem}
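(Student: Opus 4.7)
The plan is to adapt the cluster expansion strategy of Kennedy--Tasaki, which was developed for the undecorated hexagonal AKLT model, to the decorated setting. Using the Weyl representation of $\mathfrak{su}(2)$, the projector $P_{(x,y)}$ onto maximal spin factors through a singlet/symmetrizer combination, so every $\psi_n \in \ker(H_{\Lambda_n^{(d)}})$ admits a VBS-type representation. Tracing against a local observable $A \in \caA_{\mr{\Lambda}_k^{(d)}}$ then re-expresses the expectation $\braket{\psi_n}{A\psi_n}$ as a ratio $Z_n(A)/Z_n$, where $Z_n$ is the partition function of a hard-core gas of self-avoiding loops and walks on $\Lambda_n^{(d)}$. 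The decoration enters only through the weight of each polymer: traversing one original hexagonal edge costs $d+1$ decorated sub-edges, each contributing a suppression factor of order $3^{-d}$. This is the mechanism that makes the expansion convergent once $d$ is large enough.

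Next, I would run the standard Koteck\'y--Preiss cluster expansion simultaneously for $Z_n$ and $Z_n(A)$. Convergence reduces to a single criterion balancing the per-edge weight against the entropy of polymers of a given length. The number of loops and walks of combinatorial length $\ell$ rooted at a given edge grows at most like $(4e\sqrt[5]{9})^\ell$ on the decorated hexagonal lattice --- this is the counting that Appendix~\ref{sec:appendix} is advertised to handle, and it explains the appearance of $4e\sqrt[5]{9}$ and the function $f(x)=(x+1-\sqrt{x^2+1})/x$ in the definition of $\alpha$. Once the expansion converges, $\log Z_n(A)-\log Z_n$ and $\log Z_n - \log Z_n(\1)$ are absolutely convergent sums over clusters of compatible polymers weighted by Ursell functions.

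From the cluster expansion I would construct the infinite volume state $\omega^{(d)}$ by taking the thermodynamic limit; uniqueness follows because the limiting expansion does not depend on boundary data. The difference $\braket{\psi_n}{A\psi_n}-\omega^{(d)}(A)$ then equals the contribution of clusters that touch the support of $A$ \emph{and} depend on the boundary $\partial\Lambda_n^{(d)}$, i.e.\ contain a polymer reaching from $\mr{\Lambda}_k^{(d)}$ to $\partial \Lambda_n^{(d)}$. Any such cluster has combinatorial length at least $2(n-k)$ on the original hexagonal lattice (the factor $2$ coming from the fact that walks/loops must both leave and return), producing the decay $e^{-2\alpha(n-k)}$. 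The prefactor $102k$ comes from enumerating starting points: it is the number of ``channels'' through which a boundary-touching cluster can enter $\mr{\Lambda}_k^{(d)}$, which by Lemma~\ref{lem:boundary-length} scales linearly in $k$ (not quadratically, since only the outer layer of $\mr\Lambda_k^{(d)}$ matters for the boundary-coupling term). Finally, the outer wrapper $2F_\alpha(n,k)e^{F_\alpha(n,k)}\|A\|$ arises from converting the additive bound on the logarithm into a multiplicative bound on the ratio via the elementary inequality $|e^{x}-e^{y}|\leq |x-y|\,e^{\max(|x|,|y|)}$, together with $\|A\|$ coming in as an overall scale of the perturbed weights.

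The main obstacle I foresee is the counting step: to push convergence down to $d=5$ rather than some larger value, the enumeration of loops and walks on the decorated hexagonal lattice must be tight enough that the per-polymer weight $3^{-d}$ beats the connective constant $4e\sqrt[5]{9}$, with room to absorb the Koteck\'y--Preiss entropy cost. A secondary but delicate point is ensuring that the dependence on $k$ in the prefactor remains linear --- a naive bound via $|\mr{\Lambda}_k^{(d)}| \sim k^2$ would be too weak --- which requires organizing the boundary-touching clusters by their anchor on $\partial\Lambda_n^{(d)}$ rather than by their footprint in the bulk, and using that the boundary has only $6n$ vertices.
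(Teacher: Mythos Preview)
Your overall architecture matches the paper's: VBS representation, hard-core polymer expansion, cluster expansion for the logarithm, and identification of the relevant clusters as those connecting $\partial\Lambda_k^{(d)}$ to $\partial\Lambda_n^{(d)}$. Two points, however, are genuine gaps rather than details to be filled in.

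\textbf{Recovering $\|A\|$ rather than $\|A(\BOmega)\|_{L^\infty}$.} You write that ``$\|A\|$ com[es] in as an overall scale of the perturbed weights.'' This is not automatic. In the symbol calculus, $A$ enters the integrand as $A(\BOmega)$, and the cluster expansion naturally produces a bound proportional to $\esssup|A(\BOmega)|$. But $\|A\|\leq \|A(\BOmega)\|_{L^\infty}$ with the inequality typically strict; for a product of single-site observables the ratio can grow exponentially in $|\supp A|$, which would destroy the LTQO application. The paper handles this (Section~\ref{sec:comparison}) by enlarging the Hilbert space at boundary sites of $\Lambda_k^{(d)}$ to $L^2(S^2,d\Omega)$, so that the difference $\Phi_{N,K}/Z_N-\Phi_{M,K}/Z_M$ can be absorbed into genuine Hilbert-space vectors $\psi_1,\psi_2\in\mathfrak{L}_K^{(d)}$, and the expectation becomes $\langle\psi_1,(\1\otimes A)\psi_2\rangle$, bounded by $\|A\|\,\|\psi_1\|\,\|\psi_2\|$. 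This is why the hypothesis $A\in\mr{\caA}_K^{(d)}$ (interior support) is needed: $A$ must act as the identity on the sites whose Hilbert space is being enlarged. Your proposal does not contain this step, and the inequality $|e^x-e^y|\leq|x-y|e^{\max(|x|,|y|)}$ that you invoke acts on the logarithms of partition functions, which still carry $A(\BOmega)$, not $\|A\|$.

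\textbf{Anchoring.} Your last paragraph proposes to anchor the boundary-touching clusters at $\partial\Lambda_n^{(d)}$, ``using that the boundary has only $6n$ vertices.'' This gives a prefactor $6n$, not $6k$; since $n$ is the parameter going to infinity, this is fatal for the stated bound. The paper anchors at $\partial\Lambda_k^{(d)}$ (the set $\Delta_K$ in Lemma~\ref{lem:cluster-expansion-bound}), which has $6k$ vertices, yielding $17\cdot 6k=102k$. The factor $2$ in the exponent comes simply from the graph distance $D_0(\partial\Lambda_n^{(0)},\partial\Lambda_k^{(0)})=2(n-k)$ on the undecorated lattice, not from a leave-and-return argument.
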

This theorem will be proven at the end of Section~\ref{sec:indistinguishability-proof}.
The assumption $d\geq 5$ guarantees $\alpha>0$ and so $F_\alpha(n,k)\to0$ as $n\to\infty$ for fixed $k$. Since the support of any local observable is contained in some $\mr{\Lambda}_k^{(d)}$, it follows that $\omega^{(d)}$ is the unique frustration-free state of the model. %
A unique infinite volume frustration-free state also exists for $d<5$, see, e.g. \cite{Kennedy1988}. However, the approach used to obtain the explicit constant and decay function from \eqref{eq:decay_bound} requires $d$ sufficiently large.

We specifically apply the stability result \cite[Theorem 2.8]{NSY:2021}, which is formulated in the infinite volume setting. An informal statement of this result will be given in Section~\ref{sec:stability} after the required assumptions are stated. In the infinite volume setting, the ground state and spectral gap properties are formulated in terms of the generator of the infinite volume dynamics. For any $d\geq 5$, \cite[Theorem 2.1]{BRLLNY} proved that the the AKLT model on the $d$-decorated hexagonal lattice has a uniform spectral gap $\gamma^{(d)}>0$ above its ground state energy. This in turn implies that the frustration-free state $\omega^{(d)}$ is a \emph{gapped ground state} of the infinite system dynamics generated by the closed derivation $\delta^{(d)}$ associated with the model, and its gap is at least $\gamma^{(d)}$. The domain of the generator contains $\cA_{\Gamma^{(d)}}^{\rm loc}\subseteq \dom(\delta^{(d)})$ as a core, and it is defined on such observables by
\[
\delta^{(d)}(A) := \lim_{\Lambda\uparrow \Gamma^{(d)}}[H_\Lambda, A]=\sum_{\substack{ (x,y)\in \Gamma^{(d)}}}[P_{(x,y)},A], \quad A\in \cA_{\Gamma^{(d)}}^{\rm loc}.
\]
In the GNS representation of $\omega^{(d)}$, this derivation is implemented by a self-adjoint operator called the GNS Hamiltonian, and the uniform gap $\gamma^{(d)}$ is a lower bound on the spectral gap above the ground state energy of this operator. As the frustration-free state is unique, a straightforward calculation shows that $\gamma^{(d)}$ being a lower bound on the gap of the GNS Hamiltonian is equivalent to the bound
\begin{equation}\label{eq:d_gap}
\omega^{(d)}(A^*\delta^{(d)}(A)) \geq \gamma^{(d)}\omega^{(d)}(A^*A) \quad \forall A\in\cA_\Gamma^{\rm loc} \quad \text{such that} \quad \omega^{(d)}(A) = 0.
\end{equation}
The second main result in this work proves that the gap in \eqref{eq:d_gap} is stable under sufficiently short range perturbations.

Explicitly, we consider perturbations defined by interactions $\Phi:\tilde{\Gamma}^{(0)}\times\bN \to \caA_{\Gamma^{(d)}}^{\rm loc}$ such that $\Phi(\tilde{x},n)^*=\Phi(\tilde{x},n)\in\cA_{\mr{\Lambda}_n^{(d)}(\tilde{x})}$, for which there are positive constants $a,\|\Phi\|>0$ and $0<\theta\leq 1$ so that
\begin{equation}\label{pert_decay}
 \|\Phi(\tilde{x},n)\| \leq \|\Phi\| e^{-an^\theta} \qquad \forall\, \tilde{x}\in\tilde{\Gamma}^{(0)},\; n\geq 1.
\end{equation}
This decay assumption guarantees the existence of an infinite-volume dynamics with a Lieb-Robinson bound for the perturbation, see, e.g., \cite{NOS,NSY} and Propostion~\ref{prop:LR_bound} in the appendix. Similar to above, the generator of this dynamics is a closed derivation $\delta^{\Phi}$ with $\cA_{\Gamma^{(d)}}^{\rm loc}\subseteq \dom(\delta^{\Phi})$ where the image of any local observable is given by the absolutely summable series
\begin{equation}\label{eq:pert_generator}
\delta^\Phi(A) := \sum_{(\tilde{x},n)\in\tilde{\Gamma}^{(0)}\times \bN}[\Phi(\tilde{x},n),A], \qquad A\in\cA_{\Gamma^{(d)}}^{\rm loc}.
\end{equation}

Stability is proven for the infinite volume system whose dynamics is generated by
\begin{equation}\label{perturbed_derivation}
\delta_s := \delta^{(d)} + s\delta^{\Phi}, \quad 0 \leq s \leq 1,
\end{equation}
which has $\dom(\delta^{(d)})\cap \dom(\delta^\Phi)$ as a core. The result below states that if $d\geq 5$, then for every fixed $0<\gamma<\gamma^{(d)}$ there is an associated $s_\gamma>0$ so that for each $0\leq s \leq s_\gamma$, the state
\begin{equation} \label{perturbed_gs}
\omega_s^{(d)}:=\omega^{(d)}\circ\alpha_s^\gamma
\end{equation} 
is a gapped ground state of $\delta^{(s)}$ with gap lower bounded by $\gamma$. Here, $\alpha_s^\gamma:\caA_{\Gamma^{(d)}}\to \caA_{\Gamma^{(d)}}$ is the quasi-adiabatic continuation (also known as the spectral flow) induced by the perturbed system first introduced by Hastings and Wen in \cite{Hastings2005}. This family of automorphisms, $\{\alpha_s^\gamma:s\in[0,1]\}$, has been integral in the study of gapped ground state phases. For a rigorous definition of the quasi-adiabatic continuation as well as proofs of its key properties, see \cite{BMNS, NSY}.

\begin{theorem}[Spectral Gap Stability]\label{thm:stability} Fix $d\geq 5$ and let $\gamma^{(d)}$ be the uniform gap of the AKLT model on $\Gamma^{(d)}$. For each $0<\gamma<\gamma^{(d)}$ and interaction $\Phi:\tilde{\Gamma}^{(0)}\times\bN \to \caA_{\Gamma^{(d)}}^{\rm loc}$ such that \eqref{pert_decay} holds, there exists $s_\gamma>0$ so that for all $0\leq s \leq s_\gamma$
	\begin{equation}\label{stable_gap}
	\omega_s^{(d)}(A^*\delta_s(A)) \geq \gamma \omega_s^{(d)}(A^*A) \quad \forall A\in\caA_{\Gamma^{(d)}}^{\rm loc} \;\; \text{s.t.} \;\; \omega_s^{(d)}(A)=0.
	\end{equation}
	Here, $\delta_s$ and $\omega_s^{(d)}$ are as in \eqref{perturbed_derivation}-\eqref{perturbed_gs}, respectively.
\end{theorem}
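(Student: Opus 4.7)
The plan is to deduce Theorem~\ref{thm:stability} by invoking the infinite-volume BHM-type stability result \cite[Theorem 2.8]{NSY:2021} (with the mild modifications described in Section~\ref{sec:stability}) applied to the unperturbed dynamics $\delta^{(d)}$ and the perturbation $\delta^{\Phi}$. That abstract theorem, when applicable, produces precisely a family of automorphisms $\alpha_s^\gamma$ (the spectral flow) and asserts that $\omega_s^{(d)} = \omega^{(d)}\circ\alpha_s^\gamma$ is a gapped ground state of $\delta_s$ with gap at least $\gamma$ for all sufficiently small $s$. Thus, once the hypotheses of \cite[Theorem 2.8]{NSY:2021} are verified, the conclusion \eqref{stable_gap} is immediate, and the constant $s_\gamma$ is the one produced by that theorem.

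The hypotheses reduce to three structural inputs. First, one needs a uniform finite-volume spectral gap for the unperturbed model on the relevant sequence of volumes: this is exactly \cite[Theorem~2.1]{BRLLNY}, which provides the uniform gap $\gamma^{(d)}>0$ for $d\geq 5$, together with the local gap condition formulated in Assumption~\ref{assump:localgap} for the volumes $\Gamma_n^{(1)}$ depicted in Figure~\ref{fig:hex_ring}. Second, one needs a Lieb--Robinson bound for the perturbed dynamics; this is standard given the stretched-exponential decay \eqref{pert_decay}, and is recorded in Proposition~\ref{prop:LR_bound} in the appendix, so the perturbed generator \eqref{eq:pert_generator} and the spectral flow $\alpha_s^\gamma$ are well-defined on $\dom(\delta^{(d)})\cap\dom(\delta^{\Phi})$. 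Third, and most critically, one needs the LTQO condition: a function $G(k,n)$ decaying sufficiently fast in $n-k$ such that, for every observable $A\in\caA_{\mr\Lambda_k^{(d)}}$ and every finite-volume ground state projector on $\Lambda_n^{(d)}$, the corresponding ground-state expectation agrees with a volume-independent state up to an error $G(k,n)\|A\|$.

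The LTQO input is supplied by Theorem~\ref{thm:indistinguishable}. Indeed, \eqref{eq:decay_bound} shows that any two normalized ground states $\psi_n,\psi_n'\in\ker H_{\Lambda_n^{(d)}}$ satisfy
\begin{equation*}
|\braket{\psi_n}{A\psi_n}-\braket{\psi_n'}{A\psi_n'}| \;\leq\; 4\,F_\alpha(n,k)e^{F_\alpha(n,k)}\|A\|
\end{equation*}
for any $A\in\caA_{\mr\Lambda_k^{(d)}}$, where $F_\alpha(n,k)=102k e^{-2\alpha(n-k)}$ and $\alpha>0$ when $d\geq 5$. Setting $G(k,n)=C k e^{-2\alpha(n-k)}$ for a suitable constant $C$ gives a rate of decay with respect to $n-k$ that is faster than any polynomial; this comfortably meets the summability requirement on $G$ imposed by \cite[Theorem 2.8]{NSY:2021}. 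One does need to check that the slight mismatch between the geometry assumed in \cite{NSY:2021} (generally phrased for $\mathbb{Z}^\nu$-type lattices with balls $b_n(x)$) and the hexagonal geometry at hand is harmless; this is the content of Section~\ref{sec:stability}, where the volumes $\Lambda_n^{(d)}(\tilde x)$ and $\Gamma_n^{(1)}(\tilde x)$ play the role of the balls and $|\partial \Lambda_n^{(d)}(\tilde x)|=6n$ (Lemma~\ref{lem:boundary-length}) controls the surface factors appearing in the BHM argument.

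The principal obstacle is not the verification of LTQO (which is handed to us by Theorem~\ref{thm:indistinguishable}) nor the gap input (handed to us by \cite{BRLLNY}), but rather the bookkeeping required to fit the hexagonal geometry into the framework of \cite{NSY:2021}: one must redefine the indexing balls, check that the local Hamiltonians $H_{\Lambda_n^{(d)}}$ satisfy the anchored form of the local gap assumption, and ensure that the $F$-function estimates used to control the spectral flow in \cite{NSY, NSY:2021} remain valid for the decorated hexagonal lattice under the stretched-exponential decay \eqref{pert_decay}. Once these geometric adjustments are recorded (in Section~\ref{sec:stability}), the abstract stability theorem applies verbatim and produces the threshold $s_\gamma>0$ and the bound \eqref{stable_gap}, completing the proof.
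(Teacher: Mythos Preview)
Your proposal is correct and follows essentially the same route as the paper: verify the hypotheses of \cite[Theorem~2.8]{NSY:2021} (adapted to the hexagonal geometry) and invoke it. The one item you do not explicitly name is the separating partition condition (Assumption~\ref{assump:partition}); the paper verifies this by constructing, for each $n$, an explicit partition of $\tilde{\Gamma}^{(0)}$ into $|\caI_n|=4n^2$ parts so that distinct $\tilde{x},\tilde{y}$ in the same part satisfy $\tilde{D}(\tilde{x},\tilde{y})\geq 2n$, which guarantees $\Gamma_n^{(d)}(\tilde{x})\cap\Gamma_n^{(d)}(\tilde{y})=\emptyset$. This is routine but not automatic, so it should appear alongside lattice regularity, the local gap, and LTQO in your checklist rather than being absorbed into ``geometric bookkeeping.''
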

The proof of this theorem is the focus of Section~\ref{sec:stability}.

\section{Stability of the frustration-free ground state}\label{sec:stability}

\begin{figure}
	\begin{center}
		\begin{tikzpicture}
			\def\n{.35};
			\newcommand{\hexcoord}[2]
			{[shift=(60:#1),shift=(120:#1),shift=(0:#2),shift=(-60:#2),shift=(0:#2),shift=(60:#2)]}
			\foreach \x in {-1,...,11}
			\foreach \y in {1,...,9}{
				\draw[color=black!25]\hexcoord{\x*\n}{\y*\n}
				(0:\n)--(60:\n)--(120:\n)--(180:\n)--(-120:\n)--(-60:\n)--cycle;
				\draw[color=black!25,shift=(-60:\n), shift=(0:\n)]\hexcoord{\x*\n}{\y*\n}
				(0:\n)--(60:\n)--(120:\n)--(180:\n)--(-120:\n)--(-60:\n)--cycle;
			}
			
			\foreach \x in {0,60,120,180,240,300,360}
			{\draw[shift=(\x+120:\n), shift=(\x+60:\n), very thick, color=blue]\hexcoord{5*\n}{5*\n}
				(\x+0:\n)--(\x+60:\n)--(\x+120:\n)--(\x+180:\n);}
			\foreach \m in {-1,1}
			\foreach \l in {-1,1}
			\foreach \x in {0,60,120,180,240,300,360}
			{\draw[shift=(\x+120:\n), shift=(\x+60:\n), very thick, color=blue]\hexcoord{5*\n+\m*2*\n}{5*\n+\l*2*\n}
				(\x+0:\n)--(\x+60:\n)--(\x+120:\n)--(\x+180:\n);}
			\foreach \m in {-2,2}
			\foreach \x in {0,60,120,180,240,300,360}
			{\draw[shift=(\x+120:\n), shift=(\x+60:\n), very thick, color=blue]\hexcoord{5*\n+\m*2*\n}{5*\n}
				(\x+0:\n)--(\x+60:\n)--(\x+120:\n)--(\x+180:\n);}
			
			\foreach \m in {-1}
			\foreach \l in {-1,1} 
			\foreach \x in {30, 150}
			{\draw[violet!60, dashed, thick]\hexcoord{5*\n+\m*2*\n}{5*\n+\l*2*\n} (\x:\n*8)--(\x+180:\n*7);}
			\foreach \m in {-2}
			\foreach \x in {30, 150}
			{\draw[violet!60, dashed, thick]\hexcoord{5*\n+\m*2*\n}{5*\n} (0:0)--(\x+180:\n*8);}
			\foreach \x in {30, 90}
			{\draw[violet!60, dashed, thick]\hexcoord{7*\n}{3*\n} (0:0)--(\x+120:\n*8);
				\draw[violet!60, dashed, thick]\hexcoord{7*\n}{7*\n} (0:0)--(\x-60:\n*8);		
			}
			\foreach \x in {30, 150}
			{\draw[violet!60, dashed, thick]\hexcoord{9*\n}{5*\n} (0:0)--(\x:\n*8);}
			
			\draw[color=red, ultra thick]\hexcoord{5*\n}{5*\n}(0:0)--(30:6.7*\n);
			\draw[color=red,ultra thick]\hexcoord{5*\n}{5*\n}(0:0)--(150:6.7*\n);
			\draw[color=red,ultra thick, dashed]\hexcoord{7*\n}{3*\n}(0:0)--(30:7*\n);
			\draw[color=red,ultra thick, dashed]\hexcoord{7*\n}{7*\n}(0:0)--(150:7*\n);
			\draw[color=black,ultra thick, ->]\hexcoord{5*\n}{5*\n}(0:0)node[below] {$\mathbf{\tilde{0}}$}--(30:1.7*\n);
			\draw[color=black,ultra thick, ->]\hexcoord{5*\n}{5*\n}(0:0)--(150:1.7*\n);
		\end{tikzpicture}
		
	\end{center}
	\caption{Illustration of the separating partition. The part $\caT_2^{\tilde{0}}$ is the set of the dual lattice points where two dotted lines intersect. The index set $\caI_2\subseteq \tilde{\Gamma}^{(0)}$ is set of points contained in the fundamental cell outlined in red.}
	\label{fig:separating_partition}
\end{figure}

We now turn to proving Theorem~\ref{thm:stability} under the assumption that Theorem~\ref{thm:indistinguishable} holds. This is achieved by applying the spectral gap stability result \cite[Theorem~2.8]{NSY:2021}. The perturbations considered in \cite{NSY:2021} were supported on the balls of the lattice with respect to some nice metric, e.g., the graph distance on $\Gamma^{(d)}$. However, the choice to prove stability in that context was only to ensure certain key quantities were summable. This was a consequence of the fact that (1) the perturbation terms could be indexed by two sets: the vertices of the lattice and $\bN$, and (2) the number of sites contained in the support of any perturbation term grew at most like a polynomial in the radius of the ball. With mild changes to the notation in the proof of \cite{NSY:2021}, perturbations supported on other families of finite volumes with similar properties can also be used in the stability argument, so long as its dynamics satisfies a Lieb Robinson bound with stretched exponential decay. Here, the perturbation terms are supported on volumes which are indexed by $\tilde{\Gamma}^{(0)}$ and $\bN$, see \eqref{eq:Lambda_n} and \eqref{pert_decay}, which we will show satisfy the lattice regularity condition (Assumption~\ref{assump:regularity}) below. Hence, the stability argument can be adapted to this setting. The main adaptations are stated below in Assumptions~\ref{assump:regularity}-\ref{assump:LTQO}. In Appendix~\ref{appendix:hexagon_lattice}, we prove the necessary conditions which implies the perturbations satisfy a Lieb-Robinson bound for a function with stretched-exponential decay. The other minor changes one needs to make in \cite{NSY:2021} are also outlined in the appendix. 

We now state the main assumptions for \cite[Theorem 2.8]{NSY:2021} in the context considered here. The first is that the size of the volumes supporting the perturbation terms do not grow too quickly.

\begin{assumption}[Lattice Regularity]\label{assump:regularity} There are $\kappa, \nu>0$ such that for all $n\in\bN$ and $\tilde{x}\in\tilde{\Gamma}^{(0)}$:
	\begin{equation}
		|\Lambda_n^{(d)}(\tilde{x})| \leq \kappa n^\nu
	\end{equation}
\end{assumption}

In addition, there are three assumptions related to the unperturbed model. The first is a gap condition on the local Hamiltonians supported on volumes that are comparable to those that support the perturbation terms.

\begin{assumption}[Local Gap]\label{assump:localgap} There exists $\gamma>0$ and a family of finite volumes 
	\[\left\{\Gamma_n^{(d)}(\tilde{x}): \Lambda_n^{(d)}(\tilde{x})\subseteq \Gamma_n^{(d)}(\tilde{x})\; \forall \,n\in\bN, \, \tilde{x}\in\tilde{\Gamma}^{(0)}\right\}\] 
	so that $\inf_{\tilde{x},n}{\rm gap}(H_{\Gamma_n^{(d)}(\tilde{x})})\geq \gamma.$
\end{assumption}

We note that while the positive uniform gap in Assumption~\ref{assump:localgap} is sufficient for stability, a less stringent local gap condition is needed to prove stability as long as the infinite-volume frustration-free ground state is gapped, see \cite[Assumption~2.2]{NSY:2021}. 

Recall that the spectral gap above the ground state of any finite volume Hamiltonian, $H_\Lambda$, for the AKLT model on the decorated lattice is the difference between its ground state and first excited state energies, i.e.
\[
{\rm gap(H_\Lambda)} = E_\Lambda^1 - E_\Lambda^0, \quad E_\Lambda^0 = \min{\rm spec}(H_\Lambda), \quad E_\Lambda^1 = \min {\rm spec}(H_\Lambda)\setminus \{E_\Lambda^0\}.
\]
For the AKLT model on the decorated hexagonal lattice, the local gap condition will be an immediate consequence of \cite[Theorem 2.2]{BRLLNY}, which we review for the reader's convenience. For any spin-3/2 vertex $v\in \Gamma^{(0)}$ of the undecorated lattice, let $Y_v^{(d)}\subset \Gamma^{(d)}$ denote the subvolume of $3d+1$ sites consisting of $v$ and the three spin-1 chains of length $d$ emanating from $v$. Suppose that $S\subseteq \Gamma^{(0)}$ is any finite set of spin-3/2 vertices so that
\begin{equation} \label{eq:aklt_gap_volumes}
\Lambda = \bigcup_{v\in S} Y_v^{(d)}.
\end{equation}
Then, \cite[Theorem 2.2]{BRLLNY} states that there exists $\gamma^{(d)}>0$ so that for any $\Lambda$ as above,
\begin{equation}\label{eq:local_gap}
{\rm gap}(H_\Lambda) \geq \gamma^{(d)},
\end{equation}
see also the comments following \cite[Equation~2.1]{BRLLNY}.

The next assumption guarantees that for each $n\in\bN$, the collection of subvolumes $\{\Gamma_n^{(d)}(\tilde{x}): \tilde{x}\in\tilde{\Gamma}^{(0)}\}$ can be partitioned into polynomially many sets, each of which consists of a collection of subvolumes that are spatially disjoint.

\begin{assumption}[Separating Partition of Polynomial Growth]\label{assump:partition} For each $n\geq 1$, there exists a index set $\caI_n$ and a partition $\{\caT_n^m: m\in\caI_n\}$ of  $\tilde{\Gamma}^{(0)}$ indexed by $\caI_n$, so that the following conditions hold:
	\begin{enumerate}
		\item For every $m\in\caI_n$, if $\tilde{x},\tilde{y}\in\caT_n^m$ are distinct, then $\Gamma_n^{(d)}(\tilde{x}) \cap \Gamma_n^{(d)}(\tilde{y})= \emptyset$.
		\item There are constants $\kappa_0,\nu_0>0$ such that $|\caI_n| \leq \kappa_0n^{\nu_0}$ for all $n\geq 1$.
	\end{enumerate}
\end{assumption}

The separating partition condition along with the lattice regularity assumption are used to characterize how indistinguishable the finite volume ground states need to be in order to guarantee the spectral gap is stable. This is captured by the local topological quantum order (LTQO) assumption. This property is only ever applied to perturbation terms in the stability argument, and so the support of the observables in the assumption below only needs to match the support of the perturbation terms from \eqref{pert_decay}.

\begin{assumption}[Local Topological Quantum Order]\label{assump:LTQO} Let $\omega^{(d)}$ be the frustration-free ground state from Theorem~\ref{thm:indistinguishable}, and denote by $P_{n}^{(d)}(\tilde{x})$ the orthogonal projection onto the ground state space $\ker H_{\Lambda_n^{(d)}(\tilde{x})}$. There is a non-increasing function $G:[0,\infty)\to[0,\infty)$ satisfying
	\[
	\sum_{n\geq 1}n^{\nu_0+\nu/2}G(n)<\infty
	\]
such that for all $ n\geq 2k\geq 2$, $\tilde{x}\in\tilde{\Gamma}^{(0)}$ and $A\in\caA_{\mr{\Lambda}_k^{(d)}(\tilde{x})}$
\[
\|P_{n}^{(d)}(\tilde{x})AP_{n}^{(d)}(\tilde{x})-\omega^{(d)}(A)P_{n}^{(d)}(\tilde{x})\| \leq |\Lambda_k^{(d)}(\tilde{x})|G(n-k)\|A\|.
\]
\end{assumption}

While we have written the assumptions above in the context of the decorated AKLT model, generalizations of these criterion hold in more general contexts. Informally, \cite[Theorem~2.8]{NSY:2021} states the following: Suppose that $\omega$ is the unique frustration-free ground state associated to a quantum spin model defined by a finite-range interaction whose terms are uniformly bounded in norm. Assume that $\gamma_0>0$ is a lower bound on the spectral gap of $\omega$ in the sense that \eqref{eq:d_gap} holds for $\omega$ and $\gamma_0$. If this model satisfies Assumptions~\ref{assump:regularity}-\ref{assump:LTQO}, then for any $0<\gamma<\gamma_0$ and any perturbation decaying at least as fast as a stretched exponential as in \eqref{pert_decay}, there exists $s_\gamma$ so that for all $0\leq s \leq s_\gamma$, the state $\omega_s = \omega\circ\alpha_s^\gamma$ is a ground state of $\delta_s$ from \eqref{perturbed_derivation} and, moreover, this is a gapped ground state in the sense that \eqref{stable_gap} holds. 

Said more concisely, \cite[Theorem~2.8]{NSY:2021} states that if Assumptions~\ref{assump:regularity}-\ref{assump:LTQO} holds, then Theorem~\ref{thm:stability} holds for any perturbation satisfying \eqref{pert_decay}. Thus, one only needs to verify these assumptions. In most cases, the local gap and LTQO conditions are the most difficult assumptions to verify. The next result shows that the LTQO condition is an immediate consequence of Theorem~\ref{thm:indistinguishable}.

\begin{corollary}[LTQO]\label{cor:LTQO}
	Suppose $d \geq 5$ and  $ n\geq 2k\geq 2$. For any $A\in \caA_{\mr{\Lambda}_k^{(d)}(\tilde{x})}$ with $\tilde{x} \in \tilde{\Gamma}^{(0)}$
	\begin{align}
		\norm{ P_{n}^{(d)}(\tilde{x})AP_{n}^{(d)}(\tilde{x})-\omega^{(d)}(A)P_{n}^{(d)}(\tilde{x})} \leq  |\partial\Lambda_k^{(d)}(\tilde{x})|G_{\alpha}(n-k)\|A\|
	\end{align}
where $G_{\alpha}(r) = C_\alpha e^{-2\alpha r}$ with $C_\alpha = 68e^{51/\alpha e}$ and $\alpha$ as in Theorem~\ref{thm:indistinguishable} .
\end{corollary}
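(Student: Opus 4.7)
The plan is to upgrade the pointwise expectation bound of Theorem~\ref{thm:indistinguishable} to an operator-norm estimate on the ground-state subspace via a polarization argument, tracking constants carefully to identify $C_\alpha$.

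Setting $B := A - \omega^{(d)}(A)\1$, the target operator is $P_{n}^{(d)}(\tilde x)\, B\, P_{n}^{(d)}(\tilde x)$, whose norm equals
\[
\sup\bigl\{|\langle \phi, B \phi'\rangle| : \phi, \phi' \in \ker H_{\Lambda_{n}^{(d)}(\tilde x)},\ \|\phi\|=\|\phi'\|=1\bigr\}.
\]
Using the polarization identity,
\[
4\langle \phi, B \phi' \rangle = \sum_{\ell=0}^{3} i^{-\ell}\langle \phi + i^\ell \phi', B(\phi + i^\ell \phi')\rangle,
\]
each combination $\phi + i^\ell \phi'$ is again an element of $\ker H_{\Lambda_n^{(d)}(\tilde x)}$. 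Normalizing and invoking Theorem~\ref{thm:indistinguishable} yields $|\langle \psi, B \psi\rangle| \leq 2F_\alpha(n,k) e^{F_\alpha(n,k)}\|A\|\|\psi\|^2$ for any such $\psi$. Since $\sum_\ell \|\phi+i^\ell \phi'\|^2 = 4(\|\phi\|^2+\|\phi'\|^2) = 8$, summing the four terms gives
\[
|\langle \phi, B\phi'\rangle| \leq 4 F_\alpha(n,k)\, e^{F_\alpha(n,k)}\,\|A\|.
\]

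Next, I would remove the $(n,k)$-dependence in the exponential factor using the hypothesis $n \geq 2k$, so that $n-k \geq k$ and hence $F_\alpha(n,k) = 102 k e^{-2\alpha(n-k)} \leq 102 k e^{-2\alpha k}$. Maximizing $k \mapsto k e^{-2\alpha k}$ over $k \geq 0$ (the maximum $1/(2\alpha e)$ is attained at $k = 1/(2\alpha)$) gives $F_\alpha(n,k) \leq 51/(\alpha e)$, hence $e^{F_\alpha(n,k)} \leq e^{51/(\alpha e)}$. Combining this with $|\partial \Lambda_k^{(d)}(\tilde x)| = 6k$ from Lemma~\ref{lem:boundary-length}, we obtain
\[
4 F_\alpha(n,k) e^{F_\alpha(n,k)} \leq 408\, k\, e^{51/(\alpha e)}\, e^{-2\alpha(n-k)} = |\partial \Lambda_k^{(d)}(\tilde x)|\cdot 68\, e^{51/(\alpha e)}\, e^{-2\alpha(n-k)},
\]
which matches the asserted bound with $C_\alpha = 68\, e^{51/(\alpha e)}$.

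There is no substantive obstacle here: the corollary is essentially the translation of the quadratic-form estimate of Theorem~\ref{thm:indistinguishable} into an operator-norm statement on the ground-state sector. The only points requiring any care are (i) carrying out the polarization so that the pre-factor is $4$ rather than the naive $8$, and (ii) uniformly bounding $F_\alpha(n,k)$ over the region $n \geq 2k$ so that the factor $e^{F_\alpha(n,k)}$ can be absorbed into a finite constant; together these account for the explicit constant $C_\alpha = 68\, e^{51/(\alpha e)}$ in the statement.
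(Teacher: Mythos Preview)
Your proof is correct and follows essentially the same route as the paper: apply Theorem~\ref{thm:indistinguishable} to pass to a quadratic-form bound, upgrade to an operator-norm bound picking up a factor of $4$, then use $|\partial\Lambda_k^{(d)}|=6k$ and the uniform estimate $F_\alpha(n,k)\le 51/(\alpha e)$ under $n\ge 2k$ to extract $C_\alpha=68e^{51/(\alpha e)}$. The only cosmetic difference is in the upgrading step: the paper decomposes $A=B+iC$ into self-adjoint parts and uses that for self-adjoint $T$ the norm of $PTP$ equals $\sup_{\|\Psi\|=1,\,\Psi\in\ran P}|\langle\Psi,T\Psi\rangle|$, while you use polarization; both are standard and yield the identical constant.
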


\begin{proof} Since the AKLT model is invariant under any translation of the dual lattice $\tilde{\Gamma}^{(0)}$, it is sufficient to consider the ground state projection $P_n^{(d)}$ associated to $\Lambda_n^{(d)}:=\Lambda_n^{(d)}(\tilde{0})$ for some fixed $\tilde{0}\in\Gamma^{(d)}$. If $A\in\caA_{\mr{\Lambda}_k^{(d)}}$ is self-adjoint, then by Theorem~\ref{thm:indistinguishable}
	\begin{align}\label{ltqo_bound}
		\norm{ P_n^{(d)} A P_n^{(d)} - \omega^{(d)}(A) P_n^{(d)}} = \sup _{ \substack{ \Psi \in \ker(H_{\Lambda_n^{(d)}}) : \\ \norm{\Psi}=1}} | \ip{\Psi , A \Psi } - \omega^{(d)}(A)| \leq 2F_\alpha(n,k)e^{F_\alpha(n,k)}\|A\|.
	\end{align}
	If $A$ is not self-adjoint, we can decompose it as $A = B + iC$, where $B$ and $C$ are self-adjoint and $\norm{B}, \norm{C} \le \norm{A}$. Using the triangle inequality and applying \eqref{ltqo_bound} to $B$ and $C$ independently, we obtain
	\begin{equation}\label{ltqo_bound2}
\norm{ P_n^{(d)} A P_n^{(d)} - \omega (A) P_n^{(d)}}	\le 4 F_\alpha(n,k)e^{F_\alpha(n,k)}\|A\|.
	\end{equation}
 The boundary $\partial\Lambda_k^{(d)}$ contains $6k$ sites by Proposition~\ref{lem:boundary-length}. Therefore, $F_\alpha(n,k) = 17e^{-2\alpha(n-k)}|\partial\Lambda_k^{(d)}|.$ Alternatively, since $k\leq n/2$
\[
F_\alpha(n,k) \leq 51ne^{-\alpha n} \leq  \frac{51}{\alpha e}
\]
as $re^{-r} \leq 1/e$ for all $r\geq 0$. Inserting these into \eqref{ltqo_bound2} produces the result.

\end{proof}

We now prove Theorem~\ref{thm:stability} under the assumption that Theorem~\ref{thm:indistinguishable} holds.

\begin{proof}[Proof of Theorem~\ref{thm:stability}] The simple counting argument from Proposition~\ref{lem:boundary-length} shows that the lattice regularity condition is satisfied since  $|\Lambda_n^{(d)}(\tilde{x})|\leq 3(3d+2)n^2$, and Assumption~\ref{assump:LTQO} holds by Corollary~\ref{cor:LTQO}. For the local gap condition, for any $\tilde{x}\in \tilde{\Gamma}^{(0)}$ and $n\in \bN$, let 
\begin{equation}\label{eq:lg_volumes}
\Gamma_{n}^{(d)}(\tilde{x}) := \bigcup_{v\in\Lambda_n^{(0)}(\tilde{x})} Y_v^{(d)},
\end{equation}
where we recall that $Y_v^{(d)}$ is as defined above \eqref{eq:aklt_gap_volumes}. Note that this is the union of $\Lambda_n^{(d)}(\tilde{x})$ and all decorated spin-1 chains emanating from a boundary site $v\in\partial\Lambda_n^{(d)}(\tilde{x})$,  see Figure~\ref{fig:hex_ring}. Hence, 
\[
\inf_{\tilde{x},n}{\rm gap}(H_{\Gamma_{n}^{(d)}(\tilde{x})})\geq \gamma^{(d)}>0
\]
as desired by \eqref{eq:local_gap}. 

It is left to verify Assumption~\ref{assump:partition}. To define the separating partition, let $\tilde{0}\in\tilde{\Gamma}^{(0)}$ denote some fixed site of the dual lattice, and let $v_\pm = (\pm\frac{\sqrt{3}}{2},\frac{1}{2})\in\bR^2$ be the two dual lattice vectors as in Figure~\ref{fig:separating_partition}. Then, \[\tilde{\Gamma}^{(0)}=\{\tilde{x}=\tilde{0}+kv_++\ell v_- : k,l\in\bZ\}\] 
and $\tilde{D}(\tilde{x},\tilde{0}) = |k|+|\ell|$. It is easy to verify that $|\caI_n|=4n^2$ if one chooses the index set for the $n$-th partition to be
\[
\caI_n = \{\tilde{m}=\tilde{0}+kv_+ + \ell v_-: 0\leq k,\ell \leq 2n-1\}\subseteq \tilde{\Gamma}^{(0)}.
\]
With this choice, the $n$-th partition part indexed by $\tilde{m}\in\caI_n$  can be taken as
\[
\caT_n^{\tilde{m}} = \{\tilde{m}+2n(kv_+ + \ell v_-) : k,\ell\in\bZ\}\subseteq \tilde{\Gamma}^{(0)}.
\]
This satisfies $\tilde{D}(\tilde{x},\tilde{y})\geq 2n$ for any two distinct $\tilde{x},\tilde{y}\in \caT_n^{\tilde{m}}$, which is the minimal distance needed to guarantee that $\Gamma_n^{(d)}(\tilde{x})\cap \Gamma_n^{(d)}(\tilde{y}) = \emptyset$, see Figure~\ref{fig:separating_partition}.

Thus, all of the stability assumptions hold and so by \cite[Theorem 2.8]{NSY:2021}, for any $0<\gamma<\gamma^{(d)}$ there is $s_\gamma>0$ so that for all $0\leq s \leq s_\gamma$ and any local observable such that $\omega_s^{(d)}(A)=0$,
\[
\omega_s^{(d)}(A^*\delta_s(A)) \geq \gamma \omega_s^{(d)}(A^*A).
\]
\end{proof}

\section{Characterizations of ground states}\label{sec:gss}

The remainder of this work focuses on proving Theorem~\ref{thm:indistinguishable}.  As such, from now on we only consider the fixed sequence $\Lambda_N^{(d)}:=\Lambda_N^{(d)}(\tilde{0})$ defined as in \eqref{eq:Lambda_n}. We emphasize once again that all vertices $v\in \Lambda_N^{(d)}$ satisfy $\deg(v)\in\{2,3\}$ and, in particular, all boundary vertices have degree three, and two of their edges are contained in $\Lambda_N^{(d)}$. To further simplify notation, set
\begin{equation}\label{notation}
\mf{H}_N^{(d)} : = \mf{H}_{\Lambda_N^{(d)}}, \quad H_N^{(d)}: = H_{\Lambda_N^{(d)}}, \quad \cA_N^{{(d)}}:= \cA_{\Lambda_N^{(d)}},  \quad \mr{\cA}_N^{{(d)}}:= \cA_{\mr{\Lambda}_N^{(d)}}.
\end{equation}
 Since we will frequently consider subgraphs of both $\Gamma^{(d)}$ and $\Lambda_N^{(d)}$,  we denote their respective sets of vertices and bonds by
 \begin{equation}\label{eq:graph_defs}
 \Gamma^{(d)} \equiv (\caV^{(d)}, \caB^{(d)}), \quad \Lambda_N^{(d)} \equiv (\caV_N^{(d)}, \caB_N^{(d)}).
\end{equation}

There are two goals of this section. The first is to give a nice description for the ground state space of $H_N^{(d)}$. This is achieved in Section~\ref{sec:hamiltonian} by considering the Weyl representation of $\su$ acting on a Hilbert space of homogeneous polynomials. Each ground state can be uniquely described by a polynomial supported on the boundary $\partial\Lambda_N^{(d)}$. As we are interested in calculating the ground state expectation of observables supported sufficiently far away from this boundary, a finite volume ``bulk-boundary map'' will be identified that can be used to calculate the expected value of any such observable in the ground state associated with any fixed boundary polynomial. A  ``bulk state'' will also be defined which will be used to prove Theorem~\ref{thm:indistinguishable} in Section~\ref{sec:main}. This will be done by showing that the bulk state well-approximates each of the finite volume ground states and, moreover, converges strongly to the unique infinite volume ground state. The second goal of this section is to rewrite these maps in terms of hard core polymer representations. The graphs and weights used for this representation are introduced in Section~\ref{sec:polymers_weights}, and the final expressions are proved in Section~\ref{sec:loop}. A lemma producing an initial comparsion between the bulk-boundary map and the bulk state is then proved in Section~\ref{sec:comparison}, from which we will obtain the indistinguishability bound in Section~\ref{sec:main}.

\subsection{The ground states and bulk state of the decorated AKLT Hamiltonian}\label{sec:hamiltonian}
We follow the construction in \cite{Kennedy1988}, and use the Weyl representation of the Lie algebra $\mf{su}(2)$ acting on polynomials in two variables to explicitly realize the AKLT mode on the decorated lattice. For the convenience of the reader, we review the relevant setup and ground state description from this work. As such, consider the Hilbert space of complex homogeneous polynomials of degree $m$
\begin{equation}
	\begin{split}
		\mc{H}^{(m)} := \set{ \sum_{k=0}^{m} \la_{k} u^k v^{m-k} \, : \, \lambda_k\in\bC  } \subset \cc [u,v],
	\end{split}
\end{equation} 
where the inner product is taken so that the monomial basis is orthogonal. Concretely, using the change of variables
\begin{align}\label{eq:change-of-coordinates}
	u= \exp(i \phi/2)\cos(\theta/2), \quad v=\exp(-i\phi/2) \sin(\theta/2),
\end{align} 
and given any pair $\Phi,\Psi\in\HH{m}$, the inner product is
\begin{align}
	\ip{\Phi, \Psi}& = \int d\Omega ~ \overline{\Phi(\theta, \phi)} \Psi(\theta,\phi) \label{eq:IP}\\
	d\Omega  = \frac{1}{4\pi} \sin & (\theta) d\phi d\theta, \;\; 0 \leq \phi < 2\pi, \; 0\leq \theta < \pi.\label{integrand_data}
\end{align}
In particular, this allows one to view each $\HH{m}$ as a subspace of $L^2(d\Omega).$

 The onsite Hilbert space for the decorated AKLT model is then $\mf{H}_x = \mc{H}^{(\deg(x))}$ for each vertex $x\in\Gamma^{(d)}$. Thus, $\mf{H}_\Lambda = \bigotimes _{x\in \Lambda}\mc{H}^{(\deg(x))}$ for any finite $\Lambda\subseteq \Gamma^{(d)}$, and the associated inner product is
\begin{align}
\ip{\Phi, \Psi}& = \int d\BOmega^{\Lambda} ~ \overline{\Phi(\mathbf{\theta, \phi})} \Psi(\mathbf{\theta, \phi}) , \qquad \forall \, \Phi,\Psi\in \mf{H}_\Lambda\,.
\end{align} 
Above, $d\mathbf{\Omega}^{\Lambda}$ is the product measure associated to $\{d\Omega_x : x\in \Lambda\}$ and  $\Phi(\mathbf{\theta, \phi})$ denotes the function resulting from appropriately applying the change of variables \eqref{eq:change-of-coordinates} independently to each pair of variables $u_x,v_x$  associated to any $x\in\Lambda$. For simplicity, we denote by $\theta = (\theta_x)_{x\in\Lambda}$, and $\phi=(\phi_x)_{x\in\Lambda}$.

The local AKLT Hamiltonian from \eqref{eq:local_ham} is represented on $\mf{H}_\Lambda$ using the Weyl representation. For each $m\geq 0$, this is the irreducible representation $\pi_m: \su \to B(\HH{m})$ given by
\begin{equation}\label{spin-mat}
	\begin{split}
		\pi_m(\sigma^3) = v\partial_v - u\partial_u, \hspace{2mm}
		\pi_m(\sigma^-) = u\partial_v ,\hspace{2mm}
		\pi_m(\sigma^+) = v\partial_u,
	\end{split}
\end{equation}
where $\sigma^3$ is the third Pauli matrix, and $\sigma^{\pm}$ are the usual lowering and raising operators. This is isomorphic to the spin-$m/2$ representation. For adjacent sites $x$ and $y$, with degrees $m_x$ and $m_y$ respectively, the subspace of $\mf{H}_x \otimes \mf{H}_y$ corresponding to the maximal spin $(m_x + m_y)/2$ is spanned by the states $(u_x\partial_{v_x} + u_y\partial_{v_y})^kv_x^{m_x}v_y^{m_y}$ where $0 \leq k \leq m_x+m_y$, as one can check by evaluating these states against the tensor product representation $\pi_{m_x}\otimes \pi_{m_y}$ of $\su$. The orthogonal projection onto this subspace then gives the AKLT interaction term $P_{(x,y)}\in B(\mf{H}_x \otimes \mf{H}_y)$.
In this representation, the ground state space $\ker{H_\Lambda}$ is characterized by boundary polynomials. Indeed, by a simple argument from \cite{Kennedy1988},
\begin{equation}\label{lem:ground-statespace}
	\begin{split}
		\ker{P_{(x,y)}} = \set{ f \in \mf{H}_x \otimes \mf{H}_y  : f = (v_xu_y - u_xv_y) g(u_x,v_x, u_y,v_y)},
	\end{split}
\end{equation}
where $g$ is a homogeneous polynomial of degree $\deg(x)-1$ in $u_x$ and $v_x$, and similarly in the $y$-variables. In words, the ground state requires that there be a singlet  $v_xu_y-u_xv_y$ across the bond $(x,y)$, but the remaining variables can form any homogeneous polynomial of the appropriate degree. By the frustration-free property, a ground state of any finite volume Hamiltonian $H_\Lambda$ must project all edges $(x,y)\in\Lambda$ into a singlet. Since $\mathfrak{H}_x = \cH^{(\deg(x))}$ and the polynomials over $\cc$ form a unique factorization domain, (\ref{lem:ground-statespace}) immediately implies the following description for the ground state space. 
\begin{theorem}[\hspace*{-3px}\cite{Kennedy1988,KK89}]\label{thm:gss}
	Let $d\geq 0$. For any finite $\Lambda\subseteq \Gamma^{(d)}$ the ground state space is given by
	\begin{equation}
		\ker(H_\Lambda) = \left\{\Psi= g\cdot \prod_{(i,j)\in\Lambda}(u_iv_j-v_iu_j)\in \mf{H}_\Lambda : g\in \mathfrak{H}_{\partial \Lambda}^{\rm gss}\right\}
	\end{equation}
where the set of all possible boundary polynomials is
\begin{equation}
\mathfrak{H}_{\partial \Lambda}^{\rm gss}=\bigotimes_{i\in\partial\Lambda} \caH^{(d_i)}, \quad d_i =  \deg(i)-\deg_\Lambda(i)\,.
\end{equation}
\end{theorem}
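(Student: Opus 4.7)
The proof naturally splits into two inclusions, both driven by the local ground-state characterization~\eqref{lem:ground-statespace} together with the frustration-free structure of $H_\Lambda$. For the easy direction ($\supseteq$), I would take $g\in\mathfrak{H}_{\partial\Lambda}^{\rm gss}$ and verify by direct degree counting that $\Psi=g\cdot\prod_{(i,j)\in\Lambda}(u_iv_j-v_iu_j)$ lies in $\mathfrak{H}_\Lambda$: the product contributes degree $\deg_\Lambda(x)$ in each variable pair $(u_x,v_x)$, and for a boundary vertex $g$ supplies the remaining $d_x=\deg(x)-\deg_\Lambda(x)$, while for interior vertices $\deg_\Lambda(x)=\deg(x)$ already, so $g$ need not depend on $(u_x,v_x)$. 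For every bond $(x,y)\in\Lambda$ the explicit singlet $(u_xv_y-v_xu_y)$ appears as a factor and the remaining polynomial has degree $\deg(x)-1$ in $(u_x,v_x)$ and $\deg(y)-1$ in $(u_y,v_y)$, so \eqref{lem:ground-statespace} gives $P_{(x,y)}\Psi=0$, and frustration-freeness then yields $\Psi\in\ker(H_\Lambda)$.

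For the harder inclusion ($\subseteq$), suppose $\Psi\in\ker(H_\Lambda)$. Since $H_\Lambda=\sum_{(x,y)\in\Lambda}P_{(x,y)}$ is a sum of orthogonal projections, $\Psi\in\ker(H_\Lambda)$ forces $P_{(x,y)}\Psi=0$ on every bond, so by \eqref{lem:ground-statespace} the polynomial $(v_xu_y-u_xv_y)$ divides $\Psi$ in $\mathbb{C}[u_x,v_x : x\in\Lambda]$ for each edge $(x,y)\in\Lambda$. The central step is to promote these individual divisibilities to divisibility by the full product. Since $\mathbb{C}[u_x,v_x : x\in\Lambda]$ is a unique factorization domain, it suffices to show that each singlet polynomial is irreducible and that distinct singlets are pairwise non-associate. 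Irreducibility follows from bi-degree considerations: any nontrivial factorization of $v_xu_y-u_xv_y$ must take the form $(au_x+bv_x)(cu_y+dv_y)$, but matching coefficients yields the inconsistent system $ac=bd=0$, $ad=-1$, $bc=1$. Two singlets for distinct bonds cannot be scalar multiples either, since distinct bonds share at most one vertex and hence each singlet contains two variables absent from the other. The UFD property therefore yields a factorization $\Psi=g\cdot\prod_{(i,j)\in\Lambda}(u_iv_j-v_iu_j)$ for some polynomial $g$.

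It remains to confirm that $g$ lies in the prescribed boundary space. The degree count from the first paragraph, applied in reverse, shows that $g$ is homogeneous of degree $d_x=\deg(x)-\deg_\Lambda(x)$ in each $(u_x,v_x)$; in particular $d_x=0$ for every interior vertex, so $g$ depends only on the boundary variables and belongs to $\bigotimes_{i\in\partial\Lambda}\caH^{(d_i)}=\mathfrak{H}_{\partial\Lambda}^{\rm gss}$. The main obstacle is the unique-factorization step---specifically the explicit irreducibility and pairwise coprimeness check for the singlet polynomials---while everything else reduces to routine degree bookkeeping.
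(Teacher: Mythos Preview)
Your proof is correct and follows essentially the same approach as the paper: the paper simply states that frustration-freeness forces every singlet $(v_xu_y-u_xv_y)$ to divide $\Psi$ by \eqref{lem:ground-statespace}, and then invokes the UFD property of $\cc[u_x,v_x:x\in\Lambda]$ to conclude. You have supplied exactly the details the paper omits---the explicit irreducibility and pairwise non-associateness of the singlets, and the degree bookkeeping placing $g$ in $\mathfrak{H}_{\partial\Lambda}^{\rm gss}$---so your argument is a faithful fleshing-out of the paper's one-sentence justification.
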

Note that $d_i$ is the number of edges connected to $i$ that are not contained in $\Lambda$. In the case of $\Lambda_N^{(d)}$, since $d_i=1$ for all $i\in\partial\Lambda_N^{(d)}$, see Figure~\ref{fig:hex_ring}, the space of boundary polynomials, $\mathfrak{H}_{\partial \Lambda_N^{(d)}}^{\rm gss}$, is spanned by all elements of the form
\[\prod_{i\in\partial\Lambda_N^{(d)}}\left(a_iu_i+(1-a_i)v_i\right),  \qquad a_i\in\{0,1\}.\]
Therefore, $\dim( \ker H_N^{(d)}) = 2^{|\partial\Lambda_N^{(d)}|}=2^{6N}$ by Proposition~\ref{lem:boundary-length}.

The matrix entries of an operator $A \in \cA_N^{(d)}$ can be conveniently described using the change of variables \eqref{eq:change-of-coordinates} by introducing the \emph{symbol} of $A$, denoted  $A(\BOmega)$. Let
\begin{equation}\label{eq:sphereical_coord}
	\Omega_x=(\sin{\theta_x}\cos{\phi_x},\sin{\theta_x}\sin{\phi_x},\cos{\theta_x} )
\end{equation}
be spherical coordinate associated to $x\in\Lambda_N^{(d)}$. Arovas, Auerbach and Haldane showed in \cite{PhysRevLett.60.531} that
\begin{align}\label{eqn:matrix_entries}
	\ip{ \eta, A \xi} = \int d\BOmega^{\Lambda_N^{(d)}} ~\overline{\eta(\theta,\phi)} \xi(\theta,\phi) A(\BOmega),  ~~~ \forall \eta, \xi \in \mf{H}_N^{(d)}.
\end{align}
where $A(\BOmega)$ is a continuous function of the angles $\theta_x,\phi_x$ associated to $x\in \supp{A}$. 

In general the symbol is not unique. However, a specific choice can be made by first defining it unambiguously for a basis of the onsite algebra $B(\caH^{(m)})$ and invoking linearity to define the symbol for a general $A\in B(\caH^{(m)})$. We require $\1(\Omega)=1$ so that the support condition stated after \eqref{eqn:matrix_entries} is satisfied. This is achieved by including $\1 = 1$ in the onsite basis and implementing the following procedure. First, use the commutation relation $[\partial_x, x]=1$ to rewrite each basis element as
\[
A = \sum_{k,l\in\bN_0}\sum_{-k\leq j\leq l}  a_{k,l,j} \partial_u^k \partial_v^l u^{k+j}v^{l-j}, \qquad  a_{k,l,j}\in\bC.
\]
Then, using $\braket{\Phi}{\partial_u^k\partial_v^lu^{k+j}v^{l-j}\Psi}=C_{k,l}\braket{u^kv^l\Phi}{u^{k+j}v^{l-j}\Psi}$ where $C_{k,l} = \frac{(m+l+k+1)!}{(m+1)!}$, define the symbol to be
\begin{equation}\label{eq:symbol}
A(\Omega):=\sum_{k,l\in\bN_0}\sum_{-k\leq j\leq l} C_{k,l}a_{k,l,j} \overline{ u^k v^l} u^{k+j}v^{l-j},
\end{equation}
which is to be understood using \eqref{eq:change-of-coordinates}. The formula extends to any $A\in\caA_{\Gamma^{(d)}}^{\rm loc}$ in the usual way: $\left(\bigotimes_{x}A_x\right)(\BOmega) := \prod_x A_x(\Omega_x)$ for a product of onsite observables, and then extended to any local operator by linearity. The convention $\1(\Omega)=1$ implies $AB(\BOmega) = A(\BOmega)B(\BOmega)$ if $A,B$ have disjoint support.

The matrix elements formula \eqref{eqn:matrix_entries} can also be used to calculate ground state expectations for any $\Psi(f)\in\ker H_N^{(d)}$ with boundary polynomial $f\in \mathfrak{H}_{\partial \Lambda_N^{(d)}}^{\rm gss}$ as
\begin{equation}\label{eqn:gs_matrix_entries}
	\braket{\Psi(f)}{A \Psi(f)} = \int d\BOmega^{\Lambda_N^{(d)}}\prod_{(i,j)\in\Lambda_N^{(d)}}|u_i v_j - v_i u_j|^2 |f|^2A(\BOmega).
\end{equation}
The change of variables (\ref{eq:change-of-coordinates}) can also be used to show $|u_i v_j - v_i u_j|^2 = \frac{1}{2} (1- \Omega_i \cdot \Omega_j).$ Thus, setting
\begin{equation}\label{def_rho}
	d\rho_{\Lambda_{N}^{(d)}}=\rho_{\Lambda_{N}^{(d)}} d\BOmega^{\Lambda_{N}^{(d)}}, \quad \rho_{\Lambda_N^{(d)}} = 2^{-|\cB_N^{(d)}|}\prod _{(i,j)\in \Lambda_{N}^{(d)}} (1-\Omega_i \cdot \Omega_j),
\end{equation}
the ground state expectation $\braket{\Psi(f)}{A \Psi(f)}$ can then be rewritten in terms of $|f|^2$ and a bulk-boundary map $\mr{\omega}_{N}(A;\partial \BOmega)$ that is independent of $f\in \mathfrak{H}_{\partial \Lambda_N^{(d)}}^{\rm gss}$ as follows.
\begin{lemma}[Bulk-boundary map]\label{lem:bulk-boundary_map}
	Fix $N\geq 2$ and let  $\Psi(f)\in \ker H_{N}^{(d)}$ be a nonzero ground state associated with a boundary polynomial $f\in \mathfrak{H}_{\partial \Lambda_N^{(d)}}^{\rm gss}$ as in Theorem~\ref{thm:gss}. Then, for any $K<N$,
	\begin{equation}\label{fv_gs_exp}
	\braket{\Psi(f)}{A \Psi(f)} = \int d\rho_{\Lambda_{N}^{(d)}}\,  |f|^2\, \mr{\omega}_N(A;\partial \BOmega), \qquad A\in \caA_K^{(d)}
	\end{equation}
where $\mr{\omega}_N(A;\partial \BOmega):=\mr{Z}_{N}(A;\partial \BOmega)/\mr{Z}_N(\partial \BOmega)$ is the function of the boundary variables  $\partial \BOmega = ( \Omega_x : x\in\partial\Lambda_N^{(d)})$ defined by
	\begin{equation}\label{bulk_map}
		\mr{Z}_{N}(A;\partial \BOmega) := \int d\BOmega^{\mr{\Lambda}_{N}^{(d)}} \rho_{\Lambda_{N}^{(d)}} A(\BOmega), \qquad \mr{Z}_N(\partial \BOmega):=\mr{Z}_N(\1;\partial \BOmega).
	\end{equation}
\end{lemma}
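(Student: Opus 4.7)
The plan is to start from the explicit ground-state expectation formula \eqref{eqn:gs_matrix_entries}, which follows from the symbol identity \eqref{eqn:matrix_entries} applied with $\eta=\xi=\Psi(f) = f\cdot\prod_{(i,j)\in\Lambda_N^{(d)}}(u_iv_j-v_iu_j)$. Using the pointwise identity $|u_iv_j-v_iu_j|^2=\tfrac12(1-\Omega_i\cdot\Omega_j)$ (already recorded in the text), the product over edges collapses to the density $\rho_{\Lambda_N^{(d)}}$ from \eqref{def_rho}, and one obtains
$$\braket{\Psi(f)}{A\Psi(f)} \;=\; \int d\BOmega^{\Lambda_N^{(d)}}\,\rho_{\Lambda_N^{(d)}}\,|f|^2\,A(\BOmega).$$

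Next I would split the integration into boundary and interior variables. By Theorem~\ref{thm:gss}, $f\in\mathfrak{H}_{\partial\Lambda_N^{(d)}}^{\rm gss}$ depends only on $\partial\BOmega=(\Omega_x)_{x\in\partial\Lambda_N^{(d)}}$. The hypothesis $K<N$ forces $\mr\Lambda_K^{(d)}\subseteq \mr\Lambda_N^{(d)}$: any $v$ whose incident edges all lie in $\Lambda_K^{(d)}$ has them all lying in $\Lambda_N^{(d)}$, hence $v\in\mr\Lambda_N^{(d)}$. Consequently the symbol $A(\BOmega)$ depends only on the interior variables $\BOmega|_{\mr\Lambda_N^{(d)}}$, and Fubini yields
$$\braket{\Psi(f)}{A\Psi(f)} \;=\; \int d\BOmega^{\partial\Lambda_N^{(d)}}\,|f|^2 \int d\BOmega^{\mr\Lambda_N^{(d)}}\,\rho_{\Lambda_N^{(d)}}\,A(\BOmega) \;=\; \int d\BOmega^{\partial\Lambda_N^{(d)}}\,|f|^2\,\mr Z_N(A;\partial\BOmega),$$
recognizing the inner integral as $\mr Z_N(A;\partial\BOmega)$ from \eqref{bulk_map}.

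Finally I would renormalize: multiply and divide by $\mr Z_N(\partial\BOmega):=\mr Z_N(\1;\partial\BOmega)$, which is strictly positive for a.e.\ $\partial\BOmega$ since $\rho_{\Lambda_N^{(d)}}\geq 0$ vanishes only on the measure-zero set where some adjacent interior spins coincide. Writing $\mr\omega_N(A;\partial\BOmega)=\mr Z_N(A;\partial\BOmega)/\mr Z_N(\partial\BOmega)$ and using that $|f|^2$ and $\mr\omega_N(A;\partial\BOmega)$ are boundary functions (so that re-introducing the interior integral of $\rho_{\Lambda_N^{(d)}}$ restores precisely $\mr Z_N(\partial\BOmega)$) gives
$$\int d\BOmega^{\partial\Lambda_N^{(d)}}\,|f|^2\,\mr Z_N(A;\partial\BOmega) \;=\; \int d\rho_{\Lambda_N^{(d)}}\,|f|^2\,\mr\omega_N(A;\partial\BOmega),$$
which is \eqref{fv_gs_exp}. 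There is no genuine obstacle here; the proof is essentially a Fubini/renormalization manipulation, and the only points requiring care are the inclusion $\supp(A)\subseteq \mr\Lambda_N^{(d)}$ guaranteed by $K<N$ and the almost-everywhere positivity of $\mr Z_N(\partial\BOmega)$.
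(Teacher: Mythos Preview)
Your argument is correct and follows the same Fubini/renormalization route as the paper. The one substantive difference is how you justify $\mr Z_N(\partial\BOmega)>0$: you argue directly from nonnegativity of $\rho_{\Lambda_N^{(d)}}$ and the fact that its zero set has measure zero in the interior variables, whereas the paper identifies $\mr Z_N(\partial\BOmega)=\|\Psi(g_{\partial\BOmega})\|^2$ for an explicit nonzero ground state $\Psi(g_{\partial\BOmega})\in\ker H_{\mr\Lambda_N^{(d)}}$ obtained by freezing the boundary variables in the edge singlets. Your argument suffices for this lemma, but the paper's identification buys more: it shows $\mr\omega_N(\cdot;\partial\BOmega)$ is a genuine ground state of the interior Hamiltonian for \emph{every} (not just a.e.) choice of $\partial\BOmega$, hence bounded and continuous in $\partial\BOmega$, facts that are used later (e.g.\ in Lemma~\ref{lem:comparison-lemma-2}).

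Two minor remarks. First, your inclusion argument establishes $\mr\Lambda_K^{(d)}\subseteq\mr\Lambda_N^{(d)}$, but what you actually want (and what holds for these nested hexagonal volumes with $K<N$) is $\Lambda_K^{(d)}\subseteq\mr\Lambda_N^{(d)}$; in any case this support constraint is not needed for the Fubini step, which only uses that $|f|^2$ is a boundary function. Second, your a.e.\ positivity argument in fact gives everywhere positivity once you note (as the paper does) that for $N\geq 2$ no edge of $\Lambda_N^{(d)}$ joins two boundary vertices, so for each fixed $\partial\BOmega$ the zero set of $\rho_{\Lambda_N^{(d)}}$ in the interior variables is still null.
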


\begin{proof}
We first show that $\mr{\omega}_N(A;\partial \BOmega)$ is well-defined on all sets with positive measure. For any fixed choice of the boundary variables $\partial \BOmega$, the map $A \mapsto \mr{\omega}_N(A;\partial \BOmega)$ is a ground state of $H_{\mr{\Lambda}_N^{(d)}}$.
	To see this, fix the values of $v_i(\theta_i,\phi_i),u_i(\theta_i,\phi_i)$ for each $i\in\partial\Lambda_N^{(d)}$, and consider the function $g_{\partial \BOmega}$ defined by
	\begin{equation}\label{eq:interior_gs}
		g_{\partial \BOmega} = \prod_{\substack{(i,j)\in\Lambda_N^{(d)}: \\ i\in\partial\Lambda_N^{(d)}}}(u_iv_j-v_iu_j)\in \mathfrak{H}_{\partial \mr{\Lambda}_N^{(d)}}^{\rm gss}.  
	\end{equation}
Above, we observe that any site $j\in\Lambda_N^{(d)}$ that neighbors $i\in\partial\Lambda_N^{(d)}$ is necessarily an interior site for all $N\geq 2$, and so $g_{\partial \BOmega}$ is nonzero. By Theorem~\ref{thm:gss}, $\Psi(g_{\partial \BOmega})\in\ker(H_{\mr{\Lambda}_N^{(d)}})$, and, as a consequence,  $\mr{Z}_N(A;\partial \BOmega) = \braket{\Psi(g_{\partial \BOmega})}{A\Psi(g_{\partial \BOmega})}$.
	This implies that $\mr{Z}_N(\1) = \norm{\Psi(g_{\partial  \BOmega})}^2 \neq 0$, and $\mr{\omega}_N(A; \partial \BOmega)$ is a bounded, continuous function of the boundary variables for each $A$. Hence, \eqref{bulk_map} is well-defined, and so too is $\mr{\omega}_N(A;\partial \BOmega)$. As a consequence, \eqref{fv_gs_exp} follows immediately from the matrix element formula \eqref{eqn:gs_matrix_entries} since
	\begin{align*}
		\ip{ \Psi (f), A \Psi(f)}
		& =  \int d\BOmega^{\Lambda_N^{{(d)}}} |f|^2 \rho_{\Lambda_N^{(d)}} A(\BOmega) \\
		& = \int d\BOmega^{\partial\Lambda_N^{{(d)}}} |f|^2 \int d\BOmega^{\mr{\Lambda}_N^{(d)}} ~ \rho_{\Lambda_N^{(d)}}
		\left[ \frac{\int d\BOmega^{\mr{\Lambda}_N^{(d)}}\rho_{\Lambda_N^{(d)}} A(\BOmega)  }{\int d\BOmega^{\mr{\Lambda}_N^{(d)}} ~ \rho_{\Lambda_N^{(d)}}} \right].
	\end{align*}
\end{proof}

We now introduce the \emph{bulk state}, $\omega_N(A)$, which we show well-approximates $\braket{\Psi(f)}{A \Psi(f)}$ as in \eqref{fv_gs_exp} when $K<<N$. This is motivated from averaging the bulk-boundary function $\mr{Z}_N(A;\partial \BOmega)$ over the possible values of the boundary variables. Explicitly, $\omega_N(A):= Z_N(A)/Z_N$ where
\begin{align}\label{eq:unnormalized}
	Z_N(A) := \int d\rho_{\Lambda_{N}^{(d)}}~ A(\BOmega), \quad A\in \cA_N^{(d)}
\end{align}
and $Z_N := Z_N(\1)$. Note that, if $A \in \cA_K^{(d)}$ with $K<N$ one has that, indeed,
\[
  Z_N(A) = \int d\BOmega^{\partial \Lambda_N^{(d)}} \mr{Z}_N(A;\partial \BOmega).
\]

It is not immediately obvious from \eqref{eq:unnormalized} if $\omega_N$ is a ground state for $H_{N}^{(d)}$, or even if it is positive on all of $\caA_N^{(d)}$. However, it is a ground state of $H_K^{(d)}$ for all $K<N$. To see this, let us consider the AKLT model obtained by replacing the spin-3/2 at all boundary sites $x\in\partial \Lambda_N^{(d)}$ with a spin-1, where the nearest-neighbor interaction is still defined as the orthogonal projection onto the largest spin subspace between any pair of adjacent sites. The natural variation of Theorem~\ref{thm:gss} applies in this case, yielding a unique ground state given by $\Psi_N = \prod_{(i,j)\in\Lambda_N^{(d)}}(u_iv_j-v_iu_j)$.
We observe that, for $A \in \cA_K^{(d)}$ with $K<N$,
\[
  Z_N(A) = \braket{\Psi_N}{A\Psi_N}, \quad \quad Z_N = \norm{\Psi_N}^2.
\]
As this modified model does not change the spin or interaction terms for sites of $\Lambda_K^{(d)}$, $\omega_N$ is then a ground state of $H_K^{(d)}$ by frustration-freeness.

Notice that for any normalized $\Psi(f)\in\ker H_N^{(d)}$ and observable $A\in\cA_K^{(d)}$ with $K<N$ one
\begin{equation}\label{eq:bulk-groun-state-comparison}
	| \ip{\Psi(f), A \Psi(f)} - \omega_{N}(A) | = \left| \int d\rho_{\Lambda_{N}^{(d)}} \,|f|^2 \, \left[ \mr{\omega}_N(A;\partial \BOmega)-\omega_N(A)  \right] \right|,
\end{equation}
where we have used that $\int d\rho_{{\Lambda_{N}^{(d)}}} \,|f|^2 =\|\Psi(f)\|^2=1 $. The ground state indistinguishability result, Theorem~\ref{thm:indistinguishable}, will then be a consequence of producing an upper bound on the rate at which $\sup_{\partial \BOmega}|\mr{\omega}_N(A;\partial \BOmega)-\omega_N(A)|\to 0$ as $N\to\infty$. This will be achieved using a cluster expansion associated with a hard core polymer description of $\omega_N(A)$ and $\mr{\omega}_N(A;\partial \BOmega)$, the latter of which we now discuss.

\subsection{Graphs and weights for the hard core polymer representation} \label{sec:polymers_weights}
To bound the right hand side of \eqref{eq:bulk-groun-state-comparison}, the maps $Z_N(A)$ and $\mr{Z}_N(A;\partial \BOmega)$ will be rewritten in terms of a set of polymers and weight functions. The sets used for each map will be slightly different, and so we introduce these in a rather general setting. We begin by establishing some basic graph notation and conventions.

\begin{definition}
  Two connected subgraphs $G$ and $H$ of $\Gamma^{(d)}$ will be called \textit{(pairwise) connected}, denoted $G \disjoint H,$ if $G \cup H$ is a connected graph. Otherwise, $G$ and $H$ are \textit{not connected}, and we write $G | H$. More generally, if $\{G_i:i=1,\ldots n\}$ and $\{H_j: j=1,\ldots m\}$ are the connected components of graphs $G$ and $H$, respectively, we say $G$ and $H$ are \emph{not connected}, denoted $G|H$, if $G_i|H_j$ for all $i$ and $j$. Otherwise, $G$ and $H$ are \emph{connected} and we write $G \disjoint H$.
\end{definition}
Note that if $G$ and $H$ are connected subgraphs of $\Gamma^{(d)}$ then $G|H$ if and only if $\caV_G\cap \caV_H= \emptyset$. 
\begin{definition}\label{def:hardcore}
A collection of connected graphs $\{G_1, \ldots, G_n\}$ is said to be hard core if they are pairwise not connected, i.e. $G_k|G_l$ for all $k\neq l$.
\end{definition}


We are now ready to introduce the set of polymers of interest. The particular subgraphs of interest are connected graphs $\phi\subseteq \Gamma^{(d)}$ such that $1\leq \deg_\phi(v)\leq 2$ for all $v\in \phi$. Such graphs will be called \textit{self-avoiding}, and are partitioned into the set of \emph{closed loops} $\caC^{(d)}$, and the set of \emph{self-avoiding walks} $\caW^{(d)}$:
\begin{align}
	\caC^{(d)} &: = \{ \phi\subseteq \Gamma^{(d)} \text{ connected} : \deg_\phi(v) = 2 \; \forall \, v\in\caV_\phi\}\label{eq:all_loops} \\
	\caW^{(d)} & := \{ \phi\subseteq \Gamma^{(d)} \text{ connected} : 1\leq \deg_\phi(v) \leq 2 \; \forall \, v\in\caV_\phi\} \setminus \caC^{(d)}\label{eq:all_walks}
\end{align}
Each self-avoiding walk $\phi$ has exactly two vertices $\{v,w\}$ such that $\deg_\phi(v)=\deg_\phi(w)=1$, called the \emph{endpoints}, and all other vertices have degree two in $G$. For convenience, we will denote $\operatorname{ep}(\phi)$ the set of endpoints of a self-avoiding walk $\phi$.

The subset of self-avoiding walks of interest $\caS^{(d)}\subsetneq \caW^{(d)}$ are those whose endpoints belong to $\Gamma^{(0)}$:
\begin{equation}
  \caS^{(d)} := \{ \phi \in \caW^{(d)} : \operatorname{ep}(\phi) \subset \Gamma^{(0)} \}.
\end{equation}
The set of all possible polymers is then given by
\begin{equation}\label{eq:all_polymers}
	\caP^{(d)}:= \caC^{(d)}\cup \caS^{(d)}.
\end{equation}

Since the endpoints of every walk from $\caS^{(d)}$ belong to $\Gamma^{(0)}$, the map
\begin{equation}\label{polymer_bijection}
	\iota_d: \cP^{(d)} \to \cP^{(0)}
\end{equation}
obtained from replacing the spin-1 chain between $(v,w)\in\Gamma^{(0)}$ by an edge is a bijection. As a convention, the ``length'' of a polymer is taken to be the number of edges in its undecorated representative:

\begin{definition}\label{def:length}
	For any undecorated polymer $\phi \in \mc{P}^{(0)}$, define the \textit{length} to be the number of edges in $\phi$, i.e. $\ell(\phi) = | \caB_\phi| .$ For any decorated polymer $\phi \in\cP^{(d)}$, define the length by $\ell(\phi) := \ell(\iota_d(\phi))$ where $\iota_d$ is the bijection from \eqref{polymer_bijection}.
\end{definition}
Note that the total number of edges $|\caB_\phi|$ for any $\phi\in\caP^{(d)}$ is $(d+1)\ell(\phi)$. Since $\Gamma^{(0)}$ is bipartite, any closed loop $\phi\in\caC^{(0)}$ necessarily has even length.

We will need to consider specific subsets of $\caP^{(d)}$ in order to derive the polymer representation of $Z_N(A)$ and $\mr{Z}_N(A; \partial \BOmega)$. For the convenience of the reader, we introduce them now. 

For $0 < K < N$, let $\caC_{N,K}^{(d)}\subseteq\caC^{(d)}$ the set of closed loops in $\Lambda_N^{(d)}$ wich do not intersect $\Lambda_K^{(d)}$:
\begin{equation}
  \caC_{N,K}^{(d)} := \{ \phi \in \caC^{(d)} : \phi \subset \Lambda_N^{(d)} ,\,  \phi|\Lambda_K^{(d)} \},
\end{equation}
and $\caC_{N,0}^{(d)} := \{ \phi \in \caC^{(d)} : \phi \subset \Lambda_N^{(d)} \}$.
Moreover, let $\mc{S}_{N,K}^{(d)}\subseteq \caS^{(d)}$ be the set of self-avoiding walks with edges in $\mc{B}_N^{(d)} \setminus \mc{B}_K^{(d)}$ and endpoints in $\Lambda_K^{(d)}$:
\begin{equation}
  \mc{S}_{N,K}^{(d)} := \{ \phi \in \caS^{(d)} : \mc{B}_\phi \subset \mc{B}_N^{(d)} \setminus \mc{B}_K^{(d)} ,\, \operatorname{ep}(\phi) \subset \partial \Lambda_K^{(d)} \}.
\end{equation}
Then we define, for $K >0$,
\begin{align}\label{eq:newpolymer}
	\mc{P}_{N,K}^{(d)} &:= \caC_{N,K}^{(d)} \cup\mc{S}_{N,K}^{(d)}.
\end{align}
and $\mc{P}_{N,0}^{(d)} = \mc{C}^{(d)}_{N,0}$. This is the set of polymers we will use in the representation of $Z_N(A)$. Note that $\iota_d(\caP_{N,K}^{(d)})=\caP_{N,K}^{(0)}$.

In the representation of $\mr{Z}_N(A;\partial \BOmega)$ we will also have to consider self-avoiding walks that begin or end at $x\in\partial\Lambda_N^{(d)}$. Explicitly, for $K>0$,
we denote by
\begin{equation}
\mr{\caS}_{N,K}^{(d)} := \{ \phi \in \caS^{(d)} : \mc{B}_\phi \subset \mc{B}_N^{(d)} \setminus \mc{B}_K^{(d)} ,\, \operatorname{ep}(\phi) \subset \partial \Lambda_K^{(d)}\cup\partial\Lambda_N^{(d)}\}
\end{equation}
the set of all self-avoiding walks with edges contained in $\caB_N^{(d)}\setminus\caB_K^{(d)}$ and endpoints in $\partial\Lambda_K^{(d)}\cup\partial\Lambda_N^{(d)}$. For the special case $K=0$, we consider only the self-avoiding walks which begin and end at $\partial \Lambda_N^{(d)}$:
\begin{equation}
\mr{\caS}_{N,0}^{(d)} := \{ \phi \in \caS^{(d)} : \mc{B}_\phi \subset \mc{B}_N^{(d)} \setminus \mc{B}_K^{(d)} ,\, \operatorname{ep}(\phi) \subset \partial \Lambda_N^{(d)} \}.
\end{equation}
Then similarly to before, we define for $K \ge 0$
\begin{equation}\label{eq:polymers2}
\mr{\caP}_{N,K}^{(d)} := \caC_{N,K}^{(d)} \cup \mr{\caS}_{N,K}^{(d)}.
\end{equation}

Now that we have introduced the polymer sets of interest, we turn to introducing a weight function $W_d$ on $\cP^{(d)}$ that will express how much a give polymer contributes to the polymer representation of $Z_N(A)$ or $\mr{Z}_N(A;\partial \mathbf{\Omega})$.

The weight of any $\phi\in\caP^{(d)}$ is defined analogously to those given by Kennedy, Lieb and Tasaki in \cite{Kennedy1988}. This is a consequence of evaluating certain integrals which naturally arise when calculating ground state expectations. For any closed loop $\phi\in\caC^{(d)}$, the weight $W_d(\phi)$ is
\begin{align}\label{eq:loopweight}
	W_d(\phi):= (1/3)^{ (d+1) \ell(\phi) -1} = \int d \BOmega^\caV ~ \prod _{(i,j) \in \phi}-\Omega_i \cdot \Omega _ j .
\end{align}
where $\caV \subset \Gamma^{(d)}$ is any finite subset of vertices such that $\mc{V}_\phi \subset \caV$. Similarly, if $\phi\in\caS^{(d)}$ is a self-avoiding walk with endpoints $v,w\in\Gamma^{(0)}$, the weight function $W_d(\phi)$ is
\begin{align}
	W_d(\phi):= (-1/3)^{ (d+1)\ell(\phi)-1 }\partial \phi (\BOmega) = \int d \BOmega^{\caV} ~ \prod _{(i,j) \in \phi} -\Omega_i \cdot \Omega _ j \label{eq:walk-weight}
\end{align}
where $ \partial \phi (\BOmega) : = - \Omega_{v} \cdot \Omega_{w}$ and $\caV \subset \Gamma^{(d)}$ is any finite set of vertices such that $\mc{V}_\phi \cap \caV = \mc{V}_\phi \setminus \set{v,w}$. 

In either case above, (\ref{eq:loopweight}) and (\ref{eq:walk-weight}) and their independence of the set $\caV$ are easy to verify by first integrating $\int d \BOmega^x  = 1$ for all sites $x\in \caV\setminus \mc{V}_\phi$, and using $\deg_\phi(x)=2$ for every $x\in  \caV\cap\caV_\phi$
to evaluate
\[\int d \BOmega^{\caV\cap\caV_\phi} \prod _{(v,w) \in \phi} -\Omega_v \cdot \Omega _w = (-1)^{(d+1)\ell(\phi)}\int d \BOmega^{\caV\cap\caV_\phi}  \prod _{(v,w) \in \phi} \Omega_v \cdot \Omega _w \]
by successively applying
\begin{equation}\label{one_iteration}
	\int d \Omega_x(\Omega_y\cdot \Omega_x)(\Omega_x\cdot\Omega_z) = \frac{1}{3}\Omega_y\cdot\Omega_z .
\end{equation}
The exponents in \eqref{eq:loopweight}-\eqref{eq:walk-weight} count the number of vertices to which \eqref{one_iteration} is applied. In the case of the closed loop, integrating over the final site $x\in\mc{V}_{\phi}$ yields $\int d \Omega_x(\Omega_x\cdot\Omega_x) = 1$ as $\Omega_x$ has unit length.  The expression \eqref{one_iteration} can be explicitly computed using \eqref{integrand_data} and \eqref{eq:sphereical_coord}.

\subsection{The hard-core polymer representation of $Z_N$}\label{sec:loop}In \cite{Kennedy1988}, Kennedy, Lieb and Tasaki used a loop gas representation  of the ground state with a hard core condition to evaluate ground state expectations when $d=0$. Their methods can also be used for the decorated models. Here we prove Lemma~\ref{prop:multiexpansion2} which establishes a modified version of this representation for 
\begin{equation}\label{eq:ZN_expanded}
Z_N(A) = 2^{-|\cB_N^{(d)}|} \int d\BOmega^{\Lambda_{N}^{(d)}}~ \prod _{(i,j)\in \Lambda_{N}^{(d)}} (1-\Omega_i \cdot \Omega_j) A(\BOmega).
\end{equation}

To rewrite $Z_N(A)$, we follow \cite{Kennedy1988} and distribute the product from \eqref{eq:ZN_expanded} to find
\begin{equation}\label{eq:product_to_sum}
	\prod _{(i,j)\in \Lambda_{N}^{(d)}} (1-\Omega_i \cdot \Omega_j) = \sum_{ G\in \cG_N^{(d)}} \prod_{(i,j)\in G}(-\Omega_i \cdot \Omega_j).
      \end{equation}
where $ \cG_N^{(d)}$ is the collection of all subgraphs of $\Lambda_N^{(d)}$ with no isolated vertices
\begin{equation}
  \cG_N^{(d)} := \{ G\subseteq \Lambda_N^{(d)} : \deg_G(v)> 0 \; \forall v\in G\}\,.
\end{equation}

Inserting this into \eqref{eq:ZN_expanded}, the sum is then simplified by removing subgraphs for which
\begin{equation}\label{one_graph_int}
	\int d\BOmega^{\Lambda_{N}^{(d)}} \prod_{(i,j)\in G}(-\Omega_i \cdot \Omega_j)  A(\BOmega) = 0.
\end{equation}
The graphs that remain are characterized by their connected components, which necessarily form a hard core set.

The next result, which shows $Z_N(A)$ can be written in terms of hard core subsets of $\caP_{N,K}^{(d)}$, is a slight simplification of \cite[Equation~4.14]{Kennedy1988}. We follow their method of proof, which is a consequence of observing that the variable $\Omega$ associated with a single site satisfies
\begin{equation}\label{eq:transformation}
	\int d \Omega ~ f(-\Omega) = \int d\Omega ~ f(\Omega)\,.
\end{equation}
As the integral is taken over $[0,\pi]\times S^1$ , this can easily be verified from recognizing that the measure $d\Omega$ is invariant under the diffeomorphism $R: [0, \pi ] \times S^1 \to [0,\pi]\times S^1$ that sends $\Omega \mapsto -\Omega$ defined by $R(\theta,\phi) = (\pi - \theta, \phi + \pi)$.

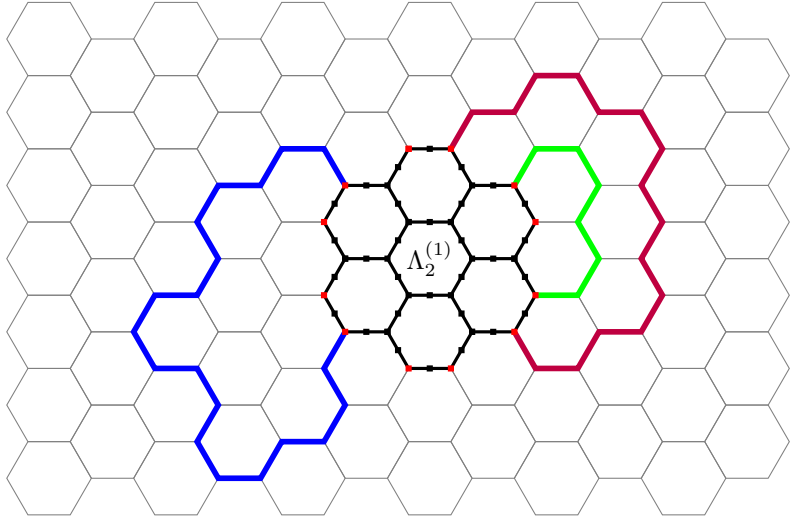
\begin{figure}
	\begin{center}
		\begin{tikzpicture}[
			x=\hexsize,
			y=\hexsize,
			square/.style={draw,fill,minimum size=1.5pt,inner sep=0pt},
			dot/.style={circle,circle,fill,minimum size=2pt,inner sep=0pt}
			]
			\foreach \x in {1,...,12} {
				\coordinate(X) at (0:1.5*\x);
				\ifodd\x
				\def\ymax{7}
				\coordinate(X) at ($(X)+(0:0.5)+(-120:1)$);
				\else
				\def\ymax{6}
				\fi
				\foreach \y in {1,...,\ymax} {
					\coordinate (\x-\y) at ($(X)+(60:\y)+(120:\y)$);
					\draw[gray] (\x-\y) +(-60:1)
					\foreach \z [remember=\z as \lastz (initially 5)] in {0,...,5} {
						-- coordinate(\x-\y-\lastz-m) +(\z*60:1) coordinate(\x-\y-\z)} -- cycle;} }
			\node at (7-4) {\small{$\Lambda_2^{(1)}$}};
			\draw[line width = .75mm, draw = blue]
			(5-5-0) -- (5-5-1) -- (5-5-2) -- (5-5-3) -- (4-4-2) -- (4-4-3) -- (4-4-4)-- (4-3-3)
			-- (3-3-2) -- (3-3-3) -- (3-3-4) -- (3-3-5) -- (4-2-4) -- (4-1-3) -- (4-1-4) -- (4-1-5)
			-- (4-1-0) -- (5-1-1) -- (6-1-2) -- (5-2-1) -- (6-2-2);
			\draw[line width = .75mm, draw=green]
			(8-4-1) -- (9-5-2) -- (9-5-1) -- (9-5-0) -- (9-5-5) -- (9-4-0) -- (9-4-5)--(9-4-4);
			\draw[line width = .75mm, draw = purple]
			(7-5-1) -- (8-5-2) -- (8-5-1) -- (9-6-2) -- (9-6-1) -- (9-6-0) -- (10-5-1) --
			(10-5-0) -- (10-5-5) -- (10-4-0) -- (10-4-5) -- (10-3-0) -- (10-3-5) -- (10-3-4) 
			-- (9-3-5) -- (9-3-4) -- (9-3-3) ;  
			\draw[very thick]
			(7-5-2) node[square, red, scale=.75]{}--  (7-5-1-m) node[square, scale=.75]{}--  (7-5-1) node[square, red, scale=.75]{}--  (7-5-0-m) node[square, scale=.75]{}--  (7-5-0) node[square, scale=.75]{}--  (8-4-1-m) node[square, scale=.75]{}--  (8-4-1) node[square, red, scale=.75]{}--  (8-4-0-m) node[square, scale=.75]{}--  (8-4-0) node[square,red, scale=.75]{}--  (8-4-5-m) node[square, scale=.75]{}--  (8-4-5) node[square, scale=.75]{}--  (8-3-0-m) node[square, scale=.75]{}--  (8-3-0) node[square,red, scale=.75]{}--  (8-3-5-m) node[square, scale=.75]{}--  (8-3-5) node[square,red, scale=.75]{}--  (8-3-4-m) node[square, scale=.75]{}--  (8-3-4) node[square, scale=.75]{}--  (7-3-5-m) node[square, scale=.75]{}--  (7-3-5) node[square,red, scale=.75]{}--  (7-3-4-m) node[square, scale=.75]{}--  (7-3-4) node[square,red, scale=.75]{}--  (7-3-3-m) node[square, scale=.75]{}--  (7-3-3) node[square, scale=.75]{}--  (6-3-4-m) node[square, scale=.75]{}--  (6-3-4) node[square,red, scale=.75]{}--  (6-3-3-m) node[square, scale=.75]{}--  (6-3-3) node[square,red, scale=.75]{}--  (6-3-2-m) node[square, scale=.75]{}--  (6-3-2) node[square, scale=.75]{}--  (6-4-3-m) node[square, scale=.75]{}--  (6-4-3) node[square, red, scale=.75]{}--  (6-4-2-m) node[square, scale=.75]{}--  (6-4-2) node[square, red, scale=.75]{}--  (6-4-1-m) node[square, scale=.75]{}--  (6-4-1) node[square, scale=.75]{}--  (7-5-2-m) node[square, scale=.75]{}--  (7-5-2) node[square, red, scale=.75]{};
			\draw[very thick] 
						(7-5-0) --  (7-5-5-m) node[square, scale=.75]{}--  (7-5-5) node[square, scale=.75]{}--  (7-4-0-m) node[square, scale=.75]{}--  (7-4-0) node[square, scale=.75]{}--  (7-4-5-m) node[square, scale=.75]{}--  (7-4-5) node[square, scale=.75]{}--  (7-4-4-m) node[square, scale=.75]{}--  (7-4-4) node[square, scale=.75]{}--  (7-4-3-m) node[square, scale=.75]{}--  (7-4-3) node[square, scale=.75]{}--  (7-4-2-m) node[square, scale=.75]{}--  (7-4-2) node[square, scale=.75]{}--  (7-5-3-m) node[square, scale=.75]{}--  (7-5-3);
			\draw[very thick]
			(7-4-2) node[square, scale=.75]{}-- (7-4-1-m) node[square, scale=.75]{}-- (7-4-1) node[square, scale=.75]{};
			\draw[very thick]
			(6-4-5)-- (6-4-4-m) node[square, scale=.75]{}-- (6-4-4) node[square, scale=.75]{};
			\draw[very thick]
			(6-3-0)-- (6-3-5-m) node[square, scale=.75]{}-- (6-3-5) node[square, scale=.75]{};
			\draw[very thick]
			(7-4-5)-- (7-3-0-m) node[square, scale=.75]{}-- (7-3-0) node[square, scale=.75]{};
			\draw[very thick]
			(7-4-0)-- (8-4-4-m) node[square, scale=.75]{}-- (8-4-5) node[square, scale=.75]{};
		\end{tikzpicture}
	\end{center}
	\caption{Example of polymers from $\cS_{5,2}^{(1)}$. The decoration is suppressed outside of $\Lambda_2^{(1)}$ for clarity.}
\end{figure}

\begin{lemma}\label{prop:multiexpansion2}
	Fix $N>K\geq 0$ and $d\geq 0$. Then for any $A\in\cA_K^{(d)}$,
	\begin{equation}\label{eq:boundary-expansion}
		Z_N(A)  = \int d\rho_{\Lambda_K^{(d)}} A(\BOmega) \Phi_{N,K}(\BOmega)
	\end{equation}
	where with respect to the weight functions from \eqref{eq:loopweight}-\eqref{eq:walk-weight},
	\begin{equation}\label{eq:PhiNK}
	\Phi_{N,K}(\BOmega) :=  2^{-|\caB_N^{(d)}\setminus\caB_K^{(d)}|}\sum^{h.c.} _{ \set{\phi_1,\ldots, \phi_n}\subseteq \cP_{N,K}^{(d)} }  W_d(\phi_1)\cdots W_d(\phi_n),
	\end{equation}
and the notation indicates the sum is taken over all hard core collections $\set{\phi_1,\ldots, \phi_n} \subset \cP_{N,K}^{(d)}$.
\end{lemma}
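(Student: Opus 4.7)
The plan is to explicitly perform the integration over the outer angles $(\Omega_x)_{x\in\caV_N^{(d)}\setminus\caV_K^{(d)}}$ in $Z_N(A)$ and then identify the result as a sum over hard-core polymer configurations. Since $A\in\cA_K^{(d)}$ only depends on variables at $\Lambda_K^{(d)}$, the factor $2^{-|\cB_N^{(d)}|}\prod_{(i,j)\in\Lambda_N^{(d)}}(1-\Omega_i\cdot\Omega_j)$ in \eqref{eq:ZN_expanded} splits as $2^{-|\cB_K^{(d)}|}\prod_{(i,j)\in\cB_K^{(d)}}(1-\Omega_i\cdot\Omega_j)$ (which, combined with $d\BOmega^{\Lambda_K^{(d)}}$, forms $d\rho_{\Lambda_K^{(d)}}$) times $2^{-|\cB_N^{(d)}\setminus\cB_K^{(d)}|}\prod_{(i,j)\in\cB_N^{(d)}\setminus\cB_K^{(d)}}(1-\Omega_i\cdot\Omega_j)$. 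Integrating the latter product against $d\BOmega^{\caV_N^{(d)}\setminus\caV_K^{(d)}}$ yields a function of the remaining boundary angles $(\Omega_x)_{x\in\partial\Lambda_K^{(d)}}$, and the lemma reduces to identifying this function with the hard-core sum \eqref{eq:PhiNK}.

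Next, as in \eqref{eq:product_to_sum}, I would expand the outer product into a sum over subgraphs $G'$ whose edge sets are contained in $\cB_N^{(d)}\setminus\cB_K^{(d)}$, contributing $\prod_{(i,j)\in G'}(-\Omega_i\cdot\Omega_j)$. Integrating one outer vertex at a time, the identity \eqref{one_iteration} produces a factor of $1/3$ and a contracted dot product of the two other endpoints for each interior vertex of degree $2$ in $G'$, while the symmetry \eqref{eq:transformation} forces every vertex $v\in\caV_N^{(d)}\setminus\caV_K^{(d)}$ with odd degree $1$ or $3$ in $G'$ (the only other degrees possible in our lattice) to contribute $0$, as the corresponding integrand is an odd polynomial in $\Omega_v$. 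Only subgraphs $G'$ in which each interior vertex has degree $0$ or $2$ survive. Since every vertex of $\partial\Lambda_K^{(d)}$ has exactly one incident edge in $\cB_N^{(d)}\setminus\cB_K^{(d)}$, it has degree $0$ or $1$ in any such $G'$, so each connected component is either a closed loop avoiding $\Lambda_K^{(d)}$ (hence in $\cC_{N,K}^{(d)}$) or a self-avoiding walk with both endpoints in $\partial\Lambda_K^{(d)}$ (hence in $\cS_{N,K}^{(d)}$). A direct comparison with \eqref{eq:loopweight}--\eqref{eq:walk-weight} shows that the integral over each component evaluates precisely to its weight $W_d(\phi)$, so the total integral factorizes as $\prod_i W_d(\phi_i)$.

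The main point to verify with care is that the surviving components are automatically hard-core in the sense of Definition~\ref{def:hardcore}, so that the sum over valid subgraphs is in bijection with the sum over hard-core families in \eqref{eq:PhiNK}. If two components shared an interior vertex $v$, each would contribute at least $1$ to $\deg_{G'}(v)$, and the only way to keep $\deg_{G'}(v)\in\{0,2\}$ would be to have each contribute exactly $1$, making $v$ an endpoint of both; but walk endpoints must lie in $\partial\Lambda_K^{(d)}$, a contradiction. Sharing a vertex in $\partial\Lambda_K^{(d)}$ is likewise excluded by the single-outer-edge observation used above. This rules out all overlaps and produces exactly $\Phi_{N,K}(\BOmega)$, completing the proof.
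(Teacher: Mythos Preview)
Your proof is correct and follows essentially the same strategy as the paper. The only organizational difference is that you split the edge product into the $\cB_K^{(d)}$ and $\cB_N^{(d)}\setminus\cB_K^{(d)}$ factors \emph{before} expanding, whereas the paper expands the full product over $\cB_N^{(d)}$ first and then decomposes each surviving subgraph $G$ into an inner piece $H'\in\cG_K^{(d)}$ and an outer piece $H$; your order avoids having to resum the $H'$ contributions back into $\rho_{\Lambda_K^{(d)}}$ at the end. One small remark: your final paragraph on the hard-core property is more than is needed---connected components of a fixed subgraph $G'$ are vertex-disjoint by definition, so the forward direction of the bijection is automatic, and what actually requires the single-outer-edge observation at $\partial\Lambda_K^{(d)}$ is that any walk component of $G'$ must have its endpoints there (so that it lands in $\cS_{N,K}^{(d)}$).
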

The sum appearing in the definition of $\Phi_{N,K}$ is finite, and moreover, each product $W_d(\phi_1)\cdots W_d(\phi_n)$ is necessarily a (real-valued) monomial in the variables $\Omega_{x}$ associated with the boundary $x\in\partial \Lambda_K^{(d)}$. Also notice that $Z_N = \int d\rho_{\Lambda_{K}^{(d)}} \Phi_{N,K}(\BOmega)$.

\begin{proof} 
Fix a subgraph $G\in \caG_N^{(d)}$. If there is a $x\in G\setminus \Lambda_K^{(d)}$ such that $\deg_G(x)$ is odd, then applying \eqref{eq:transformation} to this vertex implies
\begin{equation}\label{eq:odd_degree}
	\int d\Omega_x \prod_{(x,y)\in G}(-\Omega_x \cdot \Omega_y) = 0.
\end{equation}	
As the symbol $A(\BOmega)$ is constant in the variables $\Omega_v$ for all $v\in \Lambda_N^{(d)}\setminus\Lambda_K^{(d)}$, by first integrating over this vertex $x$ one sees that $G$ satisfies \eqref{one_graph_int} by \eqref{eq:odd_degree}. By definition of $\caG_{N}^{(d)}$, the degree of any vertex $x\in G$ is at least one. Since the degree of any vertex is at most three,
\begin{equation}\label{eq:step1_hc}
Z_N(A) = 2^{-|\caB_N^{(d)}|}\sum_{G\in\caG_{N,K}^{(d)}}\int d\BOmega^{\Lambda_{N}^{(d)}} \prod_{(x,y)\in G}(-\Omega_x \cdot \Omega_y)  A(\BOmega)
\end{equation}
where $\caG_{N,K}^{(d)}=\left\{G\in\cG_N^{(d)} : \deg_G(x)=2, \; \text{for all}\; x\in G\setminus\Lambda_K^{(d)}\right\}.$

Now, fix any $G\in\caG_{N,K}^{(d)}$ and decompose $G=H\cup H'$ where $H$, respectively $H'$, is the (possibly empty) subgraph of $G$ whose edges belong to $\caB_N^{(d)}\setminus \caB_K^{(d)}$, respectively $\caB_K^{(d)}$. Note that the only vertices these graphs can have in common belong to $\partial\Lambda_K^{(d)}$.

Each connected component of $H$ is necessarily a subgraph of a connected component of $G$ and, moreover, the degree constraint on the vertices of $G$ guarantees that each connected component of $H$ is an element of $\caP_{N,K}^{(d)}$.
Now, recall that by construction, $\Lambda_K^{(d)}$ is a union of $d$-decorated hexagons, and therefore every boundary vertex $v\in\partial\Lambda_K^{(d)}$ has only one edge which does not belong to $\caB_K^{(d)}$. In other words, for each $v\in\partial\Lambda_K^{(d)}$ there exists a unique $(v,w)\in\caB_N^{(d)}\setminus\caB_K^{(d)}$. Hence, any connected component of $H$ that intersects $\partial\Lambda_K^{(d)}$ must be a self-avoiding walk. Therefore, the connected components of $H$ are a hard core subset of $\caP_{N,K}^{(d)}$. It trivially follows that $H'\in\caG_K^{(d)}$.

 Conversely, the graph union $H'\cup H$ of a graph $H'\in\caG_K^{(d)}$ and a hard core set $H\subseteq \caP_{N,K}^{(d)}$ is an element of $\caG_{N,K}^{(d)}$, and so $\caG_{N,K}^{(d)}$ is in one-to-one correspondence with such graph unions. Therefore, \eqref{eq:step1_hc} can be rewritten as
	\begin{align*}
	Z_N(A)
	& =2^{-|\cB_N^{(d)}|} \sum_{H'\in \caG_K^{(d)}}\sum^{h.c.} _{ \set{\phi_1,\ldots, \phi_n}\subseteq \cP_{N,K}^{(d)} }  \int d\BOmega^{\Lambda_N^{(d)}} ~ \prod_{i=1}^n\prod _{(v,w)\in \phi_i} (-\Omega_v\cdot \Omega_w)\prod _{(l,k) \in H'} (-\Omega_l \cdot \Omega_k) A(\BOmega).
\end{align*}

Since $A(\BOmega)$ is constant (in fact, one) on any site $x\in\Lambda_N^{(d)}\setminus\Lambda_K^{(d)}$ as $\supp(A)\subseteq \Lambda_K^{(d)}$, integrating over these sites and applying \eqref{eq:loopweight}-\eqref{eq:walk-weight} simplifies the expansion to
\begin{align}
	Z_N(A)  = 2^{-|\cB_N^{(d)}|}\sum_{H'\in \caG_K^{(d)}}  \int \, d \BOmega^{\Lambda_{K}^{(d)}}\prod_{(l,k)\in H'} (-\Omega_l \cdot \Omega_k)  A(\BOmega)\Phi_{N,K}(\BOmega),
\end{align}
from which \eqref{eq:boundary-expansion} follows from \eqref{eq:product_to_sum} and \eqref{def_rho}.
\end{proof}

\subsection{The hard-core loop representation of $\mr{Z}_{N}$}
A similar process can also be used to construct a hard core polymer representation of  $\mr Z_N(A;\partial \BOmega)$ for any $A \in \A_{K}^{(d)}$ with $K<N$, with the difference of having to consider polymers in $\mr{\caP}_{N,K}^{(d)}$ instead.

\begin{lemma}\label{lem:loop_2} Fix $N>K\geq 0$ and $d\geq 0$. Then for any $A\in\caA_{K}^{(d)}$,
	\begin{equation}\label{eq:multiexpansion3}
	\mr{Z}_N(A;\partial \BOmega) =   \int d \rho_{\Lambda_K^{(d)}} A(\BOmega) \mr{\Phi}_{N,K}(\BOmega),
	\end{equation}
where with respect to the weight functions from \eqref{eq:loopweight}-\eqref{eq:walk-weight},
\begin{equation}\label{eq:sum2}
	\mr{\Phi}_{N,K}(\BOmega) := 2^{-|\caB_N^{(d)}\setminus\caB_K^{(d)}|} \sum^{h.c.} _{ \set{\phi_1,\ldots, \phi_n}\subseteq 
		 \mr{\caP}_{N,K}^{(d)}}  W_d(\phi_1)\cdots W_d(\phi_n).
\end{equation}
\end{lemma}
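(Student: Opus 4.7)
The plan is to follow the proof of Lemma~\ref{prop:multiexpansion2}, modified to account for two key differences: the integration in $\mr{Z}_N(A;\partial\BOmega)$ runs only over the interior variables $\BOmega^{\mr{\Lambda}_N^{(d)}}$ with $\partial\BOmega$ held fixed, and correspondingly the parity argument based on $\Omega\mapsto -\Omega$ can only be invoked at interior sites outside $\Lambda_K^{(d)}$. First, I would expand $\rho_{\Lambda_N^{(d)}}$ using \eqref{eq:product_to_sum} and insert it into the definition of $\mr{Z}_N(A;\partial\BOmega)$. For any $x\in\mr{\Lambda}_N^{(d)}\setminus\Lambda_K^{(d)}$, the symbol $A(\BOmega)$ is independent of $\Omega_x$, so applying \eqref{eq:transformation} as in \eqref{eq:odd_degree} forces $\deg_G(x)$ to be even; since $\deg(x)\le 3$, this means $\deg_G(x)\in\{0,2\}$. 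No parity constraint is imposed at vertices of $\partial\Lambda_N^{(d)}$, which are not integrated. Thus the contributing subgraphs form
$$\mr{\caG}_{N,K}^{(d)} := \{G\in\caG_N^{(d)} : \deg_G(x)=2\;\text{for all}\; x\in G\setminus(\Lambda_K^{(d)}\cup\partial\Lambda_N^{(d)})\}.$$

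Next, I would decompose each $G\in\mr{\caG}_{N,K}^{(d)}$ as $G=H\cup H'$, where $H$ collects the edges in $\caB_N^{(d)}\setminus\caB_K^{(d)}$ and $H'$ those in $\caB_K^{(d)}$. Using, as in Lemma~\ref{prop:multiexpansion2}, that each vertex of $\partial\Lambda_K^{(d)}$ has a unique incident edge outside $\caB_K^{(d)}$, the connected components of $H$ are self-avoiding: every vertex of $H$ outside the two boundary sets has degree exactly $2$. Components with no vertices of degree $1$ are closed loops belonging to $\caC_{N,K}^{(d)}$, while the remaining components are self-avoiding walks whose endpoints must lie in $\partial\Lambda_K^{(d)}\cup\partial\Lambda_N^{(d)}$—precisely the elements of $\mr{\caS}_{N,K}^{(d)}$. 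Being pairwise vertex-disjoint, these components form a hard-core family in $\mr{\caP}_{N,K}^{(d)}$, and the pair $(H',\{\phi_1,\ldots,\phi_n\})\in\caG_K^{(d)}\times\{\text{hard-core families in }\mr{\caP}_{N,K}^{(d)}\}$ determines $G$ uniquely.

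Finally, I would integrate over the interior variables lying outside $\Lambda_K^{(d)}$. Since the $\phi_i$ are vertex-disjoint and share no vertices with $H'$ outside $\partial\Lambda_K^{(d)}$, the integrals factor, and each single-polymer integral evaluates to $W_d(\phi_i)$ by \eqref{eq:loopweight}--\eqref{eq:walk-weight} applied with an integration set $\caV$ consisting of the internal vertices of $\phi_i$. For walks terminating at $\partial\Lambda_N^{(d)}$, the endpoint variables remain in $\partial\BOmega$ and enter via $\partial\phi_i(\BOmega)$, which is exactly the flexibility built into \eqref{eq:walk-weight}. Collecting the sum over $H'\in\caG_K^{(d)}$ against $\prod_{(l,k)\in H'}(-\Omega_l\cdot\Omega_k)\,A(\BOmega)$ and splitting $2^{-|\caB_N^{(d)}|}=2^{-|\caB_K^{(d)}|}\cdot 2^{-|\caB_N^{(d)}\setminus\caB_K^{(d)}|}$ then reconstructs $d\rho_{\Lambda_K^{(d)}}A(\BOmega)\mr{\Phi}_{N,K}(\BOmega)$ and yields \eqref{eq:multiexpansion3}. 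No step presents a substantive obstacle; the only subtlety is the bookkeeping at $\partial\Lambda_N^{(d)}$, which is absorbed cleanly by the existing definition of $W_d$.
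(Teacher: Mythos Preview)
Your proposal is correct and follows essentially the same approach as the paper, which simply remarks that the argument is analogous to that of Lemma~\ref{prop:multiexpansion2} with the single modification that the parity identity \eqref{eq:transformation} cannot be invoked at vertices of $\partial\Lambda_N^{(d)}$ since these are not integrated in \eqref{bulk_map}. You have correctly identified this as the only substantive change and carried the decomposition $G=H\cup H'$ and the weight evaluation through to $\mr{\caP}_{N,K}^{(d)}$ exactly as the paper intends.
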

The proof of this result follows analogously to that of Lemma~\ref{prop:multiexpansion2} with one small change based on noticing that $\deg_{\Lambda_N^{(d)}}(x) = 2$ for all $x\in\partial\Lambda_N^{(d)}$. Unlike the proof of the previous result, one cannot apply \eqref{eq:transformation} to these sites since the integration in \eqref{bulk_map} is only over the interior vertices. Therefore, graphs $G\in\caG_N^{(d)}$ with only one edge connecting to a boundary site $x\in\partial\Lambda_N^{(d)}$ do not necessarily satisfy \eqref{one_graph_int}. This extends the set of polymers from \eqref{eq:PhiNK} to including walks with endpoints at such sites, resulting in the definition from \eqref{eq:polymers2}.

Finally, note that $\mr{\Phi}_{N,K}$ is a function of the variables $\Omega_{v}$ for $v\in \partial \Lambda_{K}^{(d)}\cup \partial \Lambda_{N}^{(d)}$ and, similar to before, $\mr{Z}_{N}(\partial \BOmega)= \int d\rho_{\Lambda_{K}^{(d)}} \mr{\Phi}_{N,K}(\BOmega)$.

\subsection{A comparison lemma}\label{sec:comparison}
The infinite volume ground state $\omega^{(d)}$ from Theorem~\ref{thm:indistinguishable} can be obtained by showing that $(\omega_N(A))_N$ is Cauchy for each local observable $A$. This can be achieved by controlling the difference between $\Phi_{N,K}/Z_{N}$ and $\Phi_{M,K}/Z_{M}$. Lemma \ref{prop:multiexpansion2} shows that a bound depending on the essential supremum of $A(\BOmega)$ exists. To establish the LTQO condition, though, one needs a bound that depends on the operator norm of $A$, which is not immediately obvious. In general, $A$ is the compression of a multiplication operator on $L^2(X,d\BOmega)$ of a compact manifold $X$ to a finite dimensional Hilbert subspace \cite{Berezin1972}, and so one can only expect $\norm{A} \leq \norm{A(\BOmega)}_{L^\infty}$. To illustrate, let $A = \frac{1}{3} \partial_u u \in B(\HH{2})$. Using \eqref{eqn:matrix_entries}-\eqref{eq:symbol} it is easy to calculate 
\[\norm{A} = 1, \qquad \norm{A(\Omega)}_{L^\infty} = \frac{4}{3}.\] 
Given any set of degree 2 vertices $x_1,\ldots, x_n\in \Gamma^{(d)}$, the observable  $A^{(n)} = \bigotimes _{j=1}^n A_{x_j}$ still has operator norm one, but $\norm{ A^{(n)}(\BOmega)}_{L^\infty} = (4/3)^n$.

The next lemma, which is valid for operators $A \in \mr{\caA}_K^{(d)}$ supported on the interior vertices (see \eqref{notation}), can be used to recover a bound which depends on the operator norm. The key insight here is that $\mathfrak{H}_x$ can be viewed as a subspace of $L^2(d\Omega_x)$. By extending the onsite Hilbert space to $\mf{L}_x =L^2(S^2, d\Omega_x)$ at each vertex $x \in  \partial \Lambda_K^{(d)}$ and defining $\mf{L}_K^{(d)}\supset \mathfrak{H}_K^{(d)}$ by
\begin{align}\label{eq:extended_HS}
	\mf{L}_K^{(d)} = \bigotimes_{x\in \partial \Lambda_K^{(d)}} \mf{L}_x \otimes \bigotimes _{x \in \mr{\Lambda}_K^{(d)} } \mf{H}_x ,
\end{align}
one sees that $A$ acts on $\mf{L}_K^{(d)}$ by the inclusion $A \mapsto \1 \otimes A$, see the comment following \eqref{eq:IP}. Moreover, the assumption that $A$ is supported on the interior vertices guarantees that the operator norm on this space agrees with the original quantum spin model:
\[
\|\1 \otimes A\|_{\mf{L}_K^{(d)}} = \|A\|_{\mf{H}_K^{(d)}},
\]
which allows us to prove the following result.

\begin{lemma}\label{lem:comparison-lemma}
	Fix $d\geq 0$ and $0\leq K<M\leq N$. Then, for any $A\in \mr{\caA}_K^{(d)}$ ,
	\begin{equation}
		\bigg{|} \omega_N(A) - \omega_M(A) \bigg{|}
		\leq  \norm{A}~ \norm{\frac{\Phi_{N,K}}{Z_{N}} - \frac{\Phi_{M,K}}{Z_{M}}}_{L_1\left(\rho_{\Lambda_{K}}^{(d)}\right)}.
	\end{equation}
\end{lemma}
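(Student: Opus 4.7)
By applying Lemma~\ref{prop:multiexpansion2} twice (with the pairs $(N,K)$ and $(M,K)$), the difference $\omega_N(A)-\omega_M(A)$ rewrites as an integral of $A(\BOmega)\bigl[\Phi_{N,K}/Z_N - \Phi_{M,K}/Z_M\bigr]$ against $d\rho_{\Lambda_K^{(d)}}$. The bracket $h := \Phi_{N,K}/Z_N - \Phi_{M,K}/Z_M$ is a real-valued function, since each $\Phi_{\cdot,K}$ is a real polynomial in the $\Omega$ variables, and so it admits a Jordan decomposition $h = h_+ - h_-$ with $h_\pm\geq 0$ and $\|h\|_{L_1(\rho_{\Lambda_K^{(d)}})}=\int h_+\, d\rho_{\Lambda_K^{(d)}}+\int h_-\, d\rho_{\Lambda_K^{(d)}}$. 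The triangle inequality then reduces the claim to showing that for every nonnegative $g\in L_1(\rho_{\Lambda_K^{(d)}})$ depending only on the boundary variables $\{\Omega_x:x\in\partial\Lambda_K^{(d)}\}$,
\begin{equation*}
\left|\int A(\BOmega)\, g(\BOmega)\, d\rho_{\Lambda_K^{(d)}}\right| \;\leq\; \|A\|\,\int g\, d\rho_{\Lambda_K^{(d)}}.
\end{equation*}

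The key step is to realize the left-hand side as a genuine matrix element of $A$ on the enlarged Hilbert space $\mf{L}_K^{(d)}$ of \eqref{eq:extended_HS}. To that end, I would define the vector
\begin{equation*}
\Xi \;:=\; \sqrt{g(\BOmega)}\cdot\prod_{(i,j)\in\Lambda_K^{(d)}}(u_i v_j - v_i u_j)\;\in\;\mf{L}_K^{(d)},
\end{equation*}
whose boundary factor $\sqrt{g}$ need not be a polynomial (this is precisely the reason to enlarge $\mf{H}_x$ to $\mf{L}_x=L^2(S^2,d\Omega_x)$ at boundary vertices), and whose singlet-product factor has polynomial degree $\deg_{\Lambda_K^{(d)}}(x)\leq \deg(x)$ at every vertex and therefore lies in $\mf{L}_K^{(d)}$. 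Using the identity $|u_i v_j - v_i u_j|^2=\tfrac12(1-\Omega_i\cdot\Omega_j)$, the squared modulus of the singlet product equals $\rho_{\Lambda_K^{(d)}}$ exactly. A fiberwise application of the matrix element formula \eqref{eqn:matrix_entries} over the boundary configuration---valid because $A$ acts only on interior variables---should then yield
\begin{equation*}
\|\Xi\|_{\mf{L}_K^{(d)}}^2 = \int g\, d\rho_{\Lambda_K^{(d)}},\qquad \braket{\Xi}{A\Xi}_{\mf{L}_K^{(d)}}= \int A(\BOmega)\, g\, d\rho_{\Lambda_K^{(d)}}.
\end{equation*}

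The standard bound $|\braket{\Xi}{A\Xi}|\leq \|A\|_{\mf{L}_K^{(d)}}\|\Xi\|^2$ closes the argument, provided one checks $\|A\|_{\mf{L}_K^{(d)}}=\|A\|$. This last equality is the payoff of the enlargement: since $A\in \mr{\caA}_K^{(d)}$ is supported inside $\mr{\Lambda}_K^{(d)}$, it acts as $\1\otimes A$ on the tensor decomposition $\mf{L}_K^{(d)}=\bigotimes_{x\in\partial\Lambda_K^{(d)}}\mf{L}_x\otimes \mf{H}_{\mr{\Lambda}_K^{(d)}}$, and the norm of $\1\otimes B$ on a Hilbert-space tensor product is just $\|B\|$. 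The main obstacle that the proof must navigate is precisely the gap between $\|A(\BOmega)\|_{L^\infty}$ and $\|A\|$ flagged in the discussion preceding the lemma, which rules out any pointwise estimate on the symbol; the vector $\Xi$ is engineered so that the weight $g$ is absorbed into a squared Hilbert-space norm, leaving the estimate governed by the operator norm rather than the symbol's supremum.
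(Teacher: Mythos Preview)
Your proposal is correct and follows essentially the same approach as the paper: both embed into the enlarged Hilbert space $\mf{L}_K^{(d)}$, absorb the weight into vectors built from $\prod_{(i,j)\in\Lambda_K^{(d)}}(u_iv_j-v_iu_j)$, and exploit $\|\1\otimes A\|_{\mf{L}_K^{(d)}}=\|A\|$. The only cosmetic difference is that the paper handles the sign of $h$ in one stroke by pairing $\psi_1=\mathrm{sign}(h)|h|^{1/2}F$ against $\psi_2=|h|^{1/2}F$ and applying Cauchy--Schwarz, whereas you split $h=h_+-h_-$ first and bound two diagonal matrix elements separately; both routes yield the identical $L_1$ bound.
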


\begin{proof}
	Define two functions $\psi_1 =  \text{sign}(h) |h|^{1/2}  F$ and $\psi_2 =   |h|^{1/2} F$  where
		\begin{align*}
	h & = \frac{\Phi_{N,K}}{Z_N} - \frac{\Phi_{M,K}}{Z_M}, \qquad F = \prod _{(i,j) \in \Lambda_K^{(d)}} (u_i v_j - v _i u_j).
	\end{align*}
Recalling the change of variables (\ref{eq:change-of-coordinates}), it is easy to see that $\psi_1,\psi_2\in \mf{L}_K$ as $\Phi_{N,K}$ is a real-valued, bounded, continuous function of the angles $\theta_x,\phi_x$ defining $\Omega_x$ for $x\in \partial\Lambda_K^{(d)}$.  Note that $Z_N\neq 0$ for all $N>K$ since $\omega_N$ is a state on $\caA_K^{(d)}$ and, furthermore, $\rho_{\Lambda_K^{(d)}} = \overline{F}{F}$. As such,
	\begin{align*}
		\begin{split}
			| Z_N(A)/Z_N - Z_M(A)/Z_M| 
			& =\left|\int d \BOmega^{\Lambda_K^{(d)}}~ \rho_{\Lambda_K^{(d)}}\left[\frac{\Phi_{N,K}}{Z_N} - \frac{\Phi_{M,K}}{Z_M}\right]A(\BOmega)\right| \\
			 & = | \ip{ \psi_1, (\1 \otimes A) \psi_2}_{\mf{L}_K}|\\ 
			 &\leq \norm{A} \norm{\psi_1}_{\mf{L}_K}\norm{\psi_2}_{\mf{L}_K}.
		\end{split}
	\end{align*}
	But $\norm{\psi_1}_{\mf{L}_K} = \norm{\psi_2}_{\mf{L}_K}$ and so
	\begin{align}
		\norm{A} \norm{\psi_1}_{\mf{L}_K}\norm{\psi_2}_{\mf{L}_K} = \norm{A} \int d \BOmega^{\Lambda_K^{(d)}} \rho_{\Lambda_K^{(d)}} ~ \bigg{|} \frac{\Phi_{N,K} }{ Z_N} - \frac{\Phi_{M,K}}{Z_M} \bigg{|} . \label{eq:cauchy-schw}
	\end{align}
\end{proof}

This proof can also be adapted to compare $\omega_N(A)$ with $\braket{\Phi(f)}{A\Phi(f)}$ for any normalized ground state $\Phi(f)\in\ker(H_N^{(d)})$.

\begin{lemma}
	\label{lem:comparison-lemma-2}
	Fix $d\geq 0$ and $0\leq K<N$. Then, for any $A\in \mr{\caA}_K^{(d)}$ and normalized $\Psi(f)\in \ker H_{N}^{(d)}$,
	\begin{equation}
		\left| \omega_N(A) - \ip{\Psi(f), A \Psi(f)} \right|
		\leq  \norm{A}~ \sup_{(\Omega_x \, : \, x \in \partial \Lambda_{N}^{(d)})} \norm{\frac{\Phi_{N,K}}{Z_{N}} - \frac{\mr{\Phi}_{N,K}}{\mr Z_{N}}}_{L_1\left(\rho_{\Lambda_{K}}^{(d)}\right)}.
	\end{equation}
\end{lemma}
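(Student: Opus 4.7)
The plan is to reduce the desired bound to a pointwise-in-boundary-variable version of Lemma~\ref{lem:comparison-lemma}, via the bulk-boundary decomposition of Lemma~\ref{lem:bulk-boundary_map}.

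First, I would use Lemma~\ref{lem:bulk-boundary_map} to write the ground-state expectation as
\[
\ip{\Psi(f), A\Psi(f)} = \int d\BOmega^{\partial\Lambda_N^{(d)}}\, |f|^2\, \mr{Z}_N(\partial\BOmega)\, \mr{\omega}_N(A;\partial\BOmega),
\]
and observe that the normalization $1 = \norm{\Psi(f)}^2$ is equivalent to $\int d\BOmega^{\partial\Lambda_N^{(d)}}\, |f|^2\, \mr{Z}_N(\partial\BOmega) = 1$. Multiplying $\omega_N(A)$ by this identity and subtracting produces
\[
\omega_N(A) - \ip{\Psi(f), A\Psi(f)} = \int d\BOmega^{\partial\Lambda_N^{(d)}}\, |f|^2\, \mr{Z}_N(\partial\BOmega)\, [\omega_N(A) - \mr{\omega}_N(A;\partial\BOmega)],
\]
so the triangle inequality reduces matters to
\[
\bigl|\omega_N(A) - \ip{\Psi(f), A\Psi(f)}\bigr| \leq \sup_{\partial\BOmega} \bigl|\omega_N(A) - \mr{\omega}_N(A;\partial\BOmega)\bigr|.
\]

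Next, for each fixed value of $\partial\BOmega$, Lemmas~\ref{prop:multiexpansion2} and~\ref{lem:loop_2} give the identity
\[
\omega_N(A) - \mr{\omega}_N(A;\partial\BOmega) = \int d\rho_{\Lambda_K^{(d)}}\, A(\BOmega)\, \left[\frac{\Phi_{N,K}}{Z_N} - \frac{\mr{\Phi}_{N,K}}{\mr{Z}_N(\partial\BOmega)}\right],
\]
which is precisely the kind of integral handled by the Cauchy--Schwarz trick in the proof of Lemma~\ref{lem:comparison-lemma}. Setting $h = \Phi_{N,K}/Z_N - \mr{\Phi}_{N,K}/\mr{Z}_N(\partial\BOmega)$ and $F = \prod_{(i,j)\in\Lambda_K^{(d)}}(u_iv_j - v_iu_j)$, I would define $\psi_1 = \text{sign}(h)\,|h|^{1/2}\, F$ and $\psi_2 = |h|^{1/2}\, F$. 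These lie in the extended space $\mf{L}_K^{(d)}$ from \eqref{eq:extended_HS}, and since $A \in \mr{\caA}_K^{(d)}$ is supported on the interior, $\norm{\1 \otimes A}_{\mf{L}_K^{(d)}} = \norm{A}$. Hence
\[
\bigl|\omega_N(A) - \mr{\omega}_N(A;\partial\BOmega)\bigr| = \bigl|\ip{\psi_1, (\1\otimes A)\psi_2}_{\mf{L}_K^{(d)}}\bigr| \leq \norm{A}\, \norm{\psi_1}\, \norm{\psi_2},
\]
and the identity $\norm{\psi_1} = \norm{\psi_2}$ yields
\[
\bigl|\omega_N(A) - \mr{\omega}_N(A;\partial\BOmega)\bigr| \leq \norm{A}\, \left\|\frac{\Phi_{N,K}}{Z_N} - \frac{\mr{\Phi}_{N,K}}{\mr{Z}_N(\partial\BOmega)}\right\|_{L_1(\rho_{\Lambda_K^{(d)}})}.
\]
Taking the supremum over $\partial\BOmega$ and chaining with the reduction above gives the claim.

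The main (mild) obstacle is the first reduction: one has to use the positivity $\mr{Z}_N(\partial\BOmega) > 0$ established inside the proof of Lemma~\ref{lem:bulk-boundary_map} so that the division by $\mr{Z}_N(\partial\BOmega)$ and the insertion of $\mr{\omega}_N(A;\partial\BOmega) = \mr{Z}_N(A;\partial\BOmega)/\mr{Z}_N(\partial\BOmega)$ under the integral sign are legitimate. After that, the second step is essentially the argument for Lemma~\ref{lem:comparison-lemma} performed pointwise in $\partial\BOmega$, which is exactly why the supremum over the $\partial\Lambda_N^{(d)}$ variables lands outside the $L_1$ norm in the final bound.
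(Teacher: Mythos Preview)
Your proposal is correct and follows essentially the same approach as the paper: reduce to a supremum over the boundary variables via \eqref{eq:bulk-groun-state-comparison} (which is exactly your first step, written slightly differently), then for fixed $\partial\BOmega$ apply Lemmas~\ref{prop:multiexpansion2} and~\ref{lem:loop_2} and repeat the extended-Hilbert-space Cauchy--Schwarz argument from Lemma~\ref{lem:comparison-lemma}. You also correctly flag the need for $\mr{Z}_N(\partial\BOmega)>0$, which the paper invokes as well.
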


\begin{proof}
	Recall that $\mr{Z}_N(\partial \BOmega) \neq 0$ for any choice of the boundary variables $\partial\BOmega=(\Omega_x,x\in\partial\Lambda_N^{(d)})$, see \eqref{eq:interior_gs}. Then by \eqref{eq:bulk-groun-state-comparison},
	\[	\left| \omega_N(A) - \ip{\Psi(f), A \Psi(f)} \right|
	\leq
	 \sup_{\partial\BOmega}\left|\frac{\mr{Z}_{N}(A; \partial \BOmega)}{\mr{Z}_{N}(\partial \BOmega)} - \frac{Z_{N}(A)}{Z_{N}} \right|,
	\] 
where we use $\int d\rho_{{\Lambda_{N}^{(d)}}} \,|f|^2 = \|\Psi(f)\|^2=1$. For a fixed choice of $\partial\BOmega$, Lemma~\ref{prop:multiexpansion2}-\ref{lem:loop_2} imply that
	\[
	\left|\frac{\mr{Z}_{N}(A; \partial \BOmega)}{\mr{Z}_{N}(\partial \BOmega)} - \frac{Z_{N}(A)}{Z_{N}} \right|=\left|\int d \BOmega^{\Lambda_K^{(d)}}~ \rho_{\Lambda_K^{(d)}}\left[\frac{\mr{\Phi}_{N,K}}{\mr{Z}_N} - \frac{\Phi_{N,K}}{Z_N}\right]A(\BOmega)\right| .
	\]
	Considering again the extended Hilbert space \eqref{eq:extended_HS} and proceeding as in the proof of Lemma~\ref{lem:comparison-lemma} produces the result.
\end{proof}

We conclude this section with a final remark. If $f,g \in L_{1}(\mu)$ are such that $f>0$, $g>0$, and $\int d\mu\, g=1$, then
\begin{equation}\label{eq:l1-relative-entropy-bound}
	\norm{f-g}_{ L_{1}(\mu)} \le \norm{\frac{f}{g} - 1}_{L_\infty(\mu)} \int d\mu g =  \norm{\frac{f}{g} - 1}_{L_\infty(\mu)} \le \Ent{f}{g} e^{\Ent{f}{g}}
\end{equation}
where
\begin{equation}\label{eq:max-relative-entropy}
	\Ent{f}{g}: = \norm{\log f - \log g}_{L_{\infty}(\mu)},
\end{equation}
is the (classical) $\infty$-R\'enyi divergence\cite{Renyi1961, van_Erven_2014},
and we have used the inequality $|e^{x}-1| \le |x| e^{|x|}$. Using cluster expansion methods, we prove Theorem~\ref{thm:indistinguishable} in the next section by bounding
\begin{align}
	\Ent{\Phi_{M,K}/Z_{M}}{\Phi_{N,K}/Z_{N}} & = \sup_{(\Omega_x:x\in\partial\Lambda_K^{(d)})}\left|\log\left(\frac{\Phi_{M,K}}{Z_M}\right) -\log\left(\frac{\Phi_{N,K}}{Z_N}\right) \right| \label{ent_1}\\
	\sup_{\partial\BOmega}\Ent{\mr{\Phi}_{N,K}/\mr{Z}_{N}}{\Phi_{N,K}/Z_{N}} & = \sup_{(\Omega_x:x\in\partial\Lambda_K^{(d)}\cup\partial\Lambda_N^{(d)})}\left|\log\left(\frac{\mr{\Phi}_{N,K}}{\mr{Z}_{N}}\right) -\log\left(\frac{\Phi_{N,K}}{Z_N}\right) \right|, \label{ent_2}
\end{align}
where the supremum is taken over $\Omega_x\in S^2$ for all appropriate sites $x$.

\section{Indistinguishability of ground states}\label{sec:main}

As illustrated in \cite{Kennedy1988}, the existence of a unique infinite volume frustration-free state can be shown by transforming the hard core polymer representation into a cluster expansion. Moreover, the convergence of any finite volume ground state to this infinite volume state is exponentially fast in the distance of the observable to the boundary of the finite volume system. The goal of this section is to prove Theorem~\ref{thm:indistinguishable} which (beyond showing the existence of a unique infinite volume ground state) explicitly states the constants in these bounds and captures how the convergence depends on the support of the observable, which are necessary for applying the stability result from \cite{NSY:2022}. This will follow from appropriately bounding the cluster expansion. These bounds use a variation of a result due to Seiler \cite{Seiler1982} and require a minimum, positive decoration $d$.

\subsection{Cluster expansion preliminaries} While the sums in \eqref{eq:boundary-expansion} and \eqref{eq:sum2} are over hard core sets of polymers, the cluster expansion is a sum over \emph{clusters}, which in this work are sequences of polymers $\vec{\phi}=(\phi_1,\ldots, \phi_m)$ such that the union $G=\cup_i\phi_i$ is a connected graph. Alternatively, whether or not $\vec{\phi}$ is a cluster can be determined by its associated graph $G_{\vec{\phi}}$, whose vertex set is $\{\phi_1,\ldots, \phi_m\}$ and edge set is $\{(\phi_i,\phi_j):\phi_i \disjoint \phi_j\}$. Then $\vec{\phi}$ is a cluster if and only if $G_{\vec{\phi}}$ is connected.

For any finite set of polymers $\caP \subset \caP^{(d)}$, a convergent cluster expansion rewrites the sum over hard core collections from $\caP$ in terms of its logarithm:
\begin{equation}\label{eq:cluster-log}
  \sum _{\set{ \phi_1, \ldots, \phi_m}\subset \caP }^{h.c.} W_d(\phi_1) \cdots  W_d(\phi_m) =
  \exp\bigg(\sum _{\vec{\phi} \in \tau(\caP)} \varphi_c(\vec{\phi}) W_d^{\vec{\phi}}\bigg),
\end{equation}
where we introduce $W^{\vec{\phi}}_d := W_d(\phi_1) \cdots W_d(\phi_m)$, the set of all finite polymer sequences
\begin{align}
\tau(\caP) & = \set{ \vec{\phi} = (\phi_1, \ldots, \phi_m) \mid m\in\bN, \, \phi_i\in \caP\ \forall i },
\end{align}
and the \emph{Ursell function} defined by $\varphi_{c}(\phi_1) = 1$ and
\begin{align}
  \varphi_c( \phi_1, \ldots, \phi_m ) := \frac{1}{m!} \sum_{\substack{  G \in \mf{G}_{\vec{\phi}} :\\G \subset G_{\vec{\phi}}}} (-1)^{|\caB_G|} \quad \text{if $m\ge 2$}.
\end{align}
Above, $\mf{G}_{\vec{\phi}}$ is the set of all connected graphs on the `vertices' $\{\phi_1,\ldots, \phi_m\}$. The Ursell function is zero if $G_{\vec{\phi}}$ is not connected, and so the RHS of \eqref{eq:cluster-log} can be recognized as a sum over clusters. For a more in-depth review of cluster expansions, we point the reader to \cite{Seiler1982, Brydges1986, FV, Miracle2010}.

There are various methods for proving that the infinite series in \eqref{eq:cluster-log} converges and, hence, that this is not just a formal equality, see, e.g., \cite{ Ueltschi2004, KP1986, BZ00, BFP2010} and references within. Lemma~\ref{cor:seiler-cor} in Appendix~\ref{appendix:cluster} verifies a criterion from \cite{Ueltschi2004} which implies the convergence of the cluster expansion for any finite $\caP\subseteq \caP^{(d)}$ when $d\geq 3$. In particular, this holds for both $\caP_{N,K}^{(d)}$ and $\mr{\caP}_{N,K}^{(d)}$. With more careful counting arguments, e.g. as those in \cite{Kennedy1988}, this could be extended to $d<3$. However, this is sufficient since the application of the next result in the proof of Theorem~\ref{thm:indistinguishable} requires $d\ge 5$. This result bounds the contribution to the cluster expansion that comes from polymers whose support overlaps the support of the observable $A\in\caA_{\mr{\Lambda}_K^{(d)}}$ for which the ground state expectation is being calculated. To this end, denote by
\begin{equation}
	\Delta_K := \left\{(\phi_1,\ldots, \phi_m)\in\tau(\mr{\caP}_{N,K}^{(d)}): (\cup_i\phi_i) \disjoint \partial\Lambda_K^{(d)}\right\}.
\end{equation}
The minimum decay required on a polymer weight function to apply this result depends on
\begin{equation}\label{eq:f}
f(x) := \frac{x+1-\sqrt{x^2+1}}{x}.
\end{equation}

\begin{lemma}\label{lem:cluster-expansion-bound}
Fix $K\geq0$ and $N>\max\{K,1\}$ and let $\mu := 2\sqrt[5]{9}$. For all $k\geq 1$, $v\in\Gamma^{(0)}$ and $\phi\in \mr{\caP}_{N,K}^{(d)}$,
\begin{align}
	&\left| \{ \phi' \in  \mr{\caP}_{N,K}^{(d)} : \deg_{\phi'} (v) > 0, \, \ell(\phi')=k\} \right|  \leq \mu^k \label{eq:loop-counting1}\\
	&\left|\{\phi'\in \mr{\caP}_{N,K}^{(d)}: \phi \disjoint \phi', \, \ell(\phi')=k\}\right|  \leq \ell(\phi)\mu^k. \label{eq:loop-counting}
  \end{align}
As a consequence, for any $\epsilon>0$ and polymer weight function $w: \mr{\caP}_{N,K}^{(d)}  \to \rr$ with $|w(\phi)| \leq e^{-\beta \ell(\phi)}$ for some 
\begin{equation} \label{eq:beta_threshold}
	\beta \geq 4/e  + \ln(\mu/f(2e\mu))+\epsilon 
\end{equation}
one has
\begin{equation}\label{eq:cluster_coeff}
  \sum _{\vec{\phi}\in\Delta_K}
  \left|\varphi_c(\vec{\phi}) w^{\vec{\phi}}\right| \leq |\partial\Lambda_K^{(d)}|\frac{r(\epsilon)}{1-r(\epsilon)},
\end{equation}
where we denoted $w^{\vec{\phi}} := w(\phi_1)\cdots w(\phi_m)$ and the ratio $ r(\epsilon):= \frac{\mu f(2e\mu)e^{1-\epsilon}}{(1-\mu f(2e\mu)e^{1-\epsilon})(1-f(2e\mu)e^{-\epsilon})}<1$.
\end{lemma}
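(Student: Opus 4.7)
The plan is to establish the two combinatorial counting estimates \eqref{eq:loop-counting1} and \eqref{eq:loop-counting} first, and then feed them into a standard cluster expansion convergence criterion -- the Ueltschi-type bound encoded in Lemma~\ref{cor:seiler-cor} -- to deduce \eqref{eq:cluster_coeff}. Since $\ell(\phi) = \ell(\iota_d(\phi))$ by Definition~\ref{def:length}, counting polymers in $\mr{\caP}_{N,K}^{(d)}$ by length reduces to counting self-avoiding walks and loops on the undecorated hexagonal lattice $\Gamma^{(0)}$, which has maximum degree $3$ and girth $6$. This reduction is crucial because the effective combinatorics is $d$-independent.

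For \eqref{eq:loop-counting1}, the naive ``at most two non-backtracking extensions at each step'' estimate only gives $3\cdot 2^{k-1}$, which is not sharp enough. To obtain $\mu^k = (2\sqrt[5]{9})^k$ I would group the edges of a self-avoiding walk through $v$ into blocks of five consecutive steps and exploit the girth-$6$ condition: every candidate five-step block on $\Gamma^{(0)}$ that preserves the self-avoidance constraint admits at most $9$ configurations, yielding a per-block rate $9^{1/5}$ to be combined with a per-edge orientation factor of $2$. The full combinatorial argument I would carry out in the appendix. For \eqref{eq:loop-counting}, the key structural observation is that every decoration vertex in $\Gamma^{(d)}$ has degree $2$, so any polymer through a decoration vertex is forced to use both of its incident edges. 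Consequently, if $\phi \disjoint \phi'$ share a decoration vertex, they must share the entire decorated edge containing it and hence both of its endpoints in $\Gamma^{(0)}$. This reduces $\phi \disjoint \phi'$ to ``sharing some undecorated vertex''; a union bound over the $\ell(\phi)+1$ (or $\ell(\phi)$, for loops) vertices of $\phi$ in $\Gamma^{(0)}$ together with \eqref{eq:loop-counting1} -- applied with a slightly sharper bound at walk endpoints to absorb the ``$+1$'' -- yields $\ell(\phi)\mu^k$.

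With these bounds in hand I would turn to \eqref{eq:cluster_coeff}. Every cluster $\vec{\phi}\in\Delta_K$ has support meeting $\partial\Lambda_K^{(d)}$, so a union bound over boundary vertices gives
\[
\sum_{\vec{\phi}\in\Delta_K}\bigl|\varphi_c(\vec{\phi})w^{\vec{\phi}}\bigr| \leq \sum_{v\in\partial\Lambda_K^{(d)}}\ \sum_{\substack{\vec{\phi}\in \tau(\mr{\caP}_{N,K}^{(d)})\\ v\in \cup_i\phi_i}}\bigl|\varphi_c(\vec{\phi})w^{\vec{\phi}}\bigr|,
\]
and it suffices to bound the inner sum uniformly in $v$. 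I would invoke Lemma~\ref{cor:seiler-cor} with the test function $a(\phi)=\alpha\ell(\phi)$ for a parameter $\alpha>0$ to be tuned, and verify the Ueltschi-type criterion $\sum_{\phi'\disjoint \phi_0}|w(\phi')|e^{a(\phi')}\leq a(\phi_0)$ by combining $|w(\phi')|\leq e^{-\beta\ell(\phi')}$ with \eqref{eq:loop-counting}. This reduces to a geometric-series condition whose saturation equation is a quadratic in $e^{-(\beta-\alpha)}$ whose smaller root is exactly $f(2e\mu)$; the argument $2e\mu$ packages $\mu$ together with the combinatorial prefactor arising from the Ursell-function estimate, where $\sup_{x\ge 0} xe^{-x}=1/e$ is the origin of the $4/e$ summand in \eqref{eq:beta_threshold}.

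The main obstacle is not convergence of the cluster expansion, which follows from the counting bounds and \cite{Ueltschi2004}, but extracting the precise closed form $r(\epsilon)/(1-r(\epsilon))$. This requires a careful optimization of $\alpha$ keeping $\beta-\alpha$ just above $\ln(\mu/f(2e\mu))$, together with a clean separation between the one-polymer contribution (the numerator $\mu f(2e\mu)e^{1-\epsilon}$ of $r(\epsilon)$) and the tail of the cluster sum (the two $(1-\cdots)^{-1}$ factors in the denominator, coming from the geometric sums over the length of further polymers attached to the root and over the number of polymers in the cluster). The shape $r(\epsilon)/(1-r(\epsilon))$ then arises by summing the geometric series in the number of polymers of a cluster rooted at a fixed boundary vertex $v$, after which multiplying by $|\partial\Lambda_K^{(d)}|$ closes the argument.
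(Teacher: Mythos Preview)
Your high-level plan (counting bounds, then cluster-expansion machinery) matches the paper, but both halves have concrete problems.

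\textbf{The counting bounds.} Your sketch for \eqref{eq:loop-counting1} via ``blocks of five steps and girth $6$'' does not work as stated: on $\Gamma^{(0)}$ every non-backtracking $5$-step walk is automatically self-avoiding (precisely because the girth is $6$), so a five-step block has $2^5$ continuations, not $9$. More importantly, the bound \eqref{eq:loop-counting} in the form $\ell(\phi)\mu^k$ (rather than $(\ell(\phi)+1)\mu^k$) is \emph{false} for unrestricted polymers on $\Gamma^{(0)}$: already at $k=\ell(\phi)=1$ an unrestricted count gives $6>\mu$. The paper's proof (Proposition~\ref{prop:mu}) does not use a uniform self-avoiding walk argument at all. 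For $k\ge 5$ it uses the crude bound $3(k+1)2^{k-2}$ from Lemma~\ref{lem:number-polymers} together with $\ell(\phi)+1\le 2\ell(\phi)$, which at $k=5$ gives exactly $288=\mu^5$. For $k\le 4$ it exploits that every $\phi'\in\mr{\caP}_{N,K}^{(d)}$ is either a loop (hence $\ell\ge 6$) or a walk with both endpoints constrained to $\partial\Lambda_K^{(d)}\cup\partial\Lambda_N^{(d)}$; a short case analysis then gives $C_k\le 1,4,5,10$ for $k=1,2,3,4$, small enough that $C_k(\ell(\phi)+1)\le \ell(\phi)\mu^k$. Your ``absorb the $+1$ at walk endpoints'' idea does not substitute for this: the reduction comes from the endpoint constraint on $\phi'$, not on $\phi$.

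\textbf{The cluster-expansion bound.} You invoke Lemma~\ref{cor:seiler-cor}, but that lemma only establishes convergence of the expansion via Ueltschi's criterion; it does not produce the quantitative estimate \eqref{eq:cluster_coeff}. The paper instead applies Theorem~\ref{thm:seiler}, which combines Malyshev's bound on the Ursell function (Theorem~\ref{thm:malyshev}, the source of $\kappa=4/e+\ln\mu$) with a tree-counting step via Cayley's formula and two nested geometric series. The function $f$ enters as the threshold for convergence of the outer geometric series, and the displayed ratio $r(\epsilon)/(1-r(\epsilon))$ is exactly the sum of that series evaluated at $\beta=\beta_\epsilon$. Your proposed route through an Ueltschi test function $a(\phi)=\alpha\ell(\phi)$ could in principle give \emph{some} bound, but it will not reproduce these specific constants, and your explanation of where $f(2e\mu)$ and $4/e$ come from does not match the actual mechanism.
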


In particular, for $\epsilon = 0.03$, one has $r(\epsilon) \approx 0.9424$ and $\frac{r(\epsilon)}{1-r(\epsilon)} <17.$

\begin{proof}
	The bounds from \eqref{eq:loop-counting1}-\eqref{eq:loop-counting} are proved in Proposition~\ref{prop:mu}. As a result, if $\beta$ satisfies \eqref{eq:beta_threshold}, then Theorem~\ref{thm:seiler} holds for $\caP=\mr{\caP}_{N,K}^{(d)}$ with $\kappa = 4/e+\ln(\mu)$ which establishes \eqref{eq:cluster_coeff}.
\end{proof}

\subsection{Indistinguishability of the finite volume ground states}
\label{sec:indistinguishability-proof}
\medskip 

We can now prove that the existence of $\omega^{(d)}(A) = \lim _N Z_N(A)/Z_N$, which is necessarily a frustration-free ground state on $\A_{\Gamma^{(d)}}$ of the decorated AKLT model. The explicit bounds also provide estimates on the dependence of the convergence on the support and operator norm of $A$.

Let us start by emphasizing the role of the decoration in the cluster expansion. The geometry of the decorated lattice is such that any two connected polymers from $\caP^{(d)}$ necessarily intersect at a spin-3/2 site. Thus, a sequence $\vec{\phi}=(\phi_1,\ldots,\phi_m)$ is a cluster if and only if $\iota_d(\vec{\phi})=(\iota_d(\phi_1), \ldots \iota_d(\phi_m))$ is a cluster, where $\iota_d$ is as in \eqref{polymer_bijection}. In fact, their respective graphs $G_{\vec{\phi}}$ and $G_{\iota_d(\vec{\phi})}$ are isomorphic, and the Ursell function is invariant under $\vec{\phi}\mapsto\iota_d(\vec{\phi})$. Replacing $\ell(\phi) = \ell(\iota_d(\phi))$ in the weight functions $W_d$ from \eqref{eq:loopweight}-\eqref{eq:walk-weight} also allows us to consider $W_d$ as acting on $\caP^{(0)}$. Thus, it is only the decay rate of the weight function $W_d$ that depends nontrivially of $d$ in the cluster expansion \eqref{eq:cluster-log}. This is the crucial observation for all of the remaining proofs. As such, we will henceforth suppress the dependence on $d$ for any sets involving polymers. To simplify notation, we set
\begin{align}\label{eq:finite_sequences}
\tau_{N,K} &:= \tau(\caP^{(d)}_{N,K}), & \mr{\tau}_{N,K} &:= \tau(\mr{\caP}^{(d)}_{N,K}).
\end{align}
\begin{lemma}\label{lem:partition-difference}
  Fix $d\geq 5$ and $M\geq N > K\geq 1$, and set $\partial\BOmega=(\Omega_x:x\in\partial\Lambda_N^{(d)})$.
  Then \begin{equation}\label{eq:entropy_bound}
	\max\left\{\Ent{\frac{\Phi_{M,K}}{Z_M}}{\frac{\Phi_{N,K}}{Z_N}}, \; \sup_{\partial\BOmega}\Ent{\frac{\mr{\Phi}_{N,K}}{Z_N}}{\frac{\Phi_{N,K}}{Z_N}}\right\}
	\le 17|\partial\Lambda_K^{(d)}|e^{-2\alpha(d)(N-K)}
  \end{equation}
	where $\alpha(d) :=d\ln(3)-4/e-\ln(\mu/f(2e\mu))-.03$ with $\mu=2\sqrt[5]{9}$ and $f(x)$ as in \eqref{eq:f}.
      \end{lemma}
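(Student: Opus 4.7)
The plan is to apply the convergent cluster expansion \eqref{eq:cluster-log} to $\Phi_{N,K}$, $\Phi_{M,K}$, and $\mr{\Phi}_{N,K}$, and bound the resulting log-ratios by isolating clusters that cross the annulus between $\partial\Lambda_K^{(d)}$ and $\partial\Lambda_N^{(d)}$. Writing $g_N := \Phi_{N,K}/Z_N$ and $F_{N,K}(\BOmega) := \sum_{\vec\phi\in\tau_{N,K}}\varphi_c(\vec\phi)W_d^{\vec\phi}(\BOmega)$ (analogously $\mr{F}_{N,K}$ for $\mr{\tau}_{N,K}$), the $2^{-|\caB|}$ prefactors in \eqref{eq:PhiNK} and \eqref{eq:sum2} cancel after normalization, so $\log g_N(\BOmega) = F_{N,K}(\BOmega) - \ln\int e^{F_{N,K}}\,d\rho_{\Lambda_K^{(d)}}$. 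Using that $\ln\mathbb{E}_{g_N}[e^X]\in[\inf X,\sup X]$, the $\infty$-R\'enyi divergence of $g_M$ with respect to $g_N$ is controlled by the oscillation of $F_{M,K}-F_{N,K}$ in $\BOmega$; in particular, cluster contributions not depending on $\BOmega$ drop out.

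Since $\tau_{N,K}\subseteq\tau_{M,K}$, only clusters in $\Delta_K\cap(\tau_{M,K}\setminus\tau_{N,K})$ contribute. Each such cluster is connected in $\Gamma^{(d)}$, touches $\partial\Lambda_K^{(d)}$, and contains a polymer with an edge outside $\Lambda_N^{(d)}$; mapping via the bijection $\iota_d$, its image is a connected subgraph of $\Gamma^{(0)}$ joining $\partial\Lambda_K^{(0)}$ to $\Gamma^{(0)}\setminus\Lambda_N^{(0)}$. The crucial geometric fact---verifiable by a hexagonal-ring count analogous to Proposition~\ref{lem:boundary-length}---is that the graph distance between these sets in $\Gamma^{(0)}$ is at least $2(N-K)$, so $L(\vec\phi):=\sum_i\ell(\phi_i)\geq 2(N-K)$. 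Combining this with $|W_d(\phi)|\leq e^{-d\ln 3\cdot\ell(\phi)}$ (which follows from \eqref{eq:loopweight}--\eqref{eq:walk-weight} and $|\partial\phi(\BOmega)|\leq 1$) gives the factorization
\[
  |W_d^{\vec\phi}|\leq e^{-2\alpha(d)(N-K)}\prod_i e^{-(d\ln 3-\alpha(d))\ell(\phi_i)}.
\]
By the definition of $\alpha(d)$, the residual decay rate $d\ln 3-\alpha(d) = 4/e + \ln(\mu/f(2e\mu)) + 0.03$ is exactly the threshold of Lemma~\ref{lem:cluster-expansion-bound} with $\epsilon = 0.03$, and applying that lemma with weight $\tilde w(\phi) := e^{-(d\ln 3-\alpha(d))\ell(\phi)}$ bounds the residual sum over $\Delta_K$ by $17|\partial\Lambda_K^{(d)}|$, yielding the estimate for the first divergence.

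For the second divergence the scheme is parallel: clusters in $\mr{\tau}_{N,K}\setminus\tau_{N,K}$ are exactly those containing a self-avoiding walk with an endpoint on $\partial\Lambda_N^{(d)}$, and after dropping those disjoint from $\partial\Lambda_K^{(d)}$ (which cancel uniformly in the outer boundary variables between $\mr{F}_{N,K}$ and $\ln\mr{Z}_N$), any surviving cluster again joins $\partial\Lambda_K^{(d)}$ to $\partial\Lambda_N^{(d)}$ and so admits the same length lower bound and weight factorization. The main obstacle I anticipate is the combinatorial verification of the key geometric fact $L(\vec\phi)\geq 2(N-K)$---which is what promotes the single decay rate $\alpha(d)$ naturally produced by the weight bound to the $2\alpha(d)$ appearing in \eqref{eq:entropy_bound}---together with ensuring the constants line up, since the residual rate $d\ln 3-\alpha(d)$ only just meets the threshold of Lemma~\ref{lem:cluster-expansion-bound} and the choice $\epsilon = 0.03$ is specifically tuned so that $r(\epsilon)/(1-r(\epsilon))<17$ in the bound on the residual cluster sum.
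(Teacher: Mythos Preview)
Your strategy is correct and matches the paper's: expand via \eqref{eq:cluster-log}, isolate clusters that must connect $\partial\Lambda_K^{(d)}$ to $\partial\Lambda_N^{(d)}$, invoke the length lower bound $L(\vec\phi)\ge D_0(\partial\Lambda_K^{(0)},\partial\Lambda_N^{(0)})=2(N-K)$, factor the weights, and apply Lemma~\ref{lem:cluster-expansion-bound} at $\epsilon=0.03$.

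The one methodological difference is how you treat the normalizations. The paper expands $Z_N$ and $\mr Z_N$ \emph{separately} as cluster sums over $\tau_{N,0}$ and $\mr\tau_{N,0}$, obtaining an exact identity
\[
\log g_M-\log g_N=\sum_{\vec\phi\in\Delta^1_{M,N}}\varphi_c(\vec\phi)W_d^{\vec\phi}-\sum_{\vec\phi\in\Delta^2_{M,N}}\varphi_c(\vec\phi)W_d^{\vec\phi}
\]
after cancelling the loop-only clusters common to both expansions; it then bounds $\sum_{\Delta^1\cup\Delta^2}\subseteq\sum_{\Delta_K}$ by one application of Lemma~\ref{lem:cluster-expansion-bound}. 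You instead absorb the normalization into a single constant and bound $D_\infty$ by the oscillation of $F_{M,K}-F_{N,K}$ in the inner boundary variables, so that $\BOmega$-independent clusters drop out automatically. This is cleaner bookkeeping and avoids the auxiliary polymer sets $\caP_{N,0},\mr\caP_{N,0}$ entirely.

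The cost is a factor of two you have not accounted for. Writing $\log g_M-\log g_N=X_1+c$ with $X_1$ the $\BOmega$-dependent part and $c$ constant, the normalization $\int g_M\,d\rho=\int g_N\,d\rho$ forces $c=-\log\mathbb E_{g_N}[e^{X_1}]\in[-\sup X_1,-\inf X_1]$, whence $D_\infty\le\operatorname{osc}(X_1)\le 2\|X_1\|_\infty$. So your route yields $34|\partial\Lambda_K^{(d)}|e^{-2\alpha(N-K)}$, not $17$. This is immaterial downstream (Theorem~\ref{thm:indistinguishable} and Corollary~\ref{cor:LTQO} need only the rate), but it does not literally give the constant in \eqref{eq:entropy_bound}; the paper's explicit cancellation against the $\tau_{N,0}$ expansion is what halves it.

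A minor wording issue: in the second divergence your parenthetical ``cancel uniformly in the outer boundary variables between $\mr F_{N,K}$ and $\ln\mr Z_N$'' suggests you are expanding $\ln\mr Z_N$ (the paper's route). Your actual mechanism is different: clusters disjoint from $\partial\Lambda_K^{(d)}$ are constant in the \emph{inner} variables $\BOmega_{\mathrm{in}}$ and hence contribute nothing to the oscillation in $\BOmega_{\mathrm{in}}$ --- no cancellation is needed.
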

  The assumption that $d\geq 5$ guarantees $\alpha(d)>0$. In particular, $\alpha(5)\approx 0.0032$.
\begin{proof}
We first bound $\Ent{\tfrac{\Phi_{M,K}}{Z_M}}{\tfrac{\Phi_{N,K}}{Z_N}} $. For all $N>K\geq 0$, the cluster expansion for 
\[
\Phi_{N,K} := 2^{-|\caB_N^{(d)}\setminus\caB_K^{(d)}|}\sum _{\set{ \phi_1, \ldots, \phi_m}\subset \caP_{N,K} }^{h.c.} W_d(\phi_1) \cdots  W_d(\phi_m)
\]
converges where $\caB_0^{(d)}=\emptyset$ by Lemma~\ref{cor:seiler-cor}. In particular, this implies $\Phi_{N,K}>0$, and so applying \eqref{eq:cluster-log} to both $\Phi_{N,K}$ and $Z_N=\Phi_{N,0}$ yields
	\begin{align}\label{normalized_partition}
	\log\left(\frac{\Phi_{N,K}}{Z_N}\right) & = \log(2^{|\caB_K^{(d)}|})+ \sum _{\vec{\phi} \in \tau_{N,K}} \varphi_c(\vec{\phi})W_d^{\vec{\phi}} - \sum _{\vec{\phi} \in \tau_{N,0}} \varphi_c(\vec{\phi})W_d^{\vec{\phi}},
	\end{align}
where the sequences of polymers \eqref{eq:finite_sequences} are defined with respect to \eqref{eq:newpolymer}. Since $\tau_{N,K} \subset \tau_{M,K}$, \eqref{ent_1} can be simplified to
	\begin{align}
	  \Ent{\frac{\Phi_{M,K}}{Z_M}}{\frac{\Phi_{N,K}}{Z_N}} 
	  & = \sup_{(\Omega_x : x\in\partial\Lambda_K^{(d)})}\left|  \sum _{\vec{\phi} \in \tau_{M,K}  \setminus \tau_{N,K} } \varphi_c(\vec{\phi}) W_d^{\vec{\phi}} - \sum _{\vec{\phi} \in \tau_{M,0}  \setminus \tau_{N,0}}  \varphi_c(\vec{\phi}) W_d^{\vec{\phi}}\right| . \label{eq:cancel_1}
	\end{align}

To further cancel common terms, partition $\tau_{M,K} \setminus \tau_{N,K}$ into two sets depending on if the sequence $\vec{\phi}=(\phi_1,\ldots,\phi_m)$ contains a walk, and partition $\tau_{M,0} \setminus \tau_{N,0}$ by whether or not $\vec{\phi}$ has a loop intersecting $\Lambda_K^{(d)}$. This produces
\[
\tau_{M,K} \setminus \tau_{N,K} = (\mc{L}_{M,K} \setminus \mc{L}_{N,K}) \cup \Delta^1 _{M,N}, \qquad 
\tau_{M,0} \setminus \tau_{N,0} = (\mc{L}_{M,K} \setminus \mc{L}_{N,K})  \cup \Delta^2_{M,N}
\]
where the sets are taken as
	\begin{align*}
	\mc{L}_{N,K} & = \set{ \vec{\gamma} = (\gamma_1, \ldots, \gamma_m) : \forall i, \gamma_i \in \mc{C}_N , \gamma_i  | \Lambda_K^{(d)}} \\
	\Delta^1_{M,N} & = \set{ \vec{\phi}\in \tau_{M,K} \setminus \tau_{N,K}  : \exists j, \phi_j \in \mc{S}_{M,K}\setminus \mc{S}_{N,K}} \\
	\Delta^2_{M,N} & = \set{ \vec{\phi} \in \tau_{M,0} \setminus\tau_{N,0} : \exists j, \phi_j \cap \Lambda_K^{(d)} \not = \emptyset}.
	\end{align*}
Thus, the difference in \eqref{eq:cancel_1} further reduces, and one finds
	\begin{align}\label{eq:cancel2}
	\Ent{\frac{\Phi_{M,K}}{Z_M}}{\frac{\Phi_{N,K}}{Z_N}}  \leq \sup_{(\Omega_x : x\in\partial\Lambda_K^{(d)})} \Bigg(\sum _{\vec{\phi}\in \Delta^1_{N,M}\cup\Delta^2_{N,M}} |\varphi_c (\vec{\phi}) W_d^{\vec{\phi}}|\Bigg) .
	\end{align}

Applying Lemma~\ref{lem:cluster-expansion-bound} on can further bound \eqref{eq:cancel2} uniformly in the boundary variables. To see this, notice that any cluster $\vec{\phi}=(\phi_1,\ldots,\phi_m)\in\Delta_{M,N}^{1}\cup\Delta_{M,N}^{2}$ produces a connected graph $G=\cup_j \phi_j$ that intersects both $\partial\Lambda_N^{(d)}$ and $\partial\Lambda_K^{(d)}$. Hence, 
	\[\sum_{j=1}^m\ell(\phi_j) \geq D_0(\partial\Lambda_N^{(d)}, \partial\Lambda_K^{(d)})=2(N-K)\] 
	where $D_0$ is the graph distance on $\Gamma^{(0)}$. Recall the definition of $W_d$ from \eqref{eq:loopweight}-\eqref{eq:walk-weight}, and note that $|\partial S(\BOmega)| = |\Omega_a\cdot\Omega_b| \leq 1$ for any walk $S$ with endpoints $a,b\in\partial\Lambda_K^{(d)}$. As such,
	\begin{align}
	|W_d^{\vec{\phi}}| & \leq e^{ - 2\alpha(N-K)} \prod_{i=1}^m w_\alpha (\phi_i), \qquad w_\alpha(\phi) := e^{ -(d\ln(3)-\alpha) \ell(\phi)}.
	\end{align}
	Since $\varphi_c$ is only nonzero on clusters and $\Delta_{M,N}^{1}\cup\Delta_{M,N}^{2} \subseteq \Delta_K$, applying Lemma~\ref{lem:cluster-expansion-bound} with $\beta = d\ln(3)-\alpha$ and $\epsilon = 0.03$ shows that for any choice of the boundary variables $(\Omega_x : x\in\partial\Lambda_K^{(d)})$,
	\begin{align}\label{eq:new_weight}
\sum _{\vec{\phi}\in \Delta^1_{N,M}\cup\Delta^2_{N,M}} |\varphi_c (\vec{\phi}) W_d^{\vec{\phi}}| & \leq e^{ - 2\alpha(N-K)}  \sum _{\vec{\phi}\in \Delta_K} |\varphi_c (\vec{\phi}) w_\alpha^{\vec{\phi}}| \leq 17|\Lambda_K^{(d)}|e^{ - 2\alpha(N-K)}.
	\end{align}
Here, we have also used that $\frac{r(0.03)}{1-r(0.03)}<17.$ The desired bound then follows by \eqref{eq:cancel2}.
	
To bound \eqref{ent_2}, fix a choice of the boundary variables and use the cluster expansion \eqref{eq:cluster-log} to write
\[	
\log\left(\frac{\mr{\Phi}_{N,K}}{\mr{Z}_N}\right) = \log(2^{|\caB_K^{(d)}|}) +
\sum _{\vec{\phi} \in \mr{\tau}_{N,K} } \varphi_c(\vec{\phi}) W_d^{\vec{\phi}} - \sum _{\vec{\phi} \in \mr{\tau}_{N,0}} \varphi_c(\vec{\phi}) W_d^{\vec{\phi}},
\]
where we recall \eqref{eq:polymers2}-\eqref{eq:sum2}. Let $\partial\Lambda_{N,K}=\partial\Lambda_N^{(d)}\cup\partial\Lambda_K^{(d)}$. Noticing that $\tau_{N,K}\subseteq \mr{\tau}_{N,K}$, an analogous procedure to the previous case shows 
\begin{align}
\sup_{\partial\BOmega}\Ent{\frac{\mr{\Phi}_{N,K}}{Z_N}}{\frac{\Phi_{N,K}}{Z_N}}
= &
\sup_{(\Omega_x : x\in\partial\Lambda_{N,K})}~ \left|\sum _{\vec{\phi} \in \tau_{N,K}^\circ\setminus\tau_{N,K} } \varphi_c(\vec{\phi}) W_d^{\vec{\phi}} - \sum _{\vec{\gamma} \in \tau_{N,0}^\circ\setminus\tau_{N,0}  }  \varphi_c(\vec{\gamma}) W_d^{\vec{\gamma}} \right| \nonumber \\
\leq & 
\sup_{(\Omega_x : x\in\partial\Lambda_{N,K})}~e^{-2\alpha(N-K)}\sum _{\vec{\phi} \in \mr{\Delta}_N^1\cup\mr{\Delta}_N^2} |\varphi_c(\vec{\phi}) w_\alpha^{\vec{\phi}} |
\end{align}
where the weight function is as in \eqref{eq:new_weight} and the final summation set is taken over the union of
	\begin{align*}
	\mr{\Delta}_N^1 : = & \left\{\vec{\phi}\in \tau_{N,K}^\circ\setminus\tau_{N,K} : \exists j, \phi_j \in \mr{\caS}_{N,K}\text{ has endpoints } v\in\partial\Lambda_N^{(d)}, \, w\in\partial\Lambda_K^{(d)} \right\} \\
	\mr{\Delta}_N^2 : = & \left\{\vec{\phi}\in \tau_{N,0}^\circ\setminus\tau_{N,0} : \exists i,j, \;\phi_i \disjoint \Lambda_K^{(d)}, \, \phi_j \in \mr{\caS}_{N,0}\right\} .
\end{align*}
Since $\mr{\Delta}_N^1\cup\mr{\Delta}_N^2\subseteq \Delta_K$, the result for this case again follows from Lemma~\eqref{lem:cluster-expansion-bound}. Combining the two bounds produces \eqref{eq:entropy_bound}.
\end{proof}

Using Lemma~\ref{lem:partition-difference}, we are now able to prove Theorem~\ref{thm:indistinguishable}, the main result regarding local indistinguishability of ground states of the decorated AKLT model.

  \begin{proof}[Proof of Theorem~\ref{thm:indistinguishable}]
	We begin by proving the existence of a frustration-free ground state $\omega^d:\caA_{\Gamma^{(d)}}\to \bC$. Let $A$ be a local observable and fix $K>0$ so that $A\in\mr{\caA}_K^{(d)}$. The result follows from showing $(\omega_N(A))_{N>K}$ is Cauchy. By Lemma~\ref{lem:comparison-lemma}, 
	\[|\omega_M(A)-\omega_N(A)| \leq \norm{A}~ \norm{\frac{\Phi_{N,K}}{Z_{N}} - \frac{\Phi_{M,K}}{Z_{M}}}_{L_1\left(\rho_{\Lambda_{K}}^{(d)}\right)}\]
	for any $M\geq N >K$. Again, since the weight function $W_d$ is real-valued, the cluster expansion \eqref{eq:cluster-log} shows that $\Phi_{N,K}>0$ and $Z_N = \Phi_{N,0}=\int d\rho_{\Lambda_K^{(d)}}\Phi_{N,K}>0$. Thus, \eqref{eq:l1-relative-entropy-bound} holds and by Lemma \ref{lem:partition-difference}
	\begin{equation}\label{eq:limit_bound}
	|\omega_M(A)-\omega_N(A)| \leq F_{\alpha}(N,K)e^{F_{\alpha}(N,K)} 
	\end{equation}
where we recall that $F_{\alpha}(N,K)=17|\partial \Lambda_K|e^{-\alpha(N-K)}$. Hence, $\omega_N(A)$ is Cauchy and the limit $\omega^{(d)}(A):=\lim_N\omega_N(A)$ exists. Moreover, this is necessarily a frustration-free ground state by the discussion following \eqref{eq:unnormalized}.

Now, fix $A\in\mr{\caA}_K^{(d)}$ and let $\Psi_N \in \ker(H_N^{(d)})$ be arbitrary with $N>K$. Then by \eqref{eq:limit_bound}
\begin{align*}
|\omega^{(d)}(A)-\ip{\Psi_N, A\Psi_N}| 
	& \leq |\omega^{(d)}(A) - \omega_{N}(A)| + |\omega_N(A)-\ip{\Psi_N, A\Psi_N}| \\ 
	& \le F_{\alpha}(N,K) e^{F_{\alpha}(N,K)} \norm{A} + |\omega_N(A)-\ip{\Psi_N, A\Psi_N}|.
\end{align*}
To bound the second term, the cluster expansion once again guarantees $\mr{\Phi}_{N,K}>0$ and $\mr{Z}_N>0$. Thus, applying Lemma~\ref{lem:comparison-lemma-2} and \eqref{eq:l1-relative-entropy-bound} produces
\[|\omega_N(A)-\ip{\Psi_N, A\Psi_N}|\leq  \norm{A}~ \sup_{(\Omega_x \, : \, x \in \partial \Lambda_{N}^{(d)})} \norm{\frac{\Phi_{N,K}}{Z_{N}} - \frac{\mr{\Phi}_{N,K}}{\mr Z_{N}}}_{L_1\left(\rho_{\Lambda_{K}}^{(d)}\right)},\]
and the result follows from a second application of \eqref{eq:l1-relative-entropy-bound}  and Lemma~\ref{lem:partition-difference}.
  \end{proof}

\medskip 

\noindent \textbf{Acknowledgments.} A.M. was supported by the EU Horizon 2020 programme under the Marie Sklodowska-Curie grant agreement No. 101023822.  A.L. was supported by grant RYC2019-026475-I funded by MCIN/AEI/\-10.13039/501100011033 and by “ESF Investing in your future”, by grants PID2020-113523GB-I00 and CEX2019-000904-S funded by MCIN/AEI/\-10.13039/501100011033, and by Comunidad de Madrid (grant QUITEMAD-CM, ref. P2018/TCS-4342).
A.Y. was supported by the DFG under EXC-2111--390814868. A.M. thanks Bergfinnur Durhuus helpful discussions, and all of the authors thank Bruno Nachtergaele for helpful discussions and comments on this work.

\section{Appendix}\label{sec:appendix}

 \subsection{Modifications to the stability argument}\label{appendix:hexagon_lattice} We outline the necessary modifications for adapting the stability argument from \cite{NSY:2021}, which is stated for perturbations supported on balls of the lattice, $b_x(n)$, to the case considered in this work, which considers perturbations supported on the volumes $\Lambda_n^{(d)}(\tilde{x})$ from \eqref{eq:Lambda_n}. The overall goal is to provide the interested reader who wishes to go through \cite{NSY:2021} with sufficient information to see that, indeed, the result of that work applies to this slightly different context.
 
 The modifications to the Assumptions for \cite[Theorem 2.8]{NSY:2021} were discussed in Section~\ref{sec:stability}. It was also discussed there that the main reason this result still holds in present context is that the volumes $\Lambda_n^{(d)}(\tilde{x})$ satisfy similar properties to the balls $b_x(n)$ of the lattice considered in \cite{NSY:2021}. The main criterion that is left to verify is that the Heisenberg dynamics generated by the perturbations considered in this work satisfies a Lieb-Robinson bound governed by a function that decays like a stretched exponential. This is the key result that ensures all of the quasi-local estimates that are used in the stability argument are still valid. Criteria that imply Lieb-Robinson bounds are well-known and prolific in the literature. Proposition~\ref{prop:LR_bound} shows that one such criteria is satisfied for the class of perturbations from this work, see \cite[Theorem 2.3]{NSY:2022}. Before proving this result, though, we first outline the other small cosmetic changes one needs to adapt the proof.

 The remaining changes correspond to appropriately replacing certain set throughout in \cite{NSY:2021} with the corresponding set from this work. Specifically, one should replace the ball of radius $n$ with $\Lambda_n^{(d)}(\tilde{x})$ in \cite[Equation 2.27]{NSY:2021} and in all other places that the ground state projections are considered. Moreover, the ball of radius $n$ should be replaced by $\mr{\Lambda}_n^{(d)}(\tilde{x})$ in the case of the localizing operators \cite[Equation 3.2]{NSY:2021} and all subsequent objects these are used to define. One should also use the dual lattice $\tilde{\Gamma}^{(0)}$ whenever considering the two sets that index the perturbation. 
 
 We now turn to proving that the perturbation satisfies a sufficient Lieb-Robinson bound, which necessarily must be given with respect to the distance in $\Gamma^{(d)}$ rather than the distance in the dual lattice $\tilde{\Gamma}^{(0)}$. As shown in \cite{NSY:2022}, such a Lieb-Robinson bounds guarantees that the quasi-local estimates from \cite[Equations (4.10)-(4.11)]{NSY:2021} hold. This is also sufficient to verify \eqref{eq:pert_generator}. For concreteness, we consider the graph distance on $\Gamma^{(d)}$, denoted by $D_d$, in the result below.
\begin{proposition}\label{prop:LR_bound}
Suppose that $\Phi(\tilde{x},n)^* = \Phi(\tilde{x},n) \in \caA_{\Lambda_n^{(d)}(\tilde{x})}$ is such that for all $\tilde{x}$ and $n$ 
\begin{equation}\label{int_decay}
	\|\Phi(\tilde{x},n)\|\leq \|\Phi\|e^{-an^{\theta}}
\end{equation}
for some constants $a,\|\Phi\|>0$ and $0 \leq \theta < 1$. Then, for any $a'<a$ and $p\geq 0$
\begin{equation}\label{eq:F_norm}
\sup_{x,y\in\Gamma^{(d)}}\sum_{\substack{\tilde{z},n: \\ x,y\in\Lambda_n^{(d)}(\tilde{z})}}|\Lambda_n^{(d)}(\tilde{z})|\|\Phi(\tilde{z},n)\| \leq C(d,a-a',p)\|\Phi\|\frac{e^{-a'\left(\frac{D_d(x,y)}{4(d+1)}+\frac{1}{4}\right)^\theta}}{\left(\frac{D_d(x,y)}{4(d+1)}+\frac{1}{4}\right)^p}.
\end{equation}
where
\begin{equation}\label{eq:constant}
C(d,a-a',p) = 81(3d+2)\sum_{n\geq 1}n^{p+4}e^{-(a-a')n^\theta}.
\end{equation} 
\end{proposition}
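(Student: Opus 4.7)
My plan is to control the sum by three ingredients and then extract the stretched-exponential and polynomial decay by a standard polynomial/exponential splitting. The three ingredients are (i) a diameter argument forcing $n$ to be large in terms of $D_d(x,y)$ whenever $x,y\in\Lambda_n^{(d)}(\tilde z)$, (ii) a uniform polynomial bound on the number of admissible centers $\tilde z$ for each fixed $n$, and (iii) the polynomial volume bound $|\Lambda_n^{(d)}(\tilde z)|\le 3(3d+2)n^2$ from Proposition~\ref{lem:boundary-length}.

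For (i), I would use that $\Gamma^{(d)}$ is obtained by replacing each edge of $\Gamma^{(0)}$ with a path of length $d+1$, so distances between $\Gamma^{(0)}$-vertices in $D_d$ equal $(d+1)$ times the undecorated graph distance, and any decorated vertex lies within $d+1$ of some $\Gamma^{(0)}$-vertex. A path-along-hexagons argument then yields $\operatorname{diam}_{D_d}\Lambda_n^{(d)}(\tilde z)\le 4(d+1)n$. Consequently, if $x,y\in\Lambda_n^{(d)}(\tilde z)$, then $n\ge n_{xy}:=\max\{1,\lceil D_d(x,y)/(4(d+1))\rceil\}$, which satisfies $n_{xy}\ge m_{xy}/2$ with $m_{xy}:=\frac{D_d(x,y)}{4(d+1)}+\frac{1}{4}$. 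For (ii), I use that every $x\in\caV^{(d)}$ lies in at most three elementary decorated hexagons $\Lambda_1^{(d)}(\tilde y)$, so by \eqref{eq:Lambda_n} the condition $x\in\Lambda_n^{(d)}(\tilde z)$ amounts to $\tilde z\in b_{n-1}(\tilde y)$ for one such $\tilde y$. Combined with $|b_{n-1}|\le 3n^2$ on the triangular dual lattice, this yields at most $9n^2$ admissible centers containing $x$ (and hence at most $9n^2$ containing both $x$ and $y$). Combining (i)--(iii) with \eqref{int_decay},
\begin{equation*}
\sum_{\substack{\tilde z,n : \\ x,y\in\Lambda_n^{(d)}(\tilde z)}}|\Lambda_n^{(d)}(\tilde z)|\,\|\Phi(\tilde z,n)\| \le 27(3d+2)\,\|\Phi\|\sum_{n\ge n_{xy}} n^4 e^{-an^\theta}.
\end{equation*}

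For the final step I would split $e^{-an^\theta}=e^{-(a-a')n^\theta}e^{-a'n^\theta}$ and use the two elementary inequalities $e^{-a'n^\theta}\le e^{-a'n_{xy}^\theta}$ and $n^4\le n^{p+4}/n_{xy}^p$ (both valid for $n\ge n_{xy}$, the second because $n^p\ge n_{xy}^p$) to obtain
\begin{equation*}
\sum_{n\ge n_{xy}} n^4 e^{-an^\theta} \le \frac{e^{-a'n_{xy}^\theta}}{n_{xy}^p}\sum_{n\ge 1} n^{p+4}e^{-(a-a')n^\theta}.
\end{equation*}
Replacing $n_{xy}$ by $m_{xy}$ via $n_{xy}\ge m_{xy}/2$ produces \eqref{eq:F_norm} once the ensuing constants $2^p$ and $2^\theta$ are absorbed into $C(d,a-a',p)$. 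The main obstacle is the diameter estimate in step (i): $\Lambda_n^{(d)}(\tilde z)$ is defined through the dual lattice rather than as a metric ball in $D_d$, and one must check carefully that the union of the hexagons indexed by a dual ball of radius $n-1$ is contained in a $D_d$-ball of radius $O((d+1)n)$. Everything after step (i) is routine counting and a single polynomial-exponential manipulation.
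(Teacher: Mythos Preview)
Your approach is essentially the paper's: both reduce the problem to counting dual-lattice centers via $x\in\Lambda_n^{(d)}(\tilde z)\iff \tilde z\in b_{n-1}(\tilde x)$ for some $\tilde x\in\caI_x$, invoke the volume bound $|\Lambda_n^{(d)}|\le 3(3d+2)n^2$ and the ball bound $|b_{n-1}|\le 3n^2$, and then split $e^{-an^\theta}=e^{-(a-a')n^\theta}e^{-a'n^\theta}$. The paper organizes the center count by summing over pairs $(\tilde x,\tilde y)\in\caI_x\times\caI_y$ and then over $\tilde z\in b_{n-1}(\tilde x)$, reaching the same $9n^2$ you obtain more directly; your prefactor $27(3d+2)$ is in fact smaller than the paper's $81(3d+2)$.

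There is, however, a genuine error in your last step. From $n_{xy}\ge m_{xy}/2$ you only get $e^{-a'n_{xy}^\theta}\le e^{-a'(m_{xy}/2)^\theta}=e^{-(a'/2^\theta)m_{xy}^\theta}$, and the $2^\theta$ sits in the exponent, not in front: the ratio $e^{-(a'/2^\theta)m_{xy}^\theta}/e^{-a'm_{xy}^\theta}=e^{a'(1-2^{-\theta})m_{xy}^\theta}$ is unbounded as $m_{xy}\to\infty$, so it cannot be absorbed into $C$. The fix lies precisely in the step you flagged as the obstacle. Your diameter bound $4(d+1)n$ is correct but not sharp enough; the distance comparison \eqref{eq:distance_relation} (which is what underlies any such diameter estimate) gives, for $x,y\in\Lambda_n^{(d)}(\tilde z)$ with $\tilde x,\tilde y$ chosen so that $\tilde D(\tilde x,\tilde z),\tilde D(\tilde y,\tilde z)\le n-1$,
\[
\frac{D_d(x,y)}{2(d+1)}-\frac{3}{2}\le \tilde D(\tilde x,\tilde y)\le 2(n-1),
\]
whence $n\ge \tfrac{D_d(x,y)}{4(d+1)}+\tfrac14=m_{xy}$ exactly. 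This is how the paper proceeds, and it makes the final replacement $n_{xy}\mapsto m_{xy}$ cost-free. Equivalently, your diameter argument sharpens to $\operatorname{diam}_{D_d}\Lambda_n^{(d)}(\tilde z)\le 4(d+1)n-(d+1)$, which already yields $n\ge m_{xy}$.
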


The proof of Proposition~\ref{prop:LR_bound} uses some simple counting estimates on finite subsets of $\Gamma^{(d)}$, which are proved in Proposition~\ref{lem:boundary-length} below, as well as the following relation between the graph distance $D_d$ on $\Gamma^{(d)}$ and the graph distance $\tilde{D}$ on the dual lattice $\tilde{\Gamma}^{(0)}$. Namely,
\begin{equation}\label{eq:distance_relation}
\frac{D_d(x,y)}{2(d+1)}-\frac{3}{2} \leq \tilde{D}(\tilde{x},\tilde{y}) \leq \frac{D_d(x,y)}{2(d+1)}
\end{equation}
where for each $x\in\Gamma^{(d)}$, $\tilde{x}\in\tilde{\Gamma}^{(0)}$ is any dual lattice site such that $x\in \Lambda_{1}^{(d)}(\tilde{x})$, i.e. $x$ belongs to the plaquette associated with $\tilde{x}$. 
\begin{proof}
	Let $\caI_x=\{\tilde{x}\in\tilde{\Gamma}^{(0)}: x\in \Lambda_1^{(d)}(\tilde{x})\}$ for each $x\in\Gamma^{(d)}$, and notice that
	\[
	x\in \Lambda_n^{(d)}(\tilde{z}) \iff \tilde{D}(\tilde{x},\tilde{z}) \leq n-1 \;\; \text{for some} \;\; \tilde{x}\in\caI_x.
	\]
	 Then, for any fixed $x,y\in\Gamma^{(d)}$,
	\begin{align}
		\sum_{\substack{\tilde{z},n: \\ x,y\in\Lambda_n^{(d)}(\tilde{z})}}|\Lambda_n^{(d)}(\tilde{z})|\|\Phi(\tilde{z},n)\| 
		& \leq
		\sum_{\substack{\tilde{x}\in\caI_x \\ \tilde{y}\in\caI_y}}
		\sum_{\substack{\tilde{z},n: \\ \tilde{x},\tilde{y}\in b_{n-1}(\tilde{z})}}|\Lambda_n^{(d)}(\tilde{z})|\|\Phi(\tilde{z},n)\| \nonumber\\
		& \leq 	
		\sum_{\substack{\tilde{x}\in\caI_x \\ \tilde{y}\in\caI_y}}
		\sum_{\substack{\tilde{z},n: \\ \tilde{z}\in b_{n-1}(\tilde{x})\cap b_{n-1}(\tilde{y})}}|\Lambda_n^{(d)}(\tilde{z})|\|\Phi(\tilde{z},n)\| \label{lrbound1}
	\end{align}
where $b_n(\tilde{z})=\{\tilde{y}\in\tilde{\Gamma}^{(0)}:\tilde{D}(\tilde{z},\tilde{y})\leq n\}$. The intersection constraint requires $n-1\geq \frac{\tilde{D}(\tilde{x},\tilde{y})}{2}$, and so the sum is further bounded by
\begin{align}
\sum_{\substack{\tilde{z},n: \\ z\in b_{n-1}(\tilde{x})\cap b_{n-1}(\tilde{y})}}|\Lambda_n^{(d)}(\tilde{z})|\|\Phi(\tilde{z},n)\| 
&\leq 
3(3d+2)\|\Phi\|\sum_{n\geq  \frac{\tilde{D}(\tilde{x},\tilde{y})}{2} +1} \sum_{\tilde{z}\in b_{n-1}(\tilde{x})}  n^2e^{-an^{\theta}}\label{lrbound2}
\end{align}
where one uses \eqref{int_decay} and Proposition~\ref{lem:boundary-length}. It is easy to show that $|b_{n-1}(\tilde{x})| =1+6\sum_{k=1}^{n-1}k \leq 3n^2$, which combined with \eqref{lrbound1}-\eqref{lrbound2} produces
\[
\sum_{\substack{\tilde{x}\in\caI_x \\ \tilde{y}\in\caI_y}}	\sum_{\substack{\tilde{z},n: \\ \tilde{x},\tilde{y}\in b_{n-1}(\tilde{z})}}|\Lambda_n^{(d)}(\tilde{z})|\|\Phi(\tilde{z},n)\| \leq
\left( 9(3d+2)\|\Phi\|\sum_{n\geq 1}n^{p+4}e^{-(a-a')n^\theta}\right) \sum_{\substack{\tilde{x}\in\caI_x \\ \tilde{y}\in\caI_y}}	\frac{e^{-a'(\tilde{D}(\tilde{x},\tilde{y})/2+1)^\theta}}{(\tilde{D}(\tilde{x},\tilde{y})/2+1)^p}.
\]
Inserting \eqref{eq:distance_relation} and using $|\caI_x|\leq 3$ produces the final result.

\end{proof}

To conclude, we prove some simple volume estimates on the decorated hexagonal lattice that are used in the previous result and throughout this work.

\begin{proposition} \label{lem:boundary-length}
	Let $|\Lambda|$ denote the number of sites in $\Lambda$. Then, for all $n\geq 1$ and $d\geq 0$
	\begin{equation}
	 |\partial \Lambda_n^{(d)}| = 6n \quad\text{and}\quad |\Lambda_n^{(d)}|  = 3(3d+2)n^2-3dn.
	\end{equation}
\end{proposition}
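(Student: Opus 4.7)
The plan is to work in the triangular dual lattice $\tilde\Gamma^{(0)}$ and combine a direct count of the hexagons of $\Lambda_n^{(0)}$ with Euler's formula for the planar graph. By the definition of $\Lambda_n^{(d)}$, its hexagonal faces are in bijection with $b_{n-1}(\tilde 0)\subseteq\tilde\Gamma^{(0)}$, and summing the shells of the triangular ball yields
\[
F_n := |b_{n-1}(\tilde 0)| = 1 + \sum_{k=1}^{n-1} 6k = 3n^2 - 3n + 1.
\]
Separately, each undecorated vertex $v \in \Gamma^{(0)}$ corresponds to a triangular face of $\tilde\Gamma^{(0)}$ whose three corners are the three hexagons meeting at $v$. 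Splitting into cases for how many of those three hexagons lie in $b_{n-1}(\tilde 0)$ shows that $v \in \partial\Lambda_n^{(d)}$ iff exactly one of them does, so $|\partial\Lambda_n^{(d)}|$ counts the triangular faces of $\tilde\Gamma^{(0)}$ with exactly one vertex in $b_{n-1}(\tilde 0)$.

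For $n \geq 2$ I tally such faces by summing over vertices $\tilde u$ in shell $n-1$ and counting cyclically consecutive pairs of neighbors of $\tilde u$ that both lie in shell $n$. Using the standard triangular-lattice coordinates, the six corners of shell $(n-1)$ each contribute $2$ such pairs and the $6(n-2)$ side vertices each contribute $1$, giving $|\partial\Lambda_n^{(d)}| = 12 + 6(n-2) = 6n$. The case $n=1$ is immediate: $\Lambda_1^{(d)}$ is a single decorated hexagon, and all six of its undecorated corners lie on the boundary.

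For the total count, write $|\Lambda_n^{(d)}| = V_n + d\,E_n$ with $V_n = |\caV_n^{(0)}|$ and $E_n = |\caB_n^{(0)}|$, and determine $V_n, E_n$ from two identities. Euler's formula for the planar graph $\Lambda_n^{(0)}$ with $F_n$ bounded hexagonal faces gives $V_n - E_n + F_n = 1$, while the hexagon/edge incidence gives $6F_n = 2E_n - E_n^{\mathrm{out}}$, where $E_n^{\mathrm{out}}$ is the number of edges of $\Lambda_n^{(0)}$ incident to exactly one hexagon of $\Lambda_n^{(0)}$. These outer edges are dual to the triangular edges from $b_{n-1}(\tilde 0)$ to its complement, and the same corner/side tabulation---three shell-$n$ neighbors per shell-$(n-1)$ corner and two per side vertex---yields $E_n^{\mathrm{out}} = 18 + 12(n-2) = 12n - 6$. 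Substituting gives $E_n = (6F_n + E_n^{\mathrm{out}})/2 = 9n^2 - 3n$ and $V_n = E_n - F_n + 1 = 6n^2$, whence $|\Lambda_n^{(d)}| = 6n^2 + d(9n^2 - 3n) = 3(3d+2)n^2 - 3dn$.

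The main tedium is the corner/side analysis on the triangular dual shell, which is used twice (once for $|\partial\Lambda_n^{(d)}|$ and once for $E_n^{\mathrm{out}}$). The $6$-fold symmetry reduces the work to listing the six neighbors of one representative corner and one representative side vertex of shell $(n-1)$, classifying each neighbor as ``inside'' ($\mathrm{dist}=n-2$), ``on-shell'' ($\mathrm{dist}=n-1$) or ``outside'' ($\mathrm{dist}=n$), and counting cyclically consecutive ``outside, outside'' pairs; the formulas then collapse to the stated expressions.
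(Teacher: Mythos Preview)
Your proof is correct and takes a genuinely different route from the paper. For the boundary count, both arguments ultimately reduce to the same corner/side tabulation on the outermost shell of hexagons, but you phrase it cleanly in the dual: a hexagonal vertex is a boundary site precisely when the corresponding triangular face has exactly one corner in $b_{n-1}(\tilde 0)$. For the total count the approaches diverge. The paper builds $\Lambda_n^{(d)}$ recursively, decomposing the annulus $\Lambda_n^{(d)}\setminus\Lambda_{n-1}^{(d)}$ into $|\partial\Lambda_{n-1}^{(d)}|$ many $Y$-shaped pieces, the new boundary, and a correction set of $6d$ spin-1 sites, and then sums the resulting telescoping expression. You instead write $|\Lambda_n^{(d)}|=V_n+dE_n$ and extract $V_n$ and $E_n$ in one shot from Euler's formula together with the face--edge incidence $6F_n=2E_n-E_n^{\mathrm{out}}$, computing $F_n$ and $E_n^{\mathrm{out}}$ directly in the triangular dual. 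Your method is slicker and avoids the layer-by-layer bookkeeping; the paper's method is more hands-on and makes the geometry of the annulus explicit, which is natural given that $Y$-shaped volumes already appear elsewhere in the paper. One small point worth stating explicitly in your write-up: the application of Euler's formula uses that $\Lambda_n^{(0)}$ is simply connected (so its bounded faces are exactly the $F_n$ hexagons), which follows from the convexity of $b_{n-1}(\tilde 0)$ but is worth a sentence.
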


\begin{proof} All boundary sites of $\Lambda_n^{(d)}$ lie on plaquettes $\Lambda_1^{(d)}(\tilde{x})$ such that $\tilde{D}(\tilde{x},\tilde{0}) = n-1$. Dividing $\Lambda_n^{(d)}$ into six pieces, it is easy to count that there are precisely $6(n-1)$ such plaquettes, of which $6$ contribute two boundary sites while the remaining $6(n-2)$ contribute one.
	
For each boundary site $v\in\partial\Lambda_{n-1}^{(d)}$, let $w_v\in \Lambda_{n}^{(d)}\setminus\Lambda_{n-1}^{(d)}$ denote the spin-3/2 site so that $(w_v,v)$ is an edge in $\Lambda_{n}^{(0)}$, and define $Y_{w_v}$ to be volume consisting of $w_v$ and the $3d$ spin-1 sites emanating from $w_v$. Then there exists a set $S_n$ of $6d$ spin-1 vertices so that
	\[
	\Lambda_n^{(d)}\setminus\Lambda_{n-1}^{(d)} = \biguplus_{w_v: v\in\partial\Lambda_{n-1}^{(d)}} Y_{w_v} \uplus \partial\Lambda_n^{(d)}   \uplus S_n.
	\]
Thus, for all $n\geq 2$ the previous case implies 
\[|\Lambda_n^{(d)}\setminus\Lambda_{n-1}^{(d)} |= 6(n-1)(3d+1) + 6n + 6d = 6(3d+2)n-6(2d+1).\]
The result follows from summing $|\Lambda_n^{(d)}|=|\Lambda_1^{(d)}|+\sum_{k=2}^{n}|\Lambda_k^{(d)}\setminus\Lambda_{k-1}^{(d)} |.$
\end{proof}

\subsection{Cluster expansion preliminaries}\label{appendix:cluster}

The goal of this section is to establish some basic cluster expansion results. In the standard cluster expansion theory, quantities are defined in terms of a function $g:\caP^{(d)}\times \caP^{(d)} \to \{0,-1\}$ such that $g(\phi,\phi)=-1$. In this work, we have taken
\begin{equation} \label{eq:activity}
g(\phi,\phi') = 0 \;\;\text{if}\;\; \phi | \phi', \quad  g(\phi,\phi') = -1 \;\;\text{if}\;\; \phi \disjoint \phi'
\end{equation}
which can be used to rewrite both the hard-core and cluster conditions as well as the Ursell function:
\[	\varphi_c( \phi_1, \ldots, \phi_m ) =  \frac{1}{m!} \sum _{ G \in \mf{G}_{\vec{\phi}}} \prod _{(\phi, \phi') \in G} g(\phi, \phi'), \qquad m\geq 2.\]

\medskip 

 It is necessary to control the number of polymers of a given length $k$ that intersect a fixed polymer $\phi$ to prove convergence of a cluster expansion. The following such bound was improved using a careful counting argument by Kennedy, Lieb and Tasaki in the appendix of \cite{Kennedy1988} for the case $d=0$. However, the stability result relies on Theorem~\ref{thm:seiler} which requires $d\ge 5$, and so these more straightforward bounds suffice.

\begin{lemma}\label{lem:number-polymers}
	Fix $d\geq 0$ and let $g$ be as in \eqref{eq:activity}. For any spin-3/2 vertex $v\in\Gamma^{(0)}$, the number of polymers of length $k\geq 1$ containing $v$ is bounded by
	\begin{align}
	\left| \{ \phi \in \mc{P}^{(d)} : \ell(\phi)=k, \deg_\phi (v) > 0\} \right| & \leq 3(k+1) 2^{k-2} . \label{eq:loopwalkcount}
	\end{align}
	Consequently, for any $\phi\in\mc{P}^{(d)}$ 
	\begin{equation}\label{eq:intersecting_bound}
	\left|\{\phi'\in\mc{P}^{(d)}:  \ell(\phi')=k, \, g(\phi,\phi')=-1\}\right| \leq 3\left(\ell(\phi)+1\right)(k+1)2^{k-2}.
	\end{equation}
\end{lemma}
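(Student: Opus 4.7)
My plan is to reduce both counting estimates to the undecorated hexagonal lattice $\Gamma^{(0)}$ via the bijection $\iota_d: \caP^{(d)} \to \caP^{(0)}$ from \eqref{polymer_bijection}, which preserves lengths and preserves the property of containing a fixed spin-$3/2$ vertex $v \in \Gamma^{(0)}$. On $\Gamma^{(0)}$, every vertex has exactly three neighbors, yielding two elementary self-avoiding walk estimates: the number of SAWs from $v$ of length $j \geq 1$ is at most $3 \cdot 2^{j-1}$ (three choices for the first step, at most two no-backtracking choices thereafter), and the number of SAWs from $v$ of length $j$ with a prescribed first edge is at most $2^{j-1}$. These bounds are the only inputs needed.

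To prove \eqref{eq:loopwalkcount}, I split polymers of length $k$ containing $v$ according to $\deg_\phi(v) \in \{1,2\}$. Polymers with $\deg_\phi(v) = 1$ are in bijection with SAWs of length $k$ rooted at $v$, giving at most $3\cdot 2^{k-1} = 6 \cdot 2^{k-2}$ contributions. Polymers with $\deg_\phi(v) = 2$, which cover both walks with $v$ in the interior and closed loops through $v$, I would parametrize by an unordered pair of SAWs from $v$ with distinct first edges, of lengths $a_1, a_2 \geq 1$ summing to $k$. Each such polymer corresponds to at least one such pair (exactly one for walks, $k-1$ for cycles), so counting unordered pairs is a valid upper bound. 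Counting ordered pairs and dividing by $2$ gives
\begin{equation*}
\frac{1}{2}\sum_{a_1=1}^{k-1} (3 \cdot 2)\cdot 2^{a_1-1} \cdot 2^{k-a_1-1} = 3(k-1)\, 2^{k-2}.
\end{equation*}
Adding the two cases yields $6\cdot 2^{k-2} + 3(k-1)\,2^{k-2} = 3(k+1)\,2^{k-2}$, as claimed.

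The estimate \eqref{eq:intersecting_bound} then follows by a union bound once I establish that $\phi \disjoint \phi'$ in $\caP^{(d)}$ if and only if $\phi$ and $\phi'$ share at least one spin-$3/2$ vertex of $\Gamma^{(0)}$. Indeed, if they only share a spin-$1$ decoration vertex $v$, then since $v$ has only two neighbors in $\Gamma^{(d)}$ and cannot be an endpoint of a walk (by the definition of $\caS^{(d)}$), we must have $\deg_\phi(v) = \deg_{\phi'}(v) = 2$; this forces both $\phi$ and $\phi'$ to contain the entire decoration chain through $v$ and hence also its two spin-$3/2$ endpoints. The vertex set of $\iota_d(\phi)$ has size at most $\ell(\phi)+1$ (with equality for walks), so summing \eqref{eq:loopwalkcount} over the at most $\ell(\phi)+1$ spin-$3/2$ vertices of $\phi$ yields the bound $3(\ell(\phi)+1)(k+1)\,2^{k-2}$.

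I do not anticipate major technical obstacles. The one subtle point is the unified treatment of walks with $v$ interior and loops through $v$ in the $\deg_\phi(v)=2$ case: since we only need an upper bound, overcounting loops (which correspond to many arm pairs) and ignoring the joint self-avoidance constraint on the two arms is harmless, and the same arm-pair bookkeeping handles both subcases cleanly.
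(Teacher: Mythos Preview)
Your proposal is correct and follows essentially the same approach as the paper: reduce to $d=0$ via $\iota_d$, split by $\deg_\phi(v)\in\{1,2\}$, bound each case by counting self-avoiding arms from $v$ using the $3\cdot 2^{j-1}$ estimate, and then deduce \eqref{eq:intersecting_bound} from the observation that connected polymers in $\caP^{(d)}$ must share a spin-$3/2$ vertex. The only cosmetic difference is bookkeeping in the $\deg_\phi(v)=2$ case: the paper fixes the unordered pair $\{w_1,w_2\}$ of neighbors (three choices) and sums over arm lengths directly, whereas you count ordered arm pairs and divide by two; both yield $3(k-1)2^{k-2}$.
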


\begin{proof}
	Given Definition~\ref{def:length} and \eqref{polymer_bijection}, it is sufficient to prove \eqref{eq:loopwalkcount} for $d=0$.

	There are two cases: if $v$ is the endpoint of $\phi$ (in which case, $\phi$ is a self-avoiding walk), or if $v$ is not. In the first case, there is exactly one edge from $\phi$ emanating from $v$. There are three choices for this edge. Each of the remaining $k-1$ edges can be chosen by successively laying an edge at the endpoint of the previously laid edge. There are two choices for the new edge, and so there are at most $3\cdot 2^{k-1}$ self-avoiding walks with $v$ as an endpoint.

	If $v$ is not the endpoint of $\phi$, then there are two edges $(w_1,v),\, (w_2,v)\in \phi$. For $i=1,2$, denote by $k_i\geq 1$ the number of edges in $\phi$ that connect to $v$ through $w_i$, and note that if $\phi$ is a loop one can assume WLOG that $k_1=1$. There are three choices for $\{w_1,\,w_2\}$ after which there are at most $2^{k_i-1}$ walks that connect to $v$ through $w_i$. Since $k_1+k_2=k$, summing over the possible values of $k_1$ shows there are at most $3(k-1)2^{k-2}$	polymers with two edges connecting at $v$. Combining this with the previous case produces \eqref{eq:loopwalkcount}.

	The result for \eqref{eq:intersecting_bound} follows from realizing that two connected polymers $\phi,\phi'\in\caP^{(d)}$ must intersect at a spin-3/2 site. There are at most $\ell(\phi)+1$ such sites in $\phi$.
\end{proof}

The bound from \eqref{eq:intersecting_bound} can be used to prove convergence of the cluster expansion for $d\geq 3$ when the weight functions are taken as in \eqref{eq:loopweight}-\eqref{eq:walk-weight}. Note that the product involving $g$ in the following result encodes the hard core polymer condition, see \eqref{eq:activity} and Definition~\ref{def:hardcore}.

\begin{lemma}\label{cor:seiler-cor}
	Fix $d\geq 3$ and suppose $\caP\subseteq\caP^{(d)}$ and $w: \caP \to \rr$ is a weight function such that $|w(\phi)| \leq (1/3)^{(d+1)\ell(\phi)-1}$. Then, for $g$ as in \eqref{eq:activity}
	\begin{align}\label{eq:cluster_covergence}
		\sum _{\set{\phi_1,\ldots,\phi_n }\subseteq \caP} w(\phi_1)\cdots w(\phi_n) \prod _{1 \leq i < j \leq n} (1+g(\phi_i,\phi_j)) = \exp \bigg{(} \sum _{\vec{\phi}\in \tau(\caP)} \varphi_c(\vec{\phi}) w^{\vec{\phi}} \bigg{)}. 
	\end{align}
\end{lemma}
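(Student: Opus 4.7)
The plan is to deduce \eqref{eq:cluster_covergence} from a standard sufficient condition for absolute convergence of an abstract polymer cluster expansion, after which the logarithm identity is immediate from the general exponential formula for polymer systems. Concretely, I would invoke Ueltschi's criterion \cite{Ueltschi2004}: if there exists a function $a: \caP \to [0,\infty)$ such that for every $\phi \in \caP$
\begin{equation*}
  \sum_{\phi' \in \caP:\, \phi' \disjoint \phi} |w(\phi')|\, e^{a(\phi')} \;\leq\; a(\phi),
\end{equation*}
then $\sum_{\vec{\phi} \in \tau(\caP)} |\varphi_c(\vec{\phi}) w^{\vec{\phi}}|$ is finite and \eqref{eq:cluster_covergence} holds (recall that under \eqref{eq:activity} the product $\prod_{i<j}(1+g(\phi_i,\phi_j))$ enforces the hard core condition, so the left-hand side of \eqref{eq:cluster_covergence} is the standard polymer partition function). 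Once the criterion is verified, there is nothing further to prove.

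I would choose $a(\phi) := \ell(\phi)$ and estimate the sum by grouping polymers according to $k := \ell(\phi')$. Combining the weight hypothesis, rewritten as $|w(\phi')| \leq 3 \cdot 3^{-(d+1)\ell(\phi')}$, with the counting estimate of Lemma~\ref{lem:number-polymers},
\begin{equation*}
  \bigl|\{\phi' \in \caP^{(d)}: \ell(\phi')=k,\, \phi' \disjoint \phi\}\bigr| \;\leq\; 3(\ell(\phi)+1)(k+1)\, 2^{k-2},
\end{equation*}
the left-hand side of the criterion is dominated by the geometric-type series
\begin{equation*}
  \frac{9(\ell(\phi)+1)}{4}\sum_{k \geq 1}(k+1)\, r^{k} \;=\; \frac{9(\ell(\phi)+1)}{4}\cdot \frac{r(2-r)}{(1-r)^2}, \qquad r := \frac{2e}{3^{d+1}}.
\end{equation*}
For $d \geq 3$ one has $r \leq 2e/81$, and a direct numerical check gives $\tfrac{9}{4}\cdot r(2-r)/(1-r)^2 \leq \tfrac{1}{2}$. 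Thus the sum is at most $\tfrac{1}{2}(\ell(\phi)+1) \leq \ell(\phi)$, where the last inequality uses that every polymer satisfies $\ell(\phi) \geq 1$. The criterion is therefore verified and the identity follows.

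The main technical content, and the source of the restriction $d \geq 3$, is the competition between the exponential decay $3^{-(d+1)\ell(\phi')}$ of the weight and the exponential growth $(2e)^{\ell(\phi')}$ coming jointly from the counting bound of Lemma~\ref{lem:number-polymers} and the activity factor $e^{a(\phi')}$ in the criterion. The naive counting of Lemma~\ref{lem:number-polymers} leaves a usable margin only once $3^{d+1}$ comfortably dominates $2e$; refined counting (as in \cite{Kennedy1988}) would be needed to extend the convergence to smaller $d$, but this is irrelevant to the present application since Lemma~\ref{lem:cluster-expansion-bound} is only invoked for $d \geq 5$.
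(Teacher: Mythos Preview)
Your argument is correct and essentially identical to the paper's own proof: both invoke Ueltschi's convergence criterion with $a(\phi)=\ell(\phi)$ (equivalently the paper's $\epsilon=1$), use the counting bound \eqref{eq:intersecting_bound} of Lemma~\ref{lem:number-polymers} to control the number of length-$k$ polymers incompatible with a fixed $\phi$, and then verify the resulting geometric-type inequality numerically for $d\geq 3$. The only cosmetic difference is that the paper rewrites the final check as $\frac{9}{9+2\epsilon}<(1-2e^{\epsilon}/3^{d+1})^{2}$, whereas you evaluate $\tfrac{9}{4}\cdot r(2-r)/(1-r)^{2}$ directly; these are equivalent manipulations of the same series.
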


\begin{proof}
	It is well-known that the above cluster expansion will converge, i.e. \eqref{eq:cluster_covergence} holds, if there is an $\ep>0$ such that 
	\begin{align}\label{eq:criterion}
		\sup_\phi \frac{1}{\ell(\phi)} \sum _{\substack{\sigma \in \caP \\ g(\phi, \sigma)=-1 }} |w(\sigma)| e^{ \ep \ell(\sigma)} < \ep ,
	\end{align}
	see, e.g., \cite[Theorem 1]{Ueltschi2004}. Inserting the assumed weight bound, breaking up the sum in terms of the length $\ell(\sigma)=k$ and applying \eqref{eq:intersecting_bound} shows this criterion holds if
	\begin{align}
		\sup_{\phi}\frac{9(\ell(\phi)+1)}{4\ell(\phi)}\sum_{k=1}^{\infty}(k+1)\left(\frac{2e^{\ep}}{3^{d+1}}\right)^k < \ep.
	\end{align}
	As $\ell(\phi)\geq 1$, a simple geometric series argument implies this is satisfied if  
	\[
	\frac{9}{9+2\ep} < \left(1-\frac{2e^\ep}{3^{d+1}}\right)^2
	\]
	for some $0 <\ep < (d+1)\ln(3)-\ln(2)$. This inequality holds, e.g., when $\ep=1$ and $d\geq 3$.
\end{proof}

\medskip

Theorem~\ref{thm:seiler} below adapts a result due to Seiler, see \cite[Theorem 3.13]{Seiler1982}, to produce a bound on the logarithm of the cluster expansion which makes explicit the dependence on the support of the observable. This proof makes use of the following bound due to Malyshev. An elementary proof of this bound can also be found in \cite[Theorem 3.2]{Seiler1982}.

\begin{theorem}[Malyshev \cite{Malyshev1978}]\label{thm:malyshev} Let $\caP\subseteq \caP^{(d)}$ be a finite set of polymers and suppose $\mu>1$ is such that for all $\phi\in\caP$,
	\begin{equation}\label{eq:mu_bound}
	\left|\{\phi'\in\caP: g(\phi,\phi')=-1, \, \ell(\phi')=\ell\}\right| \leq \ell(\phi)\mu^{\ell}.
	\end{equation}
Let $X_{\vec{\phi}}(\phi)\in\bN_0$ be the number of times the polymer $\phi$ appears in $\vec{\phi}\in\tau(\caP)$. There exists a positive $\kappa \leq 4/e + \ln(\mu) $ so that
	\begin{align}
	| \varphi_c(\vec{\phi})| \leq \frac{1}{m!} \prod_{\phi\in\caP}X_{\vec{\phi}}(\phi)!e^{ \kappa \ell(\phi)  X_{\vec{\phi}} (\phi)}  \qquad \forall\,  \vec{\phi}=(\phi_1,\ldots,\phi_m)\in\tau(\caP).
	\end{align}
\end{theorem}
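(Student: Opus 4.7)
The plan is to reduce the alternating sum defining $\varphi_c$ to a manifestly positive sum over spanning trees, and then estimate the resulting tree sum via a rooted-tree recursion combined with the growth hypothesis \eqref{eq:mu_bound}. Since the bond weights $g(\phi,\phi') \in \{0,-1\}$ correspond to a hard-core polymer system with non-negative activities $-g$, the Penrose tree-graph identity (equivalently the Brydges--Battle--Federbush tree formula) yields
\[
\left|\sum_{G\in\mathfrak{G}_{\vec\phi}}\prod_{(i,j)\in G}g(\phi_i,\phi_j)\right| \leq \sum_{T}\prod_{(i,j)\in T}|g(\phi_i,\phi_j)|,
\]
where the sum on the right ranges over spanning trees on the labeled vertex set $\{\phi_1,\ldots,\phi_m\}$, reducing the task to bounding a non-negative quantity.

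To estimate this tree sum, I would enumerate trees by their degree sequence $(d_1,\ldots,d_m)$ using the multinomial form of Cayley's formula: the number of labeled trees on $m$ vertices with prescribed degrees is $(m-2)!/\prod_i (d_i-1)!$. Rooting at $\phi_1$ and iterating down the tree, each child of a vertex $\phi_i$ must be a polymer $\phi'$ with $g(\phi_i,\phi')=-1$, and by \eqref{eq:mu_bound} the number of such $\phi'$ of length $\ell$ is at most $\ell(\phi_i)\mu^\ell$. Summing geometric series over child lengths and multiplying over vertices produces, schematically, a bound of the form
\[
|\varphi_c(\vec\phi)| \leq \frac{1}{m!}\sum_{(d_1,\ldots,d_m)}\frac{(m-2)!}{\prod_i(d_i-1)!}\prod_i \ell(\phi_i)^{d_i-1}\mu^{\ell(\phi_i)d_i}.
\]
Grouping identical copies of each distinct polymer appearing in $\vec\phi$ then produces the symmetry factor $\prod_\phi X_{\vec\phi}(\phi)!$, accounting for the overcounting induced by the labeling, and leaves the task of bounding the remaining product by $\prod_\phi e^{\kappa \ell(\phi) X_{\vec\phi}(\phi)}$.

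The main obstacle will be establishing the sharp value of the constant $\kappa \leq 4/e + \ln\mu$. The $\ln\mu$ contribution is immediate from \eqref{eq:mu_bound}. The additive $4/e$ is more delicate: it emerges from applying Stirling to the Cayley factor and optimizing the sum against the degree sequence, and ultimately reflects the singularity at $x = 1/e$ of the rooted-tree generating function $T(x) = xe^{T(x)}$ (equivalently, the Lambert $W$ function), which controls the critical growth rate of labeled tree enumerations. This combinatorial bookkeeping is elementary but tedious, and I would carry it out following the explicit argument in Seiler's proof of \cite[Theorem 3.2]{Seiler1982} rather than reproduce it here.
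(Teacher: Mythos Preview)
The paper does not provide its own proof of this theorem: it is stated as a result due to Malyshev \cite{Malyshev1978}, with the remark that an elementary proof can be found in \cite[Theorem 3.2]{Seiler1982}. Your proposal follows exactly the Seiler argument the paper cites (tree-graph inequality, Cayley's degree-sequence formula, then the combinatorial bookkeeping that yields the $4/e$), so there is nothing to compare---you have correctly identified the intended reference and the standard line of proof.
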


The function $X_{\vec{\phi}}$ from Theorem~\ref{thm:malyshev} can be used to identify the exact subset of polymers that appear in the sequence $\vec{\phi}$. Conversely, any nonzero function $X:\caP \to \bN_0$ generates a sequence $\vec{\phi}^X\in\tau(\caP)$ where each polymer $\phi$ appears $X(\phi)$ number of times. (This can be defined unambiguously by fixing an ordering of $\caP$.) This correspondence is not one-to-one, but rather satisfies
\[
\tau_X:=\{\vec{\phi}\in\tau(\caP): X_{\vec{\phi}}=X\} = \{\vec{\phi}^{X}_\pi : \pi \in S_{m(X)}\}
\]
where $m(X) = \sum_{\phi}X(\phi)$ is the length of $\vec{\phi}^X$, and $\vec{\phi}_\pi := (\phi_{\pi(1)},\ldots,\phi_{\pi(m)})$ for any permutation $\pi\in S_m$. As the Ursell function is invariant under permutations, the cluster expansion for any polymer weight function $w:\caP \to \bR$ satisfies
\begin{equation}\label{eq:cluster_equivalence}
\sum _{\vec{\phi}\in \tau(\caP)} |\varphi_c(\vec{\phi}) w^{\vec{\phi}}| 
= \sum_{X \neq 0} \sum_{\vec{\phi}\in\tau_X}|\varphi_c(\vec{\phi})w^{\vec{\phi}}|
= \sum_{X \neq 0}~\frac{m!}{\prod_{\phi}X(\phi)!}|\varphi_c(\vec{\phi}^{X})w^{\vec{\phi}^{X}}|
\end{equation}
where $w^{\vec{\phi}}:= w(\phi_1)\ldots w(\phi_m)$ for any $\vec{\phi}=(\phi_1,\ldots,\phi_m)\in\tau(\caP)$. This alternate form is used to prove the following result. Given for any finite set $\caP\subset \caP^{(d)}$, this bounds the contribution to the cluster expansion that comes from clusters $\vec{\phi}\in\tau(\caP)$ that intersect the boundary of $ \partial\Lambda_K^{(d)}$. As such, set
\begin{equation}\label{Delta_K}
	\Delta_{K} = \{\vec{\phi}\in\tau(\caP): \cup_i\phi_i \disjoint \partial\Lambda_K^{(d)} \}.
\end{equation}
The proof also depends on a function
\[
f(x) = \frac{x+1-\sqrt{x^2+1}}{x}
\]
which is used to ensure convergence of a certain geometric series.

\begin{theorem}\label{thm:seiler}
	Let $\caP\subseteq \caP^{(d)}$ be finite, and suppose $\mu>1$ satisfies the condition of Theorem~\ref{thm:malyshev} and
	\begin{equation}\label{eq:site_bound}
	|\{\phi\in\caP : \ell(\phi)=k, \, v\in\phi\}| \leq \mu^k, \quad \forall v\in\Gamma^{(0)}, \, k \geq 1.
	\end{equation}
If $w:\caP\to\bR$ is a polymer weight function such that $w(\phi) \leq e^{ - \beta \ell(\phi)}$ for some 
	\[\beta\geq \beta_\epsilon:=\kappa-\ln(f(2e\mu))+\epsilon\] 
with $\epsilon>0$, then for the set $\Delta_K$ defined as in \eqref{Delta_K} one has
	\begin{align}\label{eq:log-sum}
	\sum _{ \vec{\phi}\in\Delta_{K} } \left| \varphi_c(\vec{\phi}) w^{\vec{\phi}}\right| \leq|\partial\Lambda_K^{(d)}|\frac{r_\mu(\epsilon)}{1-r_\mu(\epsilon)}
	\end{align}
where  $r_\mu(\epsilon) :=\frac{\mu f(2e\mu)e^{1-\epsilon}}{(1-\mu f(2e\mu)e^{1-\epsilon})(1-f(2e\mu)e^{-\epsilon})}$.
\end{theorem}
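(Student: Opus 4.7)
The plan is to adapt the strategy of \cite[Theorem 3.13]{Seiler1982}, using Malyshev's bound (Theorem~\ref{thm:malyshev}) as the central combinatorial tool. The first step is to convert the cluster sum into a sum over multisets via \eqref{eq:cluster_equivalence}:
\[
\sum_{\vec{\phi} \in \Delta_K} |\varphi_c(\vec{\phi}) w^{\vec{\phi}}| = \sum_{X} \frac{m(X)!}{\prod_\phi X(\phi)!} |\varphi_c(\vec{\phi}^X)| |w^{\vec{\phi}^X}|,
\]
where $X : \caP \to \bN_0$ is restricted to nonzero multisets whose associated sequence $\vec{\phi}^X$ is a cluster with support intersecting $\partial \Lambda_K^{(d)}$. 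Applying Malyshev's bound exactly cancels the multinomial coefficient, and combining with $|w(\phi)| \leq e^{-\beta \ell(\phi)}$ and the threshold $\beta \geq \kappa - \ln f(2e\mu) + \epsilon$ bounds the summand by $\prod_\phi \bigl(f(2e\mu) e^{-\epsilon}\bigr)^{\ell(\phi) X(\phi)}$.

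The second step is to reduce to a rooted cluster sum by using the crude indicator inequality
\[
\1_{\vec{\phi}^X \in \Delta_K} \leq \sum_{v \in \partial \Lambda_K^{(d)}} \sum_{\phi_0 \in \caP : v \in \phi_0} \1_{X(\phi_0) \geq 1},
\]
which produces the explicit prefactor $|\partial \Lambda_K^{(d)}|$ and leaves, for each pair $(v, \phi_0)$, an inner sum over multisets with $X(\phi_0) \geq 1$ whose support is connected in the polymer-intersection graph. The connectivity constraint, which is not captured by Malyshev's bound alone, is then handled by interpreting the support as the vertex set of a spanning tree rooted at $\phi_0$: the hypothesis \eqref{eq:mu_bound} bounds the number of new polymers of length $\ell'$ attachable to a polymer of length $\ell$ by $\ell \mu^{\ell'}$, while \eqref{eq:site_bound} controls the choice of root polymer through $v$.

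The resulting branching bound factorizes into two geometric series. The first, coming from summing the weight $(f(2e\mu) e^{-\epsilon})^{\ell'}$ against the Malyshev count $\mu^{\ell'}$ at each tree edge, gives $\tfrac{\mu f(2e\mu) e^{1-\epsilon}}{1 - \mu f(2e\mu) e^{1-\epsilon}}$, where the extra factor of $e^{1}$ per branching step reflects the Cayley-type asymptotic $m^{m-2}/m! \sim e^m/m^{5/2}$ already embedded in Malyshev's constant $\kappa \leq 4/e + \ln \mu$. The second, from summing over multiplicities $X(\phi) \geq 1$ at each tree vertex, contributes $\tfrac{f(2e\mu) e^{-\epsilon}}{1 - f(2e\mu) e^{-\epsilon}}$. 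Their product is precisely $r_\mu(\epsilon)$, and a final geometric summation over the total branching-process size yields the claimed bound $|\partial \Lambda_K^{(d)}|\, r_\mu(\epsilon) / (1 - r_\mu(\epsilon))$.

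The function $f$ enters because $t = f(2e\mu)$ is the smaller root of the balance equation $2e\mu t^2 - 2(2e\mu + 1) t + 2 = 0$, which is exactly the fixed-point relation for the branching-process generating function with per-vertex weights of order $e^\kappa \leq e^{4/e} \mu$; this root satisfies $\mu e f(2e\mu) < 1$ for every $\mu > 0$, guaranteeing that both geometric factors in $r_\mu(\epsilon)$ converge for any $\epsilon > 0$. The main technical obstacle is the precise accounting in the rooted branching-process bound: obtaining exactly the constants appearing in $r_\mu(\epsilon)$, rather than merely qualitative convergence, requires carefully tracking how the root contribution, the multiplicity factors, and the edge weights combine, and it is here that the restriction to $\Delta_K$ and the boundary-localized prefactor $|\partial \Lambda_K^{(d)}|$ emerge from the estimates.
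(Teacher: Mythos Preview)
Your overall strategy matches the paper's: convert to multisets via \eqref{eq:cluster_equivalence}, apply Malyshev's bound to cancel the multinomial coefficient, root the cluster at a site of $\partial\Lambda_K^{(d)}$, control the connected-cluster count by a spanning-tree argument, and then sum geometric series. However, your account of where the factor $e^{1}$ per polymer in $r_\mu(\epsilon)$ comes from is incorrect, and this is the one place where the argument could go wrong if written out.

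You claim the extra $e$ is ``already embedded in Malyshev's constant $\kappa\le 4/e+\ln\mu$.'' It is not. After applying Theorem~\ref{thm:malyshev} and the weight bound, each polymer carries $e^{(\kappa-\beta)\ell(\phi)}$; no further $e$ is hiding there. The additional factor $e^{k_j}$ per polymer arises from a \emph{separate} tree-counting step that your sketch does not supply: one fixes the lengths $k_1,\ldots,k_m$, bounds the number of clusters $(\phi_1,\ldots,\phi_m)$ with those lengths by summing over labeled spanning trees rooted at a polymer through a given boundary site, and uses the multinomial Cayley formula together with \eqref{eq:mu_bound} and \eqref{eq:site_bound} to obtain at most $|\partial\Lambda_K^{(d)}|\,\mu^{\sum_j k_j}\bigl(\sum_j k_j\bigr)^{m-1}$ such clusters. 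The crucial inequality $\bigl(\sum_j k_j\bigr)^{m-1}< m!\,\prod_j e^{k_j}$ (an instance of $x^{n}/n!<e^{x}$) then converts this into $|\partial\Lambda_K^{(d)}|\,m!\,\prod_j e^{(\ln\mu+1)k_j}$, and it is this extra $1$ in the exponent that becomes the $e^{1}$ in $r_\mu(\epsilon)$. Your branching-process heuristic does not produce this step, and without it the constants do not close. Likewise, the multiplicity sum contributes only the denominator factor $(1-f(2e\mu)e^{-\epsilon})^{-1}$ per polymer; the numerator $e^{(\kappa-\beta)k_j}$ stays in the per-polymer weight and is summed together with the tree-counting factor over $k_j\ge 1$ to give the other piece of $r_\mu(\epsilon)$.
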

As will be evident from the proof, the quantity above necessarily satisfies $r_\mu(\epsilon)<1$.

\begin{proof}
	Applying \eqref{eq:cluster_equivalence}, it follows immediately from the definition of $\Delta_K$ that
	\begin{align}
	 \sum _{ \vec{\phi}\in\Delta_{K} } \left|\varphi_c(\vec{\phi}) w^{\vec{\phi}}\right|  \leq \sum _{X \neq 0}^{\partial\Lambda_K^{(d)}} \frac{m!}{\prod_{\phi}X(\phi)!}|\varphi_c(\vec{\phi}^{X})w^{\vec{\phi}^{X}}|
	\end{align}
	where the summation notation indicates that there exists $1\leq i\leq m(X)$ so that $	\phi_i^X \disjoint \partial\Lambda_K^{(d)}$. Said differently, $\phi_i^X$ contains a site from $\partial\Lambda_K^{(d)}$.
	
	Recall that $X(\phi)\in\bN_0$ is the number of times the polymer $\phi$ appears in the sequence $\vec{\phi}^X$, and the Ursell function is only nonzero on clusters. Applying Theorem~\ref{thm:malyshev} and using the assumed bound on the weights, 
	\begin{align}
	\sum _{X \neq 0}^{\partial\Lambda_K^{(d)}} \frac{m!}{\prod_{\phi\in\caP}X(\phi)!}\left|\varphi_c(\vec{\phi}^{X})w^{\vec{\phi}^{X}}\right|  
	& \leq \sum _{\substack{X \neq 0 : \\ \vec{\phi}^X \text{cluster}}}^{\partial\Lambda_K^{(d)}} \prod _{\phi\in\caP} e^{ (\kappa - \beta) \ell( \phi) X(\phi) } \nonumber\\
	& \leq \sum _{m=1}^\infty \frac{1}{m!} \sum _{\substack{\mathrm{clusters}\\(\phi_1 ,\ldots, \phi_m)\in\tau(\caP)}}^{\partial\Lambda_K^{(d)}}   \sum _{ \substack{n_1, \ldots, n_m \geq 1}  } \prod _{j = 1}^m e^{ -(\beta - \kappa) n_j\ell(\phi_j)}
	\end{align}
	where, similar to above, the summation notation indicates that $\phi_i \disjoint \partial\Lambda_K^{(d)}$ for some $i$. Summing the geometric series over $n_j$ and observing that $\ell(\phi_j)\geq 1$, one arrives at 
	\begin{align}
	\begin{split}
	\sum _{X \neq 0}^{\partial\Lambda_K^{(d)}} \frac{m!}{\prod_{\phi\in\caP}X(\phi)!}\left|\varphi_c(\vec{\phi}^{X})w^{\vec{\phi}^{X}}\right|  
	& \leq \sum _{m=1}^\infty  \frac{(1-e^{ - (\beta - \kappa)}) ^{-m} }{m!} \sum_{k_1, \ldots, k_m\geq 1}  \sum _{\substack{\mathrm{clusters}\\(\phi_1 ,\ldots, \phi_m): \\ \ell(\phi_j)=k_j}}^{\partial\Lambda_K^{(d)}} \prod _{j=1}^m e^{ - (\beta - \kappa) k_j} .
	\end{split}
	\end{align}
	
	To approximate the sum over clusters, let $1\leq i\leq m$ be such that $\phi_i\disjoint \partial\Lambda_K^{(d)}$. By \eqref{eq:site_bound} there are at most $|\partial\Lambda_K^{(d)}|\mu^{k_i}$ possible $\phi_i$. Since $G_{\vec{\phi}}$ is connected for any cluster $\vec{\phi}=(\phi_1,\ldots,\phi_m)$, it contains a labeled tree on $m$-vertices as a subgraph. Let $\caT_m$ be the collection of all such trees, and set $d_j(T)$ to be the degree of $\phi_j$ in $T\in\caT_m$. Then, interpreting any $T\in\caT_m$ as being rooted at $\phi_i$ and applying \eqref{eq:mu_bound} with $\ell(\phi_j)=k_j$ to determine the possible number of descendants of each vertex, one finds that there are at most
	\[
	\mu^{\sum_{j\neq i}k_j}k_i^{d_i(T)}\prod_{j\neq i}k_j^{d_j(T)-1}
	\]
	clusters $\vec{\phi}$ such that $T\subseteq G_{\vec{\phi}}$. By Cayley's Theorem, there are $m^{m-2}$ labeled trees on $m$ vertices. Moreover, for any degree sequence $(d_1,\ldots,d_m)\in\bN^m$, i.e. $\sum_j (d_j-1) = m-2$, there are $\binom{m-2}{d_1-1,\ldots,d_m-1}$ labeled trees $T\in\caT_m$ with $d_j(T)=d_j$. As such,
	\begin{align} 
	\sum _{\substack{\mathrm{clusters}\\(\phi_1 ,\ldots, \phi_m): \\ \ell(\phi_j)=k_j}}^{\partial\Lambda_K^{(d)}} 1 &
	\leq |\partial\Lambda_K^{(d)}|\mu^{\sum_ik_i}\sum_{i=1}^mk_i\sum_{\substack{(d_1, \ldots d_m)\in\bN^m \\ \sum_jd_j = 2(m-1)}}\binom{m-2}{d_1-1,\ldots, d_m-1} \prod_{j=1}^mk_j^{d_j-1}\nonumber\\
	& =
	 |\partial\Lambda_K^{(d)}|\mu^{ \sum _{j=1}^m k_j}\left(\sum_{j=1}^m k_j\right)^{m-1} \nonumber \\
	 &<  |\partial\Lambda_K^{(d)}| m! \prod_{j=1}^me^{(\ln(\mu)+1)k_j}.\label{eq:sum-seiler}
	\end{align}
	Thus, from (\ref{eq:sum-seiler}) it follows:
	\begin{align*}
	\sum _{m=1}^\infty  \frac{(1-e^{ \kappa - \beta}) ^{-m} }{m!} \sum_{k_1, \ldots, k_m\geq 1}  \sum _{\substack{\mathrm{clusters}\\(\phi_1 ,\ldots, \phi_m): \\ \ell(\phi_j)=k_j}}^{\partial\Lambda_K^{(d)}}  \prod _{j=1}^m e^{ (\kappa-\beta) k_j} 
	& \leq  |\partial\Lambda_K^{(d)}|\sum_{m=1}^{\infty} \sum_{k_1, \ldots, k_m\geq 1}  \prod_{j=1}^m \frac{e^{(\kappa - \beta+\ln(\mu)+1)k_j}}{(1-e^{ \kappa - \beta})^{m}}\\
	& = |\partial\Lambda_K^{(d)}|\sum _{m \geq 1} \left( \frac{ e^{ \kappa-\beta + \ln(\mu)+1}}{\left(1-e^{\kappa-\beta + \ln(\mu)+1}\right)\left(1-e^{\kappa-\beta}\right)} \right)^m.
	\end{align*}
The above geometric series converges if and only if $\beta > \kappa-\ln(f(2e\mu))$, and the ratio is a decreasing function of $\beta$. The proof is complete from inserting $\beta_\epsilon$ to produce an explicit upper bound for the series in terms of
\[
 r_\mu(\epsilon):=\frac{ e^{ \kappa-\beta_\epsilon + \ln(\mu)+1}}{\left(1-e^{\kappa-\beta_\epsilon + \ln(\mu)+1}\right)\left(1-e^{\kappa-\beta_\epsilon}\right)}.
 \] 
\end{proof}

While the rough bounds from Lemma~\ref{lem:number-polymers} are sufficient for convergence of the cluster expansion, tighter bounds for $\ell(\phi)\leq 4$ are necessary to apply Theorem~\ref{thm:seiler} for proving the indistinguishability result for $d\geq 5$. This is the content of our final result. Here, we restrict our attention to the largest set of polymers of interest for the stability result: $\mr{\mathcal{P}}_{N,K}^{(d)}$.

\begin{figure}
	\begin{subfigure}[b]{0.47\textwidth}
		\begin{center}
			\begin{tikzpicture}
				\def\n{.35};
				\newcommand{\hexcoord}[2]
				{[shift=(60:#1),shift=(120:#1),shift=(0:#2),shift=(-60:#2),shift=(0:#2),shift=(60:#2)]}
				\foreach \x in {1,...,6}
				\foreach \y in {2,...,7}{
					\draw[color=black!25]\hexcoord{\x*\n}{\y*\n}
					(0:\n)--(60:\n)--(120:\n)--(180:\n)--(-120:\n)--(-60:\n)--cycle;
					\draw[color=black!25,shift=(-60:\n), shift=(0:\n)]\hexcoord{\x*\n}{\y*\n}
					(0:\n)--(60:\n)--(120:\n)--(180:\n)--(-120:\n)--(-60:\n)--cycle;
				}			
				\foreach \x in {2.5,3,3.5,4,4.5,5}
				{\draw[very thick, color=black]\hexcoord{\x*\n}{\x*\n}
					(60:\n)--(120:\n)--(180:\n);
					\draw[very thick, color=black]\hexcoord{7.5*\n-\x*\n}{2.5*\n+\x*\n}
					(120:\n)--(60:\n)--(0:\n);}
				\draw[very thick, color=black]\hexcoord{2.5*\n}{2.5*\n}	(60:\n)--(120:\n)--(180:\n) node[above,xshift=-1.5mm, yshift=1.5mm]{\scalebox{.9}{$\Lambda_N^{(d)}$}};
				
				\foreach \x in {3,3.5,4,4.5,5}
				{\draw[very thick, color=black, shift=(240:2*\n), shift=(300:2*\n)]\hexcoord{\x*\n}{\x*\n}
					(60:\n)--(120:\n)--(180:\n);
					\draw[very thick, color=black, shift=(180:3*\n), shift=(300:\n), shift=(240:\n)]\hexcoord{7*\n-\x*\n}{3*\n+\x*\n}
					(120:\n)--(60:\n)--(0:\n);}
				\draw[very thick, color=black,  shift=(240:2*\n), shift=(300:2*\n)]\hexcoord{3*\n}{3*\n}	(60:\n)--(120:\n)--(180:\n) node[above,xshift=-3mm]{\scalebox{.9}{$\Lambda_K^{(d)}$}};
				
				\draw[very thick, color=blue]\hexcoord{3.5*\n}{3.5*\n} (180:2*\n) -- (180:\n) -- (120:\n)  node[blue]{$\bullet$} ;
				\draw[very thick, color=green]\hexcoord{3*\n}{3*\n} (180:2*\n) node[green]{$\bullet$} -- (180:\n) -- (120:\n);
				\draw[thick, fill=blue, draw=blue, shift=(120:\n)]\hexcoord{3*\n}{3*\n}circle (2.75pt);
				\draw[thick, fill=violet, draw=green, shift=(120:\n)]\hexcoord{3*\n}{3*\n}circle (2pt) node[violet, above]{\scalebox{.8}{$a$}};
				
				\draw[very thick, blue,yshift=.45mm]\hexcoord{5*\n}{5*\n} (60:\n) -- (120:\n);
				\draw[very thick, red]\hexcoord{5*\n}{5*\n} (60:\n) -- (120:\n)--(180:\n)--(180:2*\n);
				\draw[very thick, green, yshift=-.45mm]\hexcoord{5*\n}{5*\n} (120:\n)-- (60:\n);
				\draw[very thick, green]\hexcoord{5*\n}{5*\n} (60:\n) --(0:\n)--(0:2*\n) node[green]{$\bullet$};
				\draw[thick, fill=blue, draw=blue, shift=(120:\n)]\hexcoord{5*\n}{5*\n}circle (2.75pt);
				\draw[thick, fill=violet, draw=green, shift=(120:\n)]\hexcoord{5*\n}{5*\n}circle (2pt) node[violet, above]{\scalebox{.8}{$b$}};
				\draw[thick, fill=red, draw=blue, shift=(60:\n), yshift=.2mm]\hexcoord{5*\n}{5*\n}circle (2pt);
				\draw[thick, fill=red, draw=red, shift=(120:\n), yshift=.2mm]\hexcoord{4.5*\n}{4.5*\n}circle (2pt);
				
				\draw[very thick, color=blue, xshift=-.2mm, yshift=-.2mm]\hexcoord{3.5*\n}{6.5*\n} (120:2*\n) node[blue, xshift=-.2mm, yshift=.2mm]{$\bullet$} -- (120:\n) -- (60:\n)  -- (0:\n) -- (0:2*\n);
				\draw[very thick, color=red, yshift=.2mm,xshift=.2mm]\hexcoord{3*\n}{7*\n} (120:2*\n) node[red,  yshift=-.2mm,xshift=-.2mm]{$\bullet$}-- (120:\n) -- (60:\n)  -- (0:\n) -- (0:2*\n)node[red,  yshift=-.2mm,xshift=-.2mm]{$\bullet$};
				\draw[thick, fill=blue, draw=blue, shift=(60:\n)]\hexcoord{3*\n}{7*\n}circle (2pt);
				\draw[thick, fill=violet, draw=violet, shift=(120:\n)]\hexcoord{3*\n}{7*\n}circle (2pt) node[above, violet, xshift=.5mm]{\scalebox{.8}{$c$}};
				
				\draw[very thick, color=blue, shift=(240:\n), shift=(300:\n),xshift=-.18mm, yshift=-.18mm]\hexcoord{4*\n}{4*\n} (0:\n) -- (60:\n);
				\draw[very thick, color=blue, shift=(240:\n), shift=(300:\n)]\hexcoord{4*\n}{4*\n} (60:\n) -- (120:\n)  -- (180:\n) -- (240:\n) node[blue]{$\bullet$} ;
				\draw[very thick, color=red, shift=(240:\n), shift=(300:\n),xshift=.18mm, yshift=.18mm]\hexcoord{4.5*\n}{4.5*\n}  (180:\n) 	-- (240:\n);
				\draw[very thick, color=red, shift=(240:\n), shift=(300:\n)]\hexcoord{4.5*\n}{4.5*\n} (0:\n) node[red]{$\bullet$} -- (60:\n) -- (120:\n)  -- (180:\n)  node[violet]{$\bullet$} node[above, violet]{\scalebox{.8}{$d$}}  ;
				\draw[thick, fill=red, draw=blue, shift=(240:2*\n), shift=(300:\n)]\hexcoord{4.5*\n}{4.5*\n} circle (2.5pt) ;

			\end{tikzpicture}
			
		\end{center}
		\caption{Polymers for cases 1 and 2.}
		\label{fig:case1and2_polymers}
	\end{subfigure}\hspace{.5cm}
	\begin{subfigure}[b]{0.47\textwidth}
			\begin{center}
			\begin{tikzpicture}
				\def\n{.35};
				\newcommand{\hexcoord}[2]
				{[shift=(60:#1),shift=(120:#1),shift=(0:#2),shift=(-60:#2),shift=(0:#2),shift=(60:#2)]}
				\foreach \x in {1,...,6}
				\foreach \y in {2,...,7}{
					\draw[color=black!25]\hexcoord{\x*\n}{\y*\n}
					(0:\n)--(60:\n)--(120:\n)--(180:\n)--(-120:\n)--(-60:\n)--cycle;
					\draw[color=black!25,shift=(-60:\n), shift=(0:\n)]\hexcoord{\x*\n}{\y*\n}
					(0:\n)--(60:\n)--(120:\n)--(180:\n)--(-120:\n)--(-60:\n)--cycle;
				}			
				\foreach \x in {2.5,3,3.5,4,4.5,5}
				{\draw[very thick, color=black]\hexcoord{\x*\n}{\x*\n}
					(60:\n)--(120:\n)--(180:\n);
					\draw[very thick, color=black]\hexcoord{7.5*\n-\x*\n}{2.5*\n+\x*\n}
					(120:\n)--(60:\n)--(0:\n);}
				\draw[very thick, color=black]\hexcoord{2.5*\n}{2.5*\n}	(60:\n)--(120:\n)--(180:\n) node[above,xshift=-1.5mm, yshift=1.5mm]{\scalebox{.9}{$\Lambda_N^{(d)}$}};
				
				\foreach \x in {3,3.5,4,4.5,5}
				{\draw[very thick, color=black, shift=(240:2*\n), shift=(300:2*\n)]\hexcoord{\x*\n}{\x*\n}
					(60:\n)--(120:\n)--(180:\n);
					\draw[very thick, color=black, shift=(180:3*\n), shift=(300:\n), shift=(240:\n)]\hexcoord{7*\n-\x*\n}{3*\n+\x*\n}
					(120:\n)--(60:\n)--(0:\n);}
				\draw[very thick, color=black,  shift=(240:2*\n), shift=(300:2*\n)]\hexcoord{3*\n}{3*\n}	(60:\n)--(120:\n)--(180:\n) node[above,xshift=-3mm]{\scalebox{.9}{$\Lambda_K^{(d)}$}};
				
				\draw[very thick, color=red]\hexcoord{3.5*\n}{3.5*\n} (0:\n) -- (60:\n) -- (120:\n)  -- (180:\n) -- (240:\n) ;
				\draw[very thick, color=red, shift=(240:\n), shift=(300:\n)]\hexcoord{3.5*\n}{3.5*\n} (0:\n) -- (60:\n) -- (120:\n)  -- (180:\n) -- (240:\n) ;
				\draw[very thick, color=red, shift=(240:\n), shift=(300:\n)]\hexcoord{4*\n}{4*\n} (0:\n) -- (60:\n) -- (120:\n)  -- (180:\n);
				\draw[thick, fill=violet, draw=red, shift=(120:\n)]\hexcoord{3.5*\n}{3.5*\n}circle (2pt) node[violet, above]{\scalebox{.8}{$a$}};
				\draw[thick, fill=red, draw=red, shift=(240:\n), shift=(300:\n), shift=(240:\n)]\hexcoord{3.5*\n}{3.5*\n} circle (2pt);
				\draw[thick, fill=red, draw=red, shift=(240:\n), shift=(300:\n), shift=(0:\n)]\hexcoord{3.5*\n}{3.5*\n} circle (2pt);
				\draw[thick, fill=red, draw=red, shift=(240:\n), shift=(300:\n), shift=(0:\n)]\hexcoord{4*\n}{4*\n} circle (2pt);
				
				\draw[very thick, color=blue]\hexcoord{4*\n}{6*\n} (120:2*\n) node[blue,yshift=-.1mm] {$\bullet$} -- (120:\n)-- (60:\n) node[blue,yshift=-.1mm] {$\bullet$}-- (120:\n) -- (180:\n) -- (240:\n) -- (240:2*\n) node[blue,yshift=-.1mm] {$\bullet$} ;
				\draw[very thick, color=blue]\hexcoord{4.5*\n}{5.5*\n} (-60:\n) -- (-120:\n)-- (-120:2*\n) node[blue,yshift=-.1mm] {$\bullet$} ;
				\draw[fill=violet, draw=violet, shift=(0:\n)]\hexcoord{4.5*\n}{5.5*\n} circle (2pt) node[violet, above,xshift=1.2mm]{\scalebox{.8}{$b$}};
				\draw[fill=violet, draw=violet, shift=(-60:\n)]\hexcoord{4.5*\n}{5.5*\n} circle (2pt) node[violet, right]{\scalebox{.8}{$c$}};
				
				\draw[very thick, color=red]\hexcoord{3*\n}{7*\n} (0:\n) -- (60:\n) -- (120:\n)  -- (120:2*\n) -- (120:\n) -- (180:\n) -- (240:\n) -- (240:2*\n) -- (240:\n) -- (300:\n)-- (0:\n) -- (0:2*\n) ;
				\draw[thick, fill=red, draw=red, shift=(60:\n)]\hexcoord{3*\n}{7*\n} circle (2pt);
				\draw[thick, fill=red, draw=red, shift=(0:2*\n)]\hexcoord{3*\n}{7*\n} circle (2pt);
				\draw[thick, fill=red, draw=red, shift=(120:2*\n)]\hexcoord{3*\n}{7*\n} circle (2pt);
				\draw[thick, fill=violet, draw=red, shift=(2400:2*\n)]\hexcoord{3*\n}{7*\n} circle (2pt) node[violet, below,xshift=-.75mm]{\scalebox{.8}{$d$}};
			\end{tikzpicture}
			
		\end{center}
		\caption{Polymers for case 3a.}
		\label{fig:case3a_polymers}
	\end{subfigure}
\begin{subfigure}{\textwidth}
		\begin{center}
		\begin{tikzpicture}
			\def\n{.35};
			\newcommand{\hexcoord}[2]
			{[shift=(60:#1),shift=(120:#1),shift=(0:#2),shift=(-60:#2),shift=(0:#2),shift=(60:#2)]}
			\foreach \x in {1,...,6}
			\foreach \y in {1,...,8}{
				\draw[color=black!25]\hexcoord{\x*\n}{\y*\n}
				(0:\n)--(60:\n)--(120:\n)--(180:\n)--(-120:\n)--(-60:\n)--cycle;
				\draw[color=black!25,shift=(-60:\n), shift=(0:\n)]\hexcoord{\x*\n}{\y*\n}
				(0:\n)--(60:\n)--(120:\n)--(180:\n)--(-120:\n)--(-60:\n)--cycle;
			}			
			\foreach \x in {2,2.5,3,3.5,4,4.5,5}
			{\draw[very thick, color=black]\hexcoord{\x*\n}{\x*\n}
				(60:\n)--(120:\n)--(180:\n);
				\draw[very thick, color=black]\hexcoord{7*\n-\x*\n}{3*\n+\x*\n}
				(120:\n)--(60:\n)--(0:\n);}
			\draw[very thick, color=black]\hexcoord{2*\n}{2*\n}(60:\n)--(120:\n)--(180:\n) node[above,xshift=-1.5mm, yshift=1.5mm]{\scalebox{.9}{$\Lambda_N^{(d)}$}};
			
			\foreach \x in {2.5,3,3.5,4,4.5,5}
			{\draw[very thick, color=black, shift=(240:\n), shift=(300:\n)]\hexcoord{\x*\n}{\x*\n}
				(60:\n)--(120:\n)--(180:\n);
				\draw[very thick, color=black, shift=(180:3*\n)]\hexcoord{6.5*\n-\x*\n}{3.5*\n+\x*\n}
				(120:\n)--(60:\n)--(0:\n);}
			\draw[very thick, color=black,  shift=(240:\n), shift=(300:\n)]\hexcoord{2.5*\n}{2.5*\n}	(60:\n)--(120:\n)--(180:\n) node[left]{\scalebox{.9}{$\Lambda_K^{(d)}$}};
			
			\draw[very thick, color=red]\hexcoord{3*\n}{3*\n} (0:\n) node[red]{$\bullet$} -- (60:\n) -- (120:\n)  -- (180:\n) -- (180:2*\n);
			\draw[very thick, color=red]\hexcoord{2.5*\n}{2.5*\n} (120:\n)--(180:\n) node[red]{$\bullet$}; 
			\draw[very thick, color=red]\hexcoord{3.5*\n}{3.5*\n} (180:\n)-- (120:\n) -- (60:\n) -- (60:2*\n) node[red]{$\bullet$};
			\draw[very thick, color=blue, yshift=.4mm, xshift=-.3mm]\hexcoord{3*\n}{3*\n} (240:\n) node[blue, xshift=.3mm, yshift=-.4mm]{$\bullet$} -- (180:\n) -- (120:\n)  -- (60:\n) -- (60:2*\n) node[blue]{$\bullet$};
			\draw[very thick, color=green, yshift=-.4mm, xshift=.3mm]\hexcoord{3.5*\n}{3.5*\n} (0:\n) node[green]{$\bullet$} -- (60:\n) -- (120:\n)  -- (180:\n) -- (180:2*\n) node[green, xshift=.3mm, yshift=-.4mm]{$\bullet$};
			\draw[fill=violet, draw=violet, shift=(180:\n)]\hexcoord{3.5*\n}{3.5*\n}  circle (2.75pt) node[above, xshift=-1.5mm, violet] {$a$};
			
			\draw[very thick, color=green,yshift=.4mm,xshift=-.15mm]\hexcoord{5*\n}{5*\n}(300:\n) node[green, yshift=-.4mm,xshift=.15mm]{$\bullet$}-- (0:\n) --  (60:\n)  -- (120:\n);
			\draw[very thick, color=blue]\hexcoord{5*\n}{5*\n} (240:\n) node[blue]{$\bullet$} -- (180:\n) -- (120:\n)  -- (60:\n) node[blue]{$\bullet$};
			\draw[thick, draw=green, fill=violet, shift=(120:\n)]\hexcoord{5*\n}{5*\n} circle (3pt) node[above,violet]{\scalebox{.8}{$b$}};
			
			\draw[very thick, color=red]\hexcoord{4*\n}{6*\n} (180:\n)-- (120:\n)--(60:\n)--(0:\n)--(0:2*\n);
			\draw[very thick, color=red]\hexcoord{3*\n}{7*\n} (120:2*\n)-- (120:\n)--(60:\n)--(0:\n)--(300:\n)node[red]{$\bullet$};
			\draw[very thick, color=blue,xshift=.3mm,yshift=.4mm]\hexcoord{3.5*\n}{6.5*\n} (120:2*\n) node[blue]{$\bullet$}--(120:\n)--(60:\n)--(0:\n)--(300:\n) node[blue,xshift=-.3mm,yshift=-.4mm]{$\bullet$};
			\draw[very thick, color=green, yshift=-.4mm,xshift=-.3mm]\hexcoord{3.5*\n}{6.5*\n} (180:\n) node [green, yshift=.4mm,xshift=.3mm]{$\bullet$}--(120:\n)--(60:\n);
			\draw[very thick, color=green, yshift=-.4mm,xshift=-.3mm]\hexcoord{3*\n}{7*\n} (120:2*\n)--(120:\n)--(60:\n) node [green]{$\bullet$};
			\draw[thick, draw=red, fill=red, shift=(300:\n)]\hexcoord{4.5*\n}{5.5*\n} circle (2.5pt);
			\draw[thick, draw=red, fill=violet, shift=(60:\n)]\hexcoord{3.5*\n}{6.5*\n} circle (3pt) node[above, violet]{\scalebox{.8}{$c$}};
			
			\draw[very thick, color=blue,yshift=.2mm,xshift=-.15mm]\hexcoord{2.5*\n}{7.5*\n}(60:\n) node[blue]{$\bullet$}-- (0:\n)-- (300:\n);
			\draw[very thick, color=green,yshift=-.2mm,xshift=.15mm]\hexcoord{2.5*\n}{7.5*\n} (0:2*\n) node[green]{$\bullet$}-- (0:\n)-- (300:\n);
			\draw[thick, color=violet, fill=violet, shift=(0:\n)]\hexcoord{2.5*\n}{7.5*\n} circle (2.75pt) node[above, violet,xshift=.2mm]{\scalebox{.8}{$d$}};
			\draw[thick, color=blue, fill=green, shift=(300:\n)]\hexcoord{2.5*\n}{7.5*\n} circle (2.75pt);
			
		\end{tikzpicture}
		
	\end{center}
\caption{Polymers for case 3b.}
\end{subfigure}
\caption{All the possible types of polymers that contribute for cases 1-3 that pass through the vertices $a$-$d$. The contributing endpoints and edges to these polymers are colored. For case 1, odd length polymers must pass through a ``corner'' as illustrated by the polymers passing through vertex $b$. For case 3a, all polymers of length four contribute. For case 3b, the figure shows all length four polymers passing through vertices $a$ and $c$, all length three polymers passing through $b$, and all length 2 polymers passing through $d$.}
\label{fig:small_walks}
\end{figure}

\begin{proposition}\label{prop:mu}
	Let $d\geq 0$, and fix $0\leq \max\{K,1\} < N$. Then for any $\phi\in \mr{\caP}_{N,K}^{(d)}$,
	\begin{equation}\label{counting_bounds}
		n_k(\phi):=\left|\set{\phi'\in \mr{\caP}_{N,K}^{(d)}: \phi \disjoint \phi', \, \ell(\phi')=k}\right| \leq C_k(\ell(\phi)+1)\leq\ell(\phi)288^{k/5}
	\end{equation}
	for all $k\geq 1$ where
	\begin{equation}
		C_k = \sup_{0\leq K < N}\sup_{v\in\Gamma^{(0)}} \left| \set{ \phi \in  \mr{\caP}_{N,K}^{(d)} : \ell(\phi)=k, \deg_\phi (v) > 0} \right|.
	\end{equation}
\end{proposition}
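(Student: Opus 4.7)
The plan is to prove the two inequalities separately. The first, $n_k(\phi) \leq C_k(\ell(\phi)+1)$, is a union bound over spin-$3/2$ vertices: any $\phi' \in \mr{\caP}_{N,K}^{(d)}$ connected to $\phi$ must share a vertex of $\Gamma^{(0)}$ with $\phi$, since any internal spin-$1$ vertex of a self-avoiding polymer already uses both of its incident edges and so cannot be shared with a second polymer. Since $\phi$ contains at most $\ell(\phi) + 1$ vertices of $\Gamma^{(0)}$ (exactly $\ell(\phi)+1$ for a self-avoiding walk and $\ell(\phi)$ for a closed loop), the bound follows by applying the definition of $C_k$ at each such vertex and summing.

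For the second inequality $C_k(\ell(\phi)+1) \leq \ell(\phi) \cdot 288^{k/5}$, write $\mu := 2\sqrt[5]{9}$ so that $\mu^k = 288^{k/5}$. For $k \geq 5$, the coarse bound $C_k \leq 3(k+1)2^{k-2}$ from Lemma~\ref{lem:number-polymers} already suffices: a direct computation shows $3(k+1) 2^{k-1} \leq \mu^k$ for $k \geq 5$, since the ratio $9^{k/5}/\bigl(3(k+1)/2\bigr)$ equals $1$ at $k=5$ and grows for larger $k$, and combining this with $\ell(\phi)+1 \leq 2\ell(\phi)$ yields the claim in this regime.

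The main work lies in the range $k \in \{1,2,3,4\}$, where the bound from Lemma~\ref{lem:number-polymers} is too coarse and admissible short polymers must be enumerated explicitly. Since the hexagonal lattice admits no closed loops of length less than $6$, every polymer of length $k \leq 4$ in $\mr{\caP}_{N,K}^{(d)}$ is a self-avoiding walk whose endpoints lie in $\partial \Lambda_K^{(d)} \cup \partial \Lambda_N^{(d)}$ and whose edges lie in $\caB_N^{(d)} \setminus \caB_K^{(d)}$. The enumeration is organized by endpoint location: Case~1 (both endpoints in $\partial \Lambda_K^{(d)}$), Case~2 (one endpoint in each boundary), and Case~3 (both endpoints in $\partial \Lambda_N^{(d)}$, with subcases 3a and 3b according to whether the fixed vertex $v$ lies in $\partial \Lambda_K^{(d)}$ or in the interior of the annulus $\Lambda_N^{(d)} \setminus \Lambda_K^{(d)}$), as depicted in the figures preceding the proposition.

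The hard part will be executing this case analysis tightly and uniformly in $v$, $K$, and $N$. A priori, at each spin-$3/2$ vertex a walk has two outgoing directions (the predecessor being excluded), giving the coarse count $3 \cdot 2^{k-1}$ for walks of length $k$ through $v$; the boundary-endpoint constraint must be exploited to cut this count below the threshold $\mu^k \cdot \ell(\phi)/(\ell(\phi)+1)$. The key geometric inputs are that the boundary sets $\partial \Lambda_K^{(d)}$ and $\partial \Lambda_N^{(d)}$ are rigidly determined as the outward-pointing corners of the outermost hexagons in the concentric ring structure of $\Lambda_K^{(d)}$ and $\Lambda_N^{(d)}$, that the bipartite structure of $\Gamma^{(0)}$ forces any walk to alternate between sublattices and therefore forbids short cycles, and that a walk starting in the interior of the annulus cannot reach either boundary in too few steps. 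Assembling the resulting per-case counts yields values of $C_k$ for $k \leq 4$ small enough that $C_k(\ell(\phi)+1) \leq \ell(\phi) \mu^k$ holds, completing the proof together with the large-$k$ estimate.
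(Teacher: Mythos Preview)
Your overall strategy matches the paper's exactly: the first inequality is a union bound over the at most $\ell(\phi)+1$ spin-$3/2$ vertices of $\phi$; for $k\ge 5$ the coarse estimate of Lemma~\ref{lem:number-polymers} suffices; and for $k\le 4$ one enumerates the admissible short walks by the location of their endpoints. Your large-$k$ computation is correct and essentially identical to the paper's.

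Two points need attention. First, your reading of the subcases 3a/3b from the figures is off. In the paper's labeling, Case~(1) is ``both endpoints in $\partial\Lambda_N$'', Case~(2) is ``both endpoints in $\partial\Lambda_K$'', and Case~(3) is ``one endpoint in each''; the subcases 3a and 3b are distinguished not by the location of $v$ but by whether $N=K+2$ or $N=K+1$ (these are the only values of $N-K$ for which a walk of length $\le 4$ can span both boundaries, since $D_0(\partial\Lambda_N,\partial\Lambda_K)=2(N-K)$). Your proposed split of Case~3 by whether $v\in\partial\Lambda_K$ versus $v$ in the annulus interior does not align with the geometry: if both endpoints lie in $\partial\Lambda_N$ and $\ell(\phi')\le 4$, the walk is confined to the outermost loop $\delta_N$ and cannot visit $\partial\Lambda_K$ at all unless $N-K\le 2$.

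Second, and more substantively, you stop before doing the actual enumeration, which is where the content lies. The paper carries out the count explicitly and obtains $C_1\le 1$, $C_2\le 4$, $C_3\le 5$, $C_4\le 10$; these numerical values are what one then checks against $\ell(\phi)\mu^k/(\ell(\phi)+1)$, i.e.\ against $\mu^k/2$ for the worst case $\ell(\phi)=1$. Since $\mu^k/2\approx 1.55,\ 4.81,\ 14.9,\ 46.4$ for $k=1,2,3,4$, the bounds $1,4,5,10$ are comfortably sufficient. Your proposal would be complete once you execute this count with the corrected case split.
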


\begin{proof}
	As in Lemma~\ref{lem:number-polymers}, it is sufficient to consider $d=0$. The first inequality in \eqref{counting_bounds} is trivial since any polymer has at most $\ell(\phi)+1$ vertices. The bound for $k\geq 5$ holds by Lemma~\ref{lem:number-polymers}, by observing that
	\[
	C_k(\ell(\phi)+1)\leq 3\left(\ell(\phi)+1\right)(k+1)2^{k-2} \leq \ell(\phi)\left( 2^{1-1/k}3^{1/k}(k+1)^{1/k} \right)^k \leq \ell(\phi) 288^{k/5}.
	\]
	where the final inequality holds since the expression inside the parentheses of the middle term is decreasing in $k$.
	
	The result is then a consequence of bounding $C_k$ for $k<5$. Only walks $\phi\in\mr{\caS}_{N,K}^{(0)}$ need to be considered as any loop has length $\ell(\phi)\geq 6$. The desired bound is obtained by considering separately the number of walks that intersect an arbitrary degree 3 vertex $v \in\Lambda^{(0)}_N \setminus \mr{\Lambda}^{(0)}_K$ for which:
	\begin{enumerate}
		\item both endpoints are in $\partial \Lambda^{(0)}_N$,
		\item both endpoints are in $\partial \Lambda^{(0)}_K$, and
		\item one endpoint is in $\partial \Lambda^{(0)}_K$ and the other is in $\partial \Lambda^{(0)}_N$.
	\end{enumerate}
Figure~\ref{fig:small_walks} illustrates the counting for each of the cases below. In each case, denote by $\delta_N$ the closed loop encompassing the border of $\Lambda_N^{(0)}$, i.e. the only closed loop in $\Lambda_N^{(0)}\setminus \Lambda_{N-1}^{(0)}$, and assume that $\phi \in \mr{\caS}_{N,K}^{(0)}$ with $\ell(\phi) = k \leq 4$. 
	
	Case (1): If both endpoints are in $\partial \Lambda^{(0)}_N$, then $\phi \subset \delta_N$ as any walk with an edge outside of $\delta_N$ has length at least 5. The distance between neighboring vertices in $\partial \Lambda^{(0)}_N$ is always two except for the six pairs of vertices at the ``corner'' edges of $\Lambda^{(0)}_N$, which have distance one. As a consequence, there are at most $k$ walks passing through a given degree 3 vertex $v$ since $N\geq 2$.
	
	Case (2): Recall that $\phi$ contains no edges from $\Lambda_K^{(0)}$ by definition of $\mr{\caP}_{N,K}^{(0)}$. If both endpoints of $\phi$ are in $\partial \Lambda^{(0)}_K$, it must be that $\ell(\phi)= 4$ and $\phi$ connects two neighboring vertices in $\partial \Lambda^{(0)}_K$. In this case, for any given vertex $v$ there are at most two such possible walks passing through $v$.
	
	Case (3): Suppose one endpoint of $\phi$ is in $\partial \Lambda^{(0)}_K$ and one is in $\partial \Lambda^{(0)}_N$. As the graph distance $D_0$ of $\Gamma^{(0)}$ satisfies
	\[D_0(\partial\Lambda_N^{(0)},\partial\Lambda_K^{(0)})=2(N-K) \]
	it must be that $N=K+1$ or $N=K+2$ (otherwise there are no such $\phi$).
	
	If $N=K+2$, then $\ell(\phi)=4$ and every vertex $v\in\Lambda_N^{(0)}\setminus \mr{\Lambda}_K^{(0)}$ satisfies
	\[
	D_0(v,\partial\Lambda_N^{(0)}) = 4 - D_0(v,\partial\Lambda_K^{(0)}), \qquad 0 \leq D_0(v,\partial\Lambda_K^{(0)}) \leq 4.
	\]
	As every walk through $v$ is the concatenation of a walk from $\partial\Lambda_N^{(0)}$ to $v$ and a walk from $v$ to $\partial\Lambda_K^{(0)}$, considering the five cases separately and counting the number of boundary vertices that are the appropriate distance to $v$ shows that there are at most four such walks passing through $v$.
	
	For the case $N=K+1$, call any edge with one vertex from $\partial \Lambda^{(0)}_K$ and one vertex from $\delta_N$ a \emph{bridge}. Then, any walk $\phi$ with length at most $4$ contains exactly one bridge, plus $\ell(\phi)-1$ edges in $\delta_N$. Therefore, if $k=1$ there are no such walks. If $k=2$ or $3$, there are at most two such walks passing through a fixed vertex, while if $k=4$ there are four: if $v$ is in $\partial \Lambda^{(0)}_N$, it can reach up to 4 bridges and there is exactly one $\phi$ that contains $v$ and each such bridge. If $v\not \in \partial \Lambda^{(0)}_N$ then $v$ belongs to a bridge, which belongs to at most two different walks of length $4$. Moreover there are at most two additional walks that contain $v$ and a neighboring bridge.
	
	Summing the contributions from all three cases produces an upper bound on $C_k$ of the form
	\[
	C_k \le \begin{cases}
		1 & \text{$k=1$} \\
		4 & \text{$k=2$}  \\
		5 & \text{$k=3$} \\
		10 & \text{$k=4$}
	\end{cases}
	\]
which is sufficient to verify that $n_k(\phi) \leq C_k(\ell(\phi)+1) < \ell(\phi)288^{k/5}.$
\end{proof}

\bibliographystyle{plain}

\vfill

\noindent Angelo Lucia\\
Departamento de Análisis y Matemática Aplicada,\\
Universidad Complutense de Madrid, 28040 Madrid, Spain, and\\
Instituto de Ciencias Matemáticas, 28049 Madrid, Spain\\
\verb+anglucia@ucm.es+\\

\noindent Alvin Moon\\
Centre for the Mathematics of Quantum Theory\\
Department of Mathematical Sciences, University of Copenhagen. \\
Universitetsparken 5, 2100 K{\o}benhavn, Denmark\\
\verb+alvin.s.moon@gmail.com +\\

\noindent Amanda Young\\
Department of Mathematics\\
University of Illinois Urbana Champaign\\
Urbana, IL, USA\\
\verb+ayoung86@illinois.edu+\\

\end{document}